\pgfplotsset{compat=1.18}
\newcommand{\del}{\bm{\nabla}}
\newlength{\imagewidth}
\begin{document}

\title{Representing Flow Fields with Divergence-Free Kernels for Reconstruction}

\author{Xingyu Ni}
\email{nixy@pku.edu.cn}
\orcid{0000-0003-1127-2848}
\affiliation{
\institution{School of Computer Science, Peking University}
\city{Beijing}
\country{China}
}

\author{Jingrui Xing}
\email{xjr01@hotmail.com}
\orcid{0000-0001-7219-9969}
\affiliation{
\institution{School of Intelligence Science and Technology, Peking University}
\city{Beijing}
\country{China}
}

\author{Xingqiao Li}
\email{lixingqiao@pku.edu.cn}
\orcid{0000-0002-8131-6140}
\affiliation{
\institution{School of Intelligence Science and Technology, Peking University}
\city{Beijing}
\country{China}
}

\author{Bin Wang}
\email{binwangbuaa@gmail.com}
\orcid{0000-0001-9496-772X}
\affiliation{
\institution{Independent}
\city{Beijing}
\country{China}
}
\authornote{corresponding authors}

\author{Baoquan Chen}
\email{baoquan@pku.edu.cn}
\orcid{0000-0003-4702-036X}
\affiliation{
\institution{State Key Laboratory of General Artificial Intelligence, Peking University}
\city{Beijing}
\country{China}
}
\authornotemark[1]

\begin{abstract}
%Reconstructing continuous flow fields from sparse or indirect data remains a challenge due to excessive smoothness, heterogeneous architectures, and the computational burden of physics-informed losses that result from implicit neural representations (INRs).
Accurately reconstructing continuous flow fields from sparse or indirect measurements remains an open challenge, as existing techniques often suffer from oversmoothing artifacts, reliance on heterogeneous architectures, and the computational burden of enforcing physics-informed losses in implicit neural representations (INRs).
In this paper, we introduce a novel flow field reconstruction framework based on divergence-free kernels (DFKs), which inherently enforce incompressibility while capturing fine structures without relying on hierarchical or heterogeneous representations.
%eliminating the need for hierarchical or heterogeneous representations while inherently enforcing incompressibility.
Through qualitative analysis and quantitative ablation studies, we identify the matrix-valued radial basis functions derived from Wendland's $\mathcal{C}^4$ polynomial (DFKs-Wen4) as the optimal form of analytically divergence-free approximation for velocity fields, owing to their favorable numerical properties,
%DFKs due to their favorable numerical properties, 
including compact support, positive definiteness, and second-order differentiablility.
Experiments across various reconstruction tasks, spanning data compression, inpainting, super-resolution, and time-continuous flow inference, has demonstrated that DFKs-Wen4 outperform INRs and other divergence-free representations in both reconstruction accuracy and computational efficiency while requiring the fewest trainable parameters.
% Flow field learning has long faced challenges in strictly adhering to the physical properties of fluids, such as incompressibility, during data processing. In this paper, we introduce a novel physics-embedded framework to address this. Our framework represents the velocity field using matrix-valued, divergence-free kernels---referred to as FlowPEK---and performs learning directly on the manifold spanned by these kernels. FlowPEK's favorable numerical properties---compact support, positive definiteness, and divergence-freeness---make our approach inherently enforce the incompressibility condition during optimization, while preserving both spatial and functional continuity. 
% Additionally, FlowPEK's simple, intuitive parameter space facilitates efficient optimization and scalable handling of large scenarios.
% Extensive experiments show that FlowPEK outperforms existing methods in accuracy and convergence across various flow field learning tasks, including fitting, projection, super-resolution, and reconstruction.
\end{abstract}

%
% The code below should be generated by the tool at
% http://dl.acm.org/ccs.cfm
% Please copy and paste the code instead of the example below.
%
\begin{CCSXML}
<ccs2012>
       <concept_id>10010147.10010371.10010396.10010400</concept_id>
       <concept_desc>Computing methodologies~Point-based models</concept_desc>
       <concept_significance>500</concept_significance>
       </concept>
   <concept>
       <concept_id>10010147.10010371.10010352.10010379</concept_id>
       <concept_desc>Computing methodologies~Physical simulation</concept_desc>
       <concept_significance>500</concept_significance>
       </concept>
</ccs2012>
\end{CCSXML}

\ccsdesc[500]{Computing methodologies~Point-based models}
\ccsdesc[500]{Computing methodologies~Physical simulation}

\keywords{incompressible flows, divergence-free kernels, implicit neural representations, fluid reconstruction}

\begin{teaserfigure}
  \centering
  \newcommand{\formattedgraphics}[2]{\begin{overpic}[height=.14\linewidth,angle=-90]{#1}\put(1.6,95){\sffamily\scriptsize #2}\end{overpic}}
  \formattedgraphics{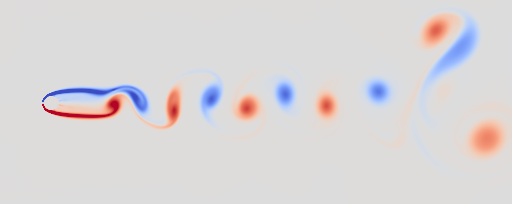}{Ground Truth}%
  \hfill
  \formattedgraphics{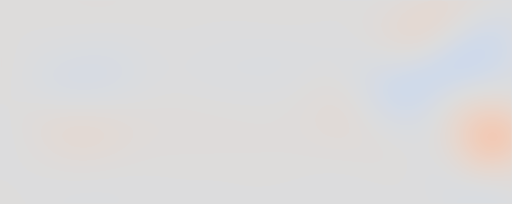}{SIREN}%
  \hfill
  \formattedgraphics{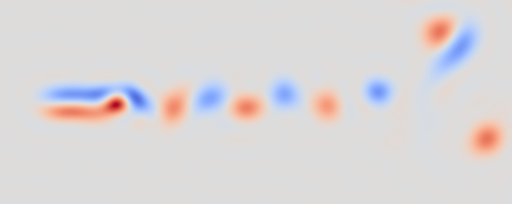}{Curl SIREN}%
  \hfill
  \formattedgraphics{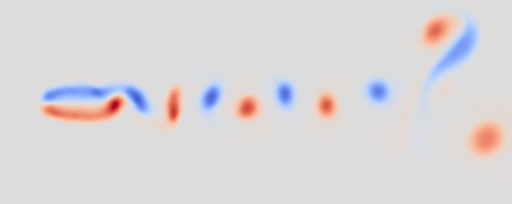}{Regular RBF}%
  \hfill
  \formattedgraphics{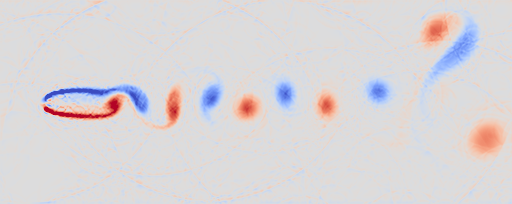}{DFK-Poly6}%
  \hfill
  \formattedgraphics{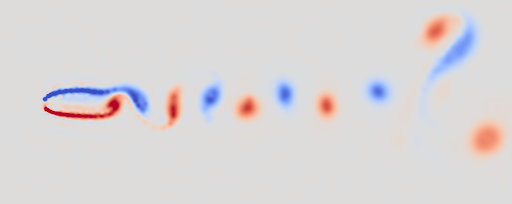}{Curl Kernel}%
  \hfill
  \formattedgraphics{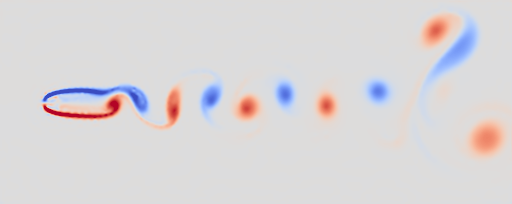}{\bfseries DFK-Wen4}%
  \\
  \vspace{-1em}
  \caption{Fitting experiments of the K\'{a}rm\'{a}n vortex street, with the resulting vorticity fields illustrated.
  We can optimize divergence-free kernels to compressly store the incompressible flow field data.
  The PSNR/SSIM values for each method are as follows:
  SIREN: 22.46/0.897; Curl SIREN: 28.17/0.936; Regular RBF: 30.90/0.967; DFK-Poly6: 32.80/0.824; Curl Kernel: 36.56/0.979; \textbf{DFK-Wen4:} \textbf{38.78}/\textbf{0.990}.}
  \label{fig:karman}
  \Description{Karman experiments}
\end{teaserfigure}

\maketitle

\section{Introduction}

Reconstructing high-fidelity continuous flow fields from sparse, incomplete, or indirect data is crucial across various scientific and engineering domains, including meteorology, biomedicine, hydraulic engineering, automotive manufacturing, and visual effects.
While traditional grid- and particle-based representations have been highly successful in forward simulations of fluid dynamics \cite{Bridson2015}, 
their resolution-dependent nature and inherently discrete signals pose significant challenges for optimization, making them less effective for inverse reconstruction tasks that require optimizing a separate model for each flow field instance.
%require per-instance optimization (training) for each flow field.
%difficult to optimize.
This limitation has driven recent research toward implicit neural representations (INRs), which offer a resolution-agnostic framework for modeling continuous and differentiable physical fields.

However, neural network-based representations also face two major challenges. First, despite advancements such as positional encoding \cite{Mildenhall2021} and periodic activation functions \cite{Sitzmann2020}, INRs tend to oversmooth physical fields, leading to the loss of fine-scale details. In the context of flow field inference from multi-view RGB videos, the state-of-the-art approach \cite{Yu2023} leverages multi-resolution hash encoding \cite{Muller2022} to enhance network expressiveness and introduce vortex particles to recover missing fluid structures. This highly heterogeneous representation compromises robustness and increases the complexity of both implementation and further improvements. Second, physics-informed neural networks (PINNs) \cite{Raissi2019,Raissi2020}, which incorporate physical constraints as loss terms, suffer from optimization difficulties due to their heavy reliance on balancing different partial differential equation (PDE) losses. Striking an appropriate trade-off between fitting observed data and enforcing physical laws is nontrivial \cite{Wang2024,Chu2022,Gao2021,Wang2020}, often resulting in suboptimal or even infeasible solutions. Moreover, the need to compute high-order derivatives within the neural network significantly increases computational costs, as it leads to prohibitively large computational graphs,  limiting the applicability of these methods in time-sensitive scenarios.

In this paper, we propose a novel scheme for reconstructing flow fields based on divergence-free kernels (DFKs), which obviates the need for heterogeneous or hierarchical representations and eliminates incompressibility-related penalty terms in the optimization process. Although divergence-free kernels have been studied in numerical analysis and interpolation, their application to large-scale flow field reconstruction---especially in a optimization-based setting---has not been thoroughly explored.
At the core of this framework lies a set of matrix-valued radial basis functions (RBFs) \cite{Narcowich1994}, derived by applying differential operators to Wendland’s $\mathcal{C}^4$ polynomial \cite{Wendland1995}. These kernels, referred to as DFKs-Wen4 for brevity, serve as the numerical basis for flow field representation. The function space spanned by DFKs-Wen4 inherently satisfies the continuity equation for incompressible flows, rigorously enforcing the divergence-free property of velocity fields by construction. Additionally, DFKs-Wen4 possess critical properties such as compact support, positive definiteness, and second-order differentiability, while exhibiting strong alignment with fundamental solutions for fluid flows around obstacles. 
Moreover, unlike INRs, our kernel-based representataion offers greater modeling flexibility, as the position, radius, and weight of each kernel can be explicitly optimized to better resolve multiscale and complex fluid structures.
%Empirical results demonstrate that these properties enable optimizers to efficiently adjust kernel positions, weights, and radii, accurately resolving multiscale and complex fluid structures. 
Comparative evaluations against alternative kernel-based methods and implicit neural representations (INRs) reveal that DFKs-Wen4 achieve superior performance across diverse objective functions and varying levels of data sparsity, all while maintaining a minimal number of trainable parameters. These findings underscore the potential of DFKs-Wen4 as a compelling alternative to neural network-based approaches for fluid reconstruction tasks, offering both theoretical rigor and computational efficiency.

Our experiments focus on the following flow field reconstruction tasks: (1) fitting dense velocity data to reduce memory consumption for flow field storage; (2) fitting dense but non-divergence-free data to perform pressure projection via the Helmholtz decomposition inherent in the representation; (3) fitting dense velocity data with missing regions, leveraging the representation’s generalization ability for inpainting; (4) fitting sparse velocity data, utilizing the representation’s generalization capability for super-resolution; and (5) inferring a time-continuous velocity field from dynamic, dense passive scalar data based on the advection equation.
It is important to note that our goal is not to achieve state-of-the-art performance in any specific application but rather to systematically compare different representations in terms of their expressiveness and optimization efficiency under controlled conditions. 
%To isolate the impact of representation quality \bin{on reconstruction results}, we avoid introducing confounding factors from auxiliary techniques in each case \bin{for both our and comparison methods}. Instead, we optimize the parameters of the representations directly \xy{based on the observational loss}, using only the fundamental continuity equation of fluid dynamics (i.e., the incompressibility condition) as the \xy{physical prior}.
To isolate the impact of representation quality on reconstruction results, we avoid introducing confounding factors from auxiliary techniques in each case for both our method and the comparison methods. Instead, we optimize the parameters of the representations to fit the observed data with the continuity equation of fluid dynamics (i.e., the incompressibility condition) as the primary physical prior, which is either enforced as a soft constraint or directly embedded in the search space.

The technical contributions are summarized as follows:
\begin{itemize}
 \item A systematic practical framework for applying DFKs-Wen4 to flow field reconstruction,
 \item Qualitative analysis and quantitative ablation studies to identify the optimal form of the divergence-free kernels for fluid reconstruction tasks, and
 \item Comprehensive benchmarking against state-of-the-art INRs across diverse application scenarios.
\end{itemize}

\section{Related Work}

\paragraph{Kernel-based modeling}
The use of kernel functions to represent flow fields in forward fluid simulation has a long-standing history. Compared to traditional approaches such as finite element methods (FEMs) and finite difference methods (FDMs), kernel-based modeling offers simpler formulations and easier implementation, making it particularly well-suited for simulating fluids---materials characterized by highly dynamic topologies and intricate local details. Specifically, kernel functions have been employed to smooth the properties of neighboring particles (e.g., velocity, density, and pressure) \cite{Muller2003,Yu2013,Bender2015}, interpolate fluid quantities (such as velocity and density) onto grids \cite{Foster2001,Canabal2016,Chang2022}, and facilitate smooth interactions between particles and grid-based solvers \cite{Zhu2005,Jiang2015,Hu2018}. In these studies, kernels are typically designed to satisfy desirable mathematical properties, such as smoothness, compact support, and positive definiteness.

In recent years, kernel-based representations have also gained significant attention in the context of physical field reconstruction.
% owing to their superior flexibility and performance compared to neural network-based methods.
Particularly, for the reconstruction of 3D radiance fields, kernel-based methods---such as those used in Point-NeRF \cite{Xu2022} and 3D Gaussian Splatting (3DGS) \cite{Kerbl2023}---have demonstrated advantages over vanilla Neural Radiance Fields (NeRFs) \cite{Mildenhall2021}, offering improved optimization efficiency and faster inference speeds.
Inspired by these advancements, we explore the integration of kernel-based modeling into the reconstruction of flow fields, with the aim of achieving similar superior flexibility and performance compared to neural network-based methods (i.e., INRs).

% Kernel-based methods use kernel functions to represent physical quantities and interactions, offering simpler formulations and easier implementation than traditional methods like finite element methods (FEM) or finite difference methods (FDM), while maintaining high accuracy.
% These methods are particularly well-suited for simulating complex fluid dynamics, especially in capturing localized phenomena.
% Specifically in this area, kernels are used to smooth out the properties (e.g., velocities, density, pressure) of neighboring particles \cite{Muller2003,Yu2013,Bender2015}, interpolate fluid quantities (like velocity and density) to the grid \cite{Foster2001,Canabal2016,Chang2022}, and ensuring smooth interactions between particles and the grid-based solver \cite{Zhu2005,Jiang2015,Hu2018}. In these studies, kernels are commonly chosen to satisfy specific properties (e.g., smoothness, compact support, positive definiteness, etc.). 
% Beyond fluid simulation, kernel-based methods have also been applied to 3D radiance fields \cite{Kerbl2023,Xu2022}, offering greater flexibility and performance compared to neural network representations.

\paragraph{Divergence-free representations}
As the continuity condition for incompressible flows, the divergence-free constraint on velocity fields plays a crucial role in their modeling.
In the field of computer graphics, recent works have pioneered fluid simulations that directly utilize divergence-free representations rather than relying on pressure projection or other post-processing techniques.
Some methods achieve this by maintaining a vector potential on grids and computing its curl to obtain a divergence-free flow field \cite{Chang2022,Lyu2024}, while others enforce divergence-free interpolation schemes directly on grids \cite{Nabizadeh2024}.
% Beyond grid-based methods, kernel-based approaches can be traced back to \citet{Hong2008}, which employed moving least squares (MLS) kernels to interpolate fluid velocity fields, preserving divergence-free properties.
In the context of data-driven methods, \citet{Kim2019} encoded fluid simulation results---expressed using aforementioned velocity potentials---into convolutional neural networks (CNNs), while \citet{Richter2024} proposed neural network parameterizations that inherently satisfy the divergence-free condition in arbitrary dimensions, using flow field modeling as a key application scenario.

The divergence-free kernels discussed in this paper originate from research on matrix-valued RBF interpolation \cite{Narcowich1994}, with \citet{Lowitzsch2005} presenting a formulation closest to our adopted DFKs-Wen4.
Over the years, various DFKs have been employed to represent flow fields \cite{Wendland2009, Li2020, Fuselier2016, Skrinjar2009}, electromagnetic fields \cite{McNally2011}, and elastic potential fields \cite{HiOInterp22}. The idea of using DFKs for flow field reconstruction is also evident in the works of \citet{Macedo2010} and \citet{Zhou2019}, where support vector regression and RBF interpolation were respectively used for flow field fitting.
%However, the integration of such representations with deep learning-originated optimization techniques (see \S\ref{sec:implementation}) remains largely unexplored.
However, the integration of DFK representations with advanced optimizers from the deep learning field (e.g., Adam \cite{Kingma2017}, see \S\ref{sec:implementation}) and  cutting-edge visual reconstruction techniques (e.g., NeRF \cite{Mildenhall2021}, see \S\ref{sec:inference-videos}) for large-scale, even passive field-based, flow field reconstruction remains largely unexplored.

% Beyond PINNs, some approaches incorporate physical constraints directly into the architecture of deep learning models, such as enforcing convexity \cite{Amos2017a} or Lipschitz continuity \cite{Miyato2018}. More complex formulations involve solving numerical problems within the network, such as incorporating solutions to convex optimization problems \cite{Amos2017b}, fixed-point iterations \cite{Bai2019}, or ordinary differential equations \cite{Chen2018}. \citet{Richter2024} further explore neural network parameterizations that inherently satisfy the continuity equation, a fundamental conservation law.

% Our approach, based on kernel-based physics-embedded representations, is similar to divergence-free constructions commonly used in radial basis function (RBF) interpolation. Such formulations have been applied to flow field modeling \cite{Wendland2009, Li2020, Fuselier2016}, electromagnetic fields \cite{McNally2011}, and elastic potential fields \cite{HiOInterp22}. However, integrating these divergence-free representations with data-driven machine learning techniques remains largely unexplored, except for \citet{Macedo2010}, who employed support vector regression for flow field fitting.

\paragraph{Flow field reconstruction}
Here we briefly review previous optimization-based methods for flow field reconstruction tasks discussed in this paper.

For divergence-free projection, \citet{Chen2023} introduced a neural solution to transient differential equations using implicit neural spatial representation (INSR) \cite{Xie2022}, though it faces efficiency challenges due to high-order derivatives.

For super-resolution (SR), \citet{Fukami2019} applied data-driven techniques to reconstruct low-resolution flow images for a 2D cylinder wake, and similar methods using CNNs and GANs have enhanced turbulence, plumes, and channel flows \cite{Deng2019, Xie2018, Werhahn2019, Liu2020}. 
Since 2020, PINNs \cite{Raissi2020} have been integrated to incorporate underlying physical laws, improving reliability and reducing reliance on high-resolution (HR) data. For example, \citet{Wang2020} utilized a physics-informed SR technique to reconstruct HR images in an advection-diffusion model of atmospheric pollution plumes, while \citet{Gao2021} developed a physics-constrained CNN for SR of vascular flow without labeled data.

For dynamic flow field inference, traditional approaches rely on specialized hardware \cite{Atcheson2008, Ji2013} or particle imaging velocimetry (PIV) \cite{Xiong2017}, which tracks passive markers in the flow. Tomographic methods have also been explored \cite{Gregson2014}. More recently, RGB-video-based techniques have reduced dependence on specialized setups. ScalarFlow \cite{Eckert2019} introduced long-term temporal physics constraints by optimizing residuals between reconstructed and simulated density and velocity, while \citet{Franz2021} used differentiable rendering for end-to-end optimization.
\citet{Deng2023} proposed vortex particles to predict 2D fluid motion in videos. Emerging methods combining PINNs and NeRFs, such as PINF \cite{Chu2022}, HyFluid \cite{Yu2023}, and PICT \cite{Wang2024}, enforce physics constraints via soft regularization.

Furthermore, 
optimization-based methods for incompressible flow editing and inpainting formulate interpolation as an energy minimization problem, enforcing incompressibility constraints \cite{Bhatacharya2012, Nielsen2011, Sato2018, Ozdemir2024}, % However, to the best of our knowledge, no machine learning-based solutions have been proposed for this task.
in which \citet{Schweri2021} utilized a physics-awared neural network to inpaint missing flow data from satellite observations.

\begin{figure}[t]
  \centering%
  \newcommand{\formattedgraphics}[2]{\begin{overpic}[width=.138\linewidth,trim=6cm 6cm 6cm 6cm,clip]{#1}\put(3,89){\sffamily\scriptsize\color{white} #2}\end{overpic}}%
  \formattedgraphics{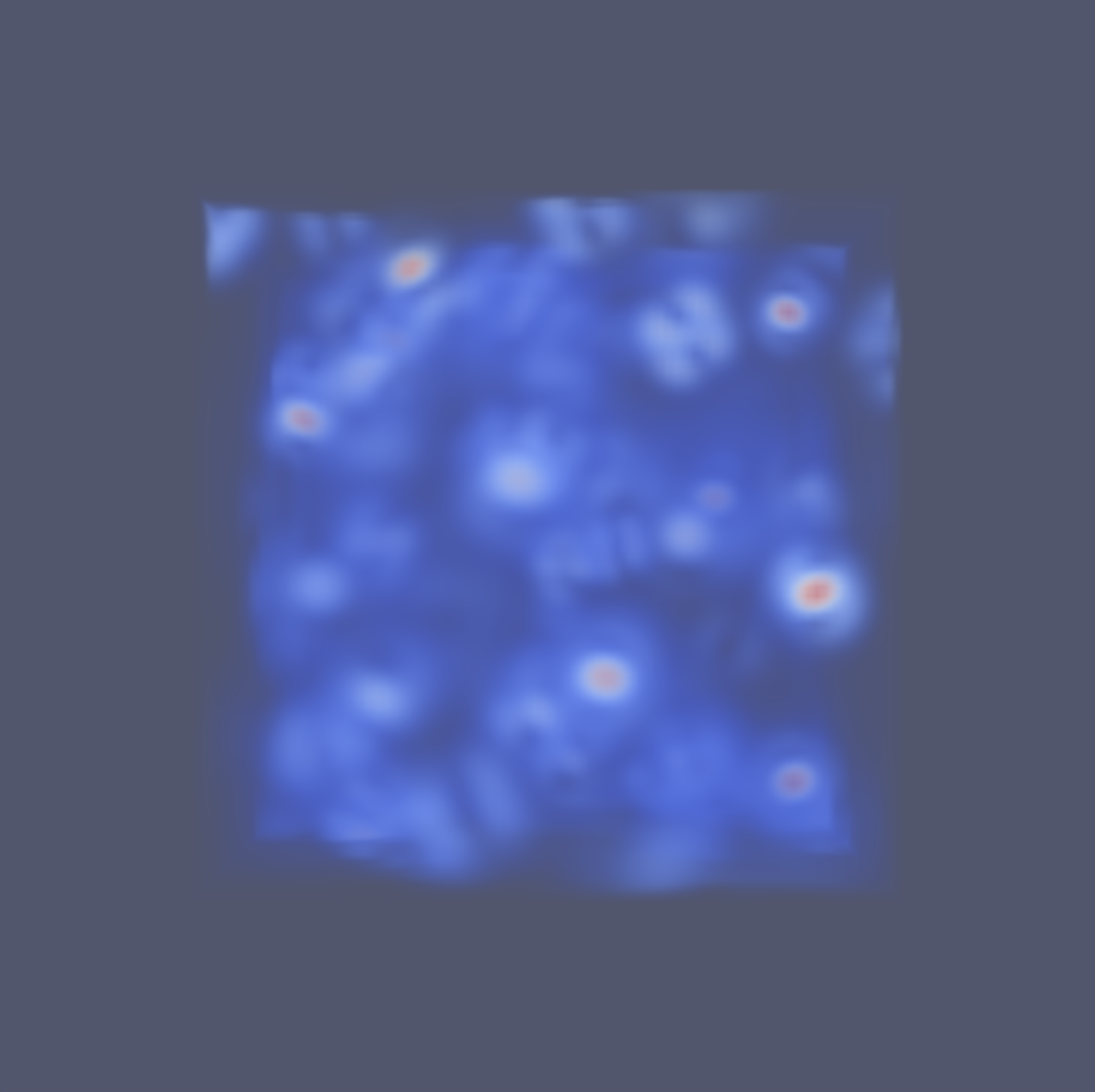}{Ground Truth}%
  \hfill%
  \formattedgraphics{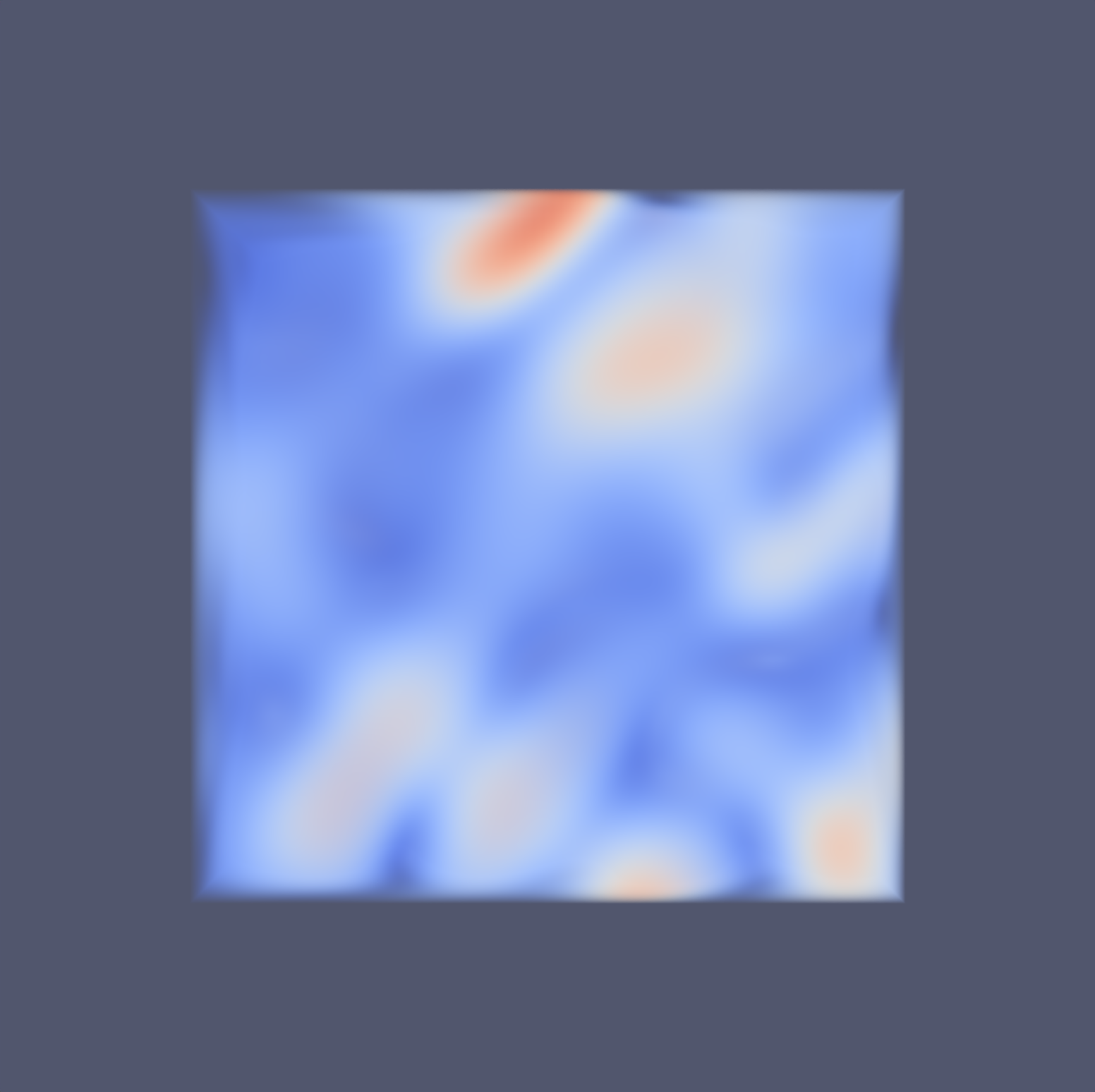}{SIREN}%
  \hfill%
  \formattedgraphics{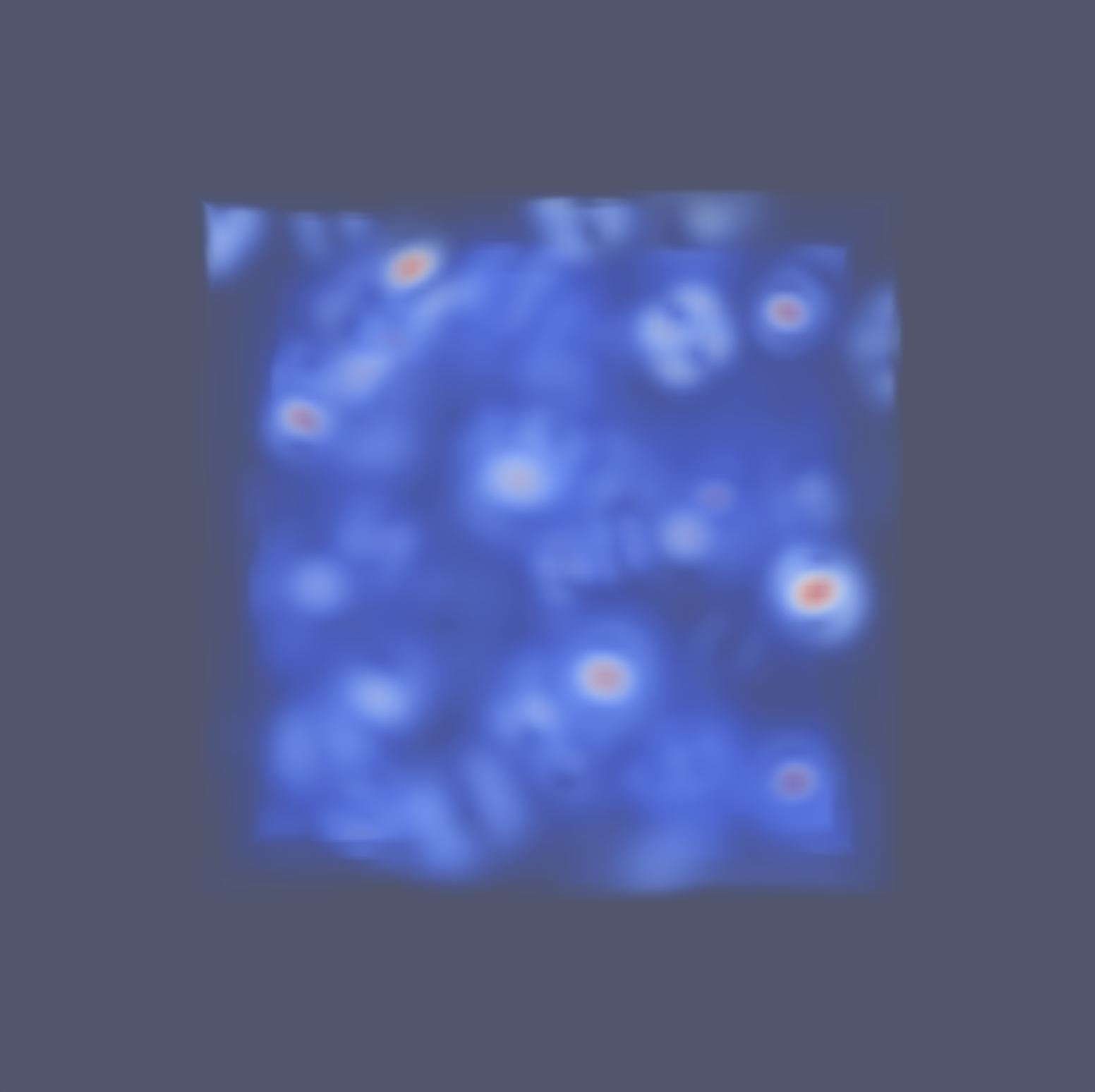}{Curl SIREN}%
  \hfill%
  \formattedgraphics{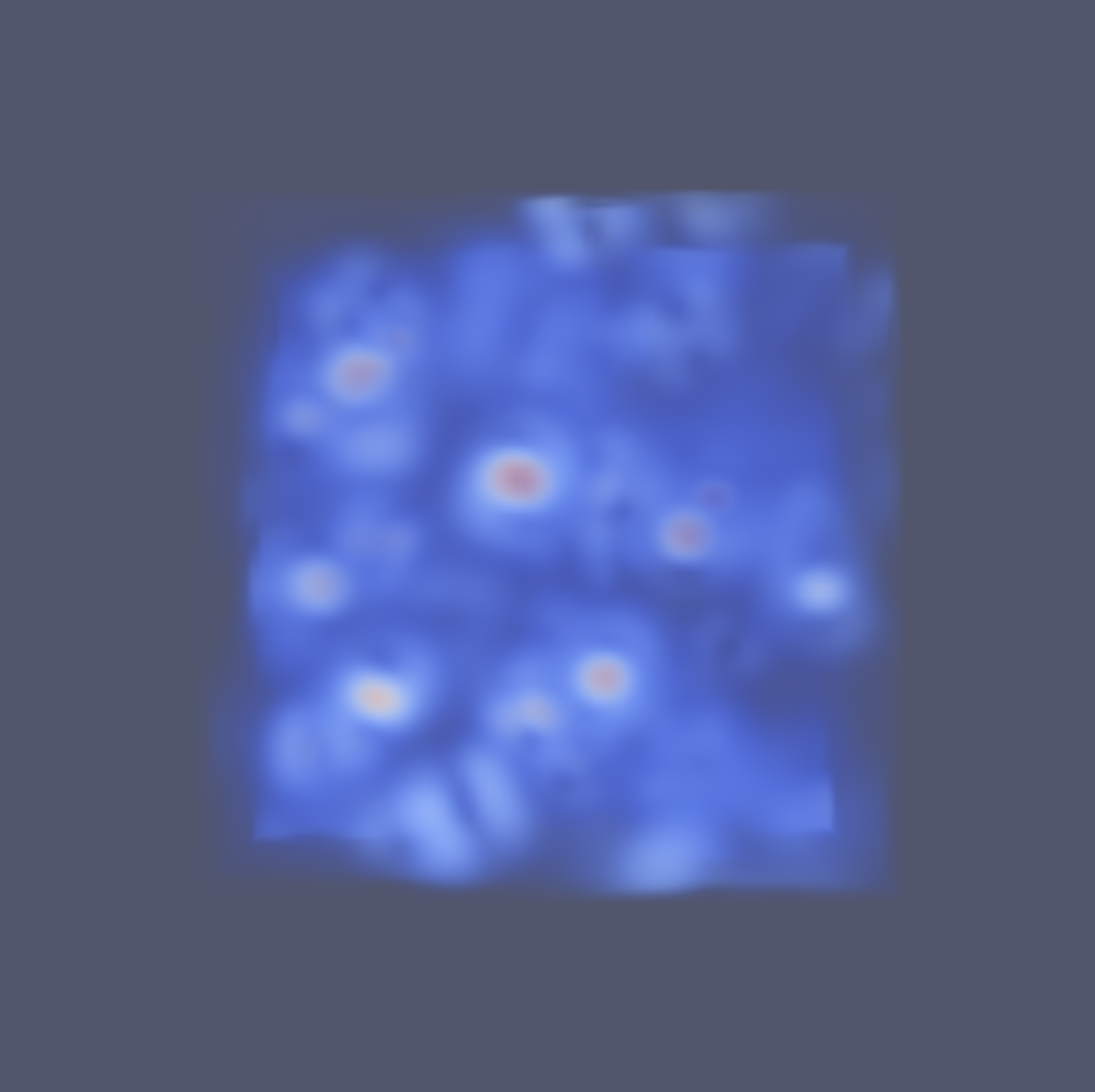}{Regular RBF}%
  \hfill%
  \formattedgraphics{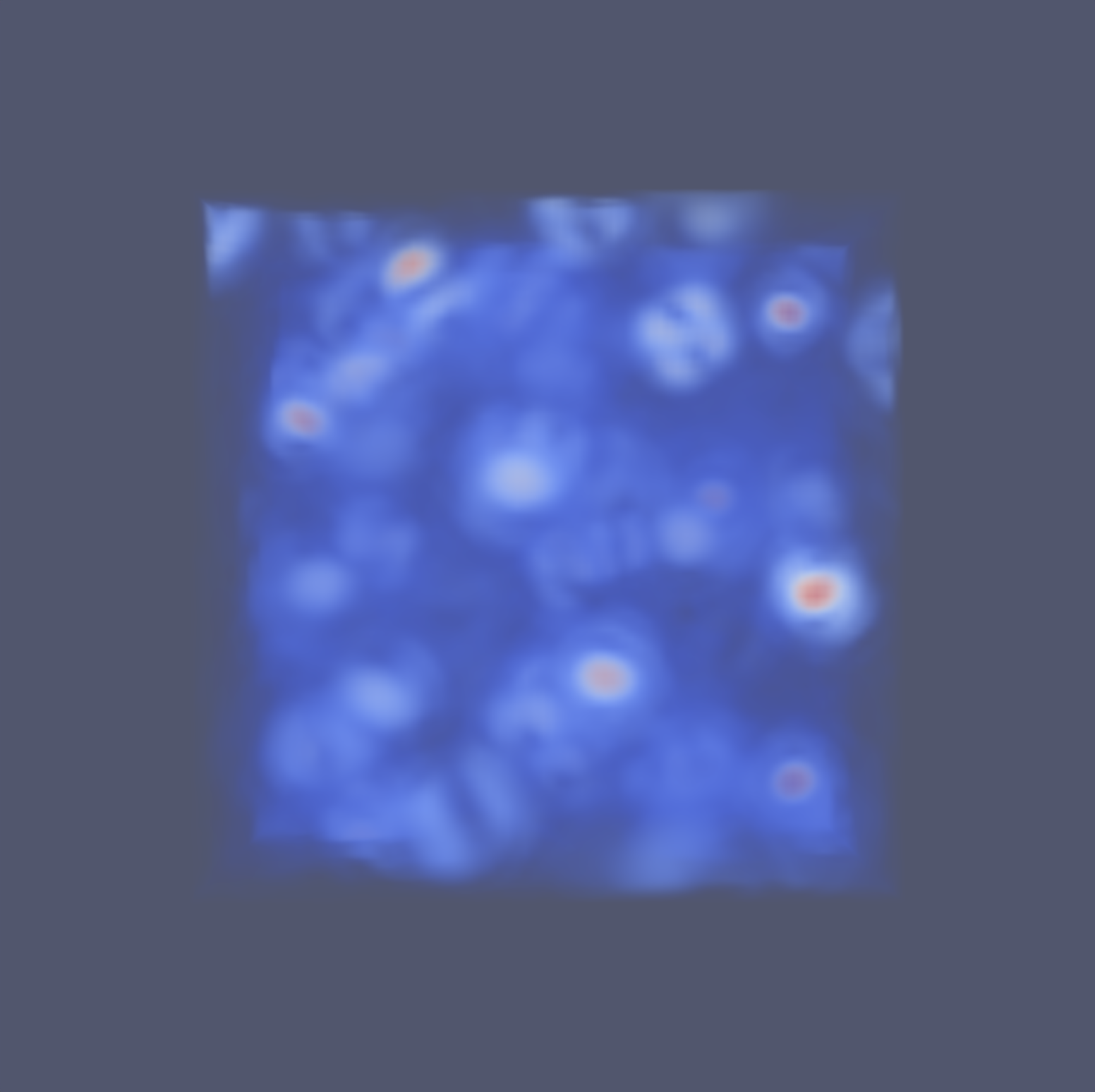}{DFK-Poly6}%
  \hfill%
  \formattedgraphics{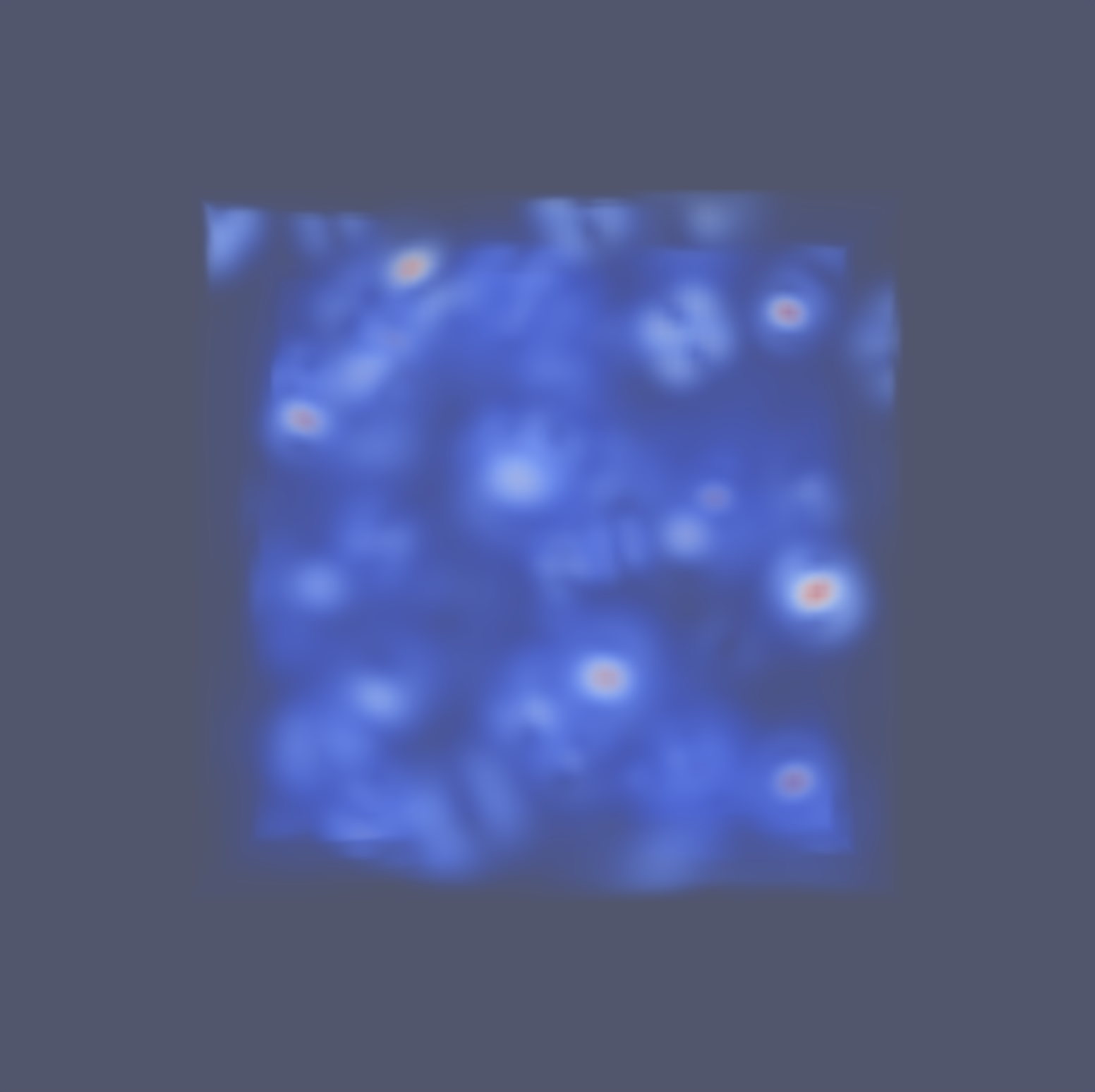}{Curl Kernel}%
  \hfill%
  \formattedgraphics{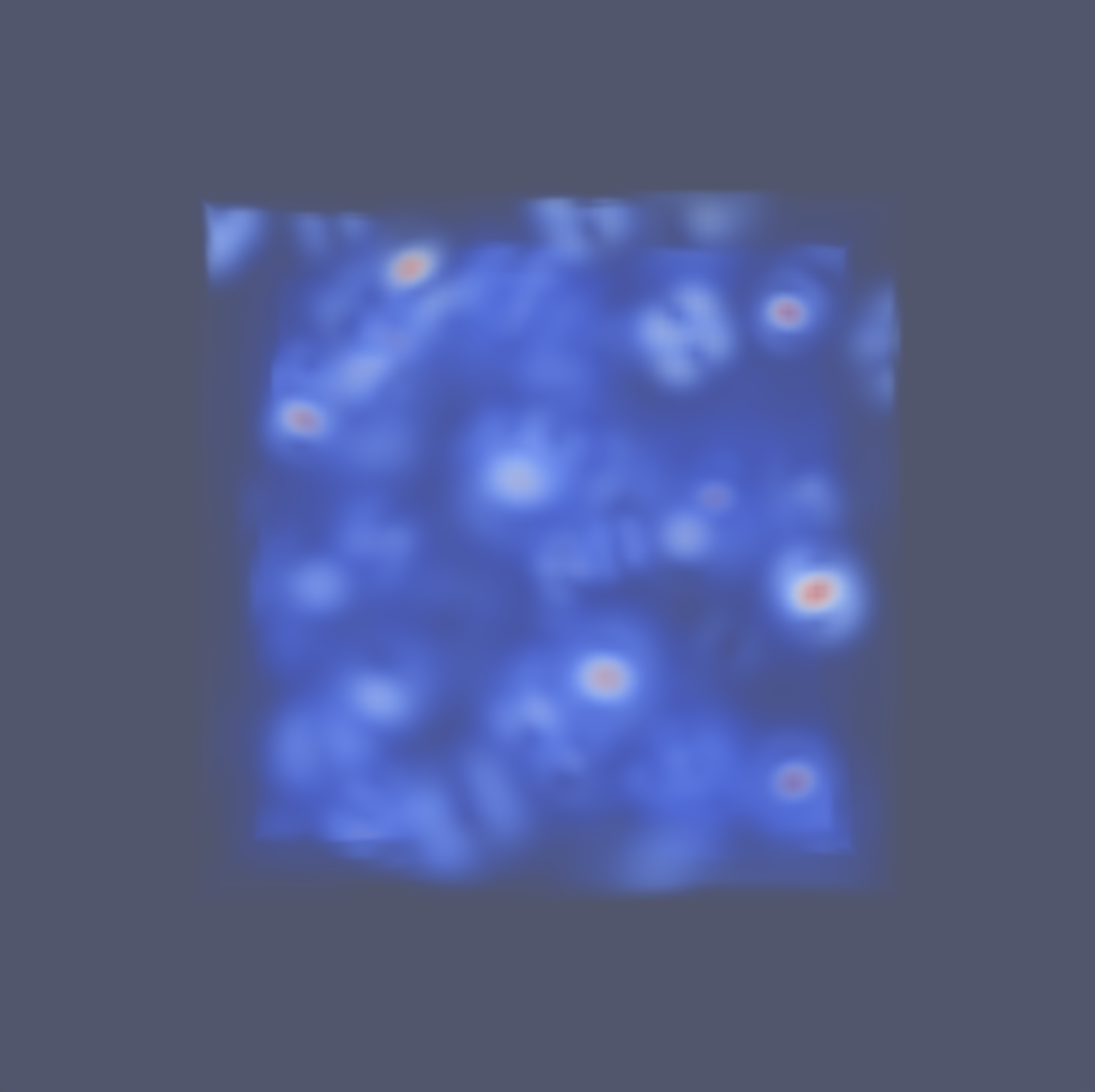}{\bfseries DFK-Wen4}%
  \\
  \vspace{-1em}
  \caption{Fitting experiments of the analytic vortices, with the resulting vorticity fields rendered. Physics-embedded methods clearly demonstrate superior fitting capabilities over physics-informed ones. Additionally, kernel-based approaches excel at capturing local details compared to neural networks-base representrations. Among all tested approaches, DFK-Wen4 achieves the lowest fitting error, as shown in Tab.~\ref{tab:losses}.}
  \label{fig:analytic}
  \Description{analytic experiments}
\end{figure}

\section{Problem Statement}

From a continuous perspective, the flow field reconstruction tasks we consider can be unified into the following constrained optimization problem:
\begin{align}
  \mathop{\arg\min}_{\bm{u}(\bm{x},t)\in\mathcal{F}}\quad&\mathcal{L}_\mathrm{obs}\left[\bm{u}(\bm{x},t),f_\mathrm{obs}(\bm{x},t)\right]\text{,}\\
  \text{subject to}\quad&\del\cdot\bm{u}=0\text{,\quad}\bm{x}\in\Omega\text{,}\label{eqn:div-free}\\
  \text{and}\quad & \bm{u}=\bm{u}_\mathrm{s}\text{,\qquad}\bm{x}\in\partial\Omega\text{,}\label{eqn:no-slip}
\end{align}
where $\Omega$ is the fluid domain, and $\partial\Omega$ is its boundary.
The search space $\mathcal{F}=L^2(\Omega\times[0,T],\mathbb{R}^d)$ consists of $d$-dimensional square-integrable vector functions defined over the spatiotemporal domain.

The objective function $\mathcal{L}_\mathrm{obs}$, which quantifies the observational loss, depends on both the flow field $\bm{u}$, to be optimized, and an observed input field $f_\mathrm{obs}$.
When the observations directly correspond to the flow field, though possibly incomplete within $\Omega$, the objective function is given by
\begin{equation}
  \label{eqn:fitting_loss}
  \mathcal{L}_\mathrm{obs}[\bm{u},\bm{u}_\mathrm{D}]=\frac{1}{V_\mathrm{D}}\int_0^T\int_{\Omega_\mathrm{D}}\Vert\bm{u}-\bm{u}_\mathrm{D}\Vert\,\mathrm{d}V_\mathrm{D}\,\mathrm{d}t\text{,}
\end{equation}
where $\Omega_\mathrm{D}$ is the supervised region with volume $V_\mathrm{D}$.
%a volume of $V_\mathrm{D}$.
Alternatively, if the observations originate from a passive field, such as soot concentration $\sigma$, the advection equation $\partial\sigma/\partial t+\bm{u}\cdot\del\sigma=0$ must be incorporated into the objective function, yielding
\begin{equation}
  \label{eqn:advection_loss}
  \mathcal{L}_\mathrm{obs}[\bm{u},\sigma]=\frac{1}{V}\int_0^T\int_{\Omega}\left\Vert\frac{\partial\sigma}{\partial t}+\bm{u}\cdot\del\sigma\right\Vert\mathrm{d}V\,\mathrm{d}t\text{.}
\end{equation}
%Note that the problem constraints, namely 
The constraints given in
Eqs. \eqref{eqn:div-free} and \eqref{eqn:no-slip} arise from the continuity equation and the no-slip boundary condition for incompressible flows, respectively, where the solid velocity $\bm{u}_\mathrm{s}$ is assumed to be zero unless specified otherwise.

\paragraph{Physics-informed losses}
Conventionally, the divergence-free and boundary conditions are \emph{relaxed} by incorporating additional penalty terms $\mathcal{L}_\mathrm{div}$ and $\mathcal{L}_\mathrm{bou}$, defined as
\begin{align}
  \mathcal{L}_\mathrm{div}&=\frac{1}{V}\int_0^T\int_\Omega\Vert\del\cdot\bm{u}\Vert\,\mathrm{d}V\,\mathrm{d}t\text{,}\\
  \mathcal{L}_\mathrm{bou}&=\frac{1}{A}\int_0^T\int_{\partial\Omega}\Vert\bm{u}-\bm{u}_\mathrm{s}\Vert\,\mathrm{d}A\,\mathrm{d}t\text{,}\label{eqn:boundary_loss}
\end{align}
where $A$ denotes the boundary area of $\Omega$.
The resulting optimization problem is then formulated as
\begin{equation}
  \mathop{\arg\min}_{\bm{u}(\bm{x},t)\in\mathcal{F}}\quad\mathcal{L}=\mathcal{L}_\mathrm{obs}+\lambda_\mathrm{div}\mathcal{L}_\mathrm{div}+\lambda_\mathrm{bou}\mathcal{L}_\mathrm{bou}\text{,}
\end{equation}
where $\lambda_\mathrm{div}$ and $\lambda_\mathrm{bou}$ represent tunable weighting factors. The total objective function $\mathcal{L}$ aligns with loss formulation used in previous work \cite{Chu2022,Wang2024,Yu2023}.

\paragraph{Physics-embedded approaches} 
In contrast, we demonstrate that the divergence-free condition, $\del\cdot\bm{u}=0$
can be more effectively enforced by embedding it directly into the search space. By employing divergence-free representations, the search space is restricted to valid solutions that inherently satisfy the continuity equation for incompressible flows, ensuring that the value of $\mathcal{L}_\mathrm{div}$ remains identically zero.
This reformulates the optimization problem as
\begin{equation}
  \mathop{\arg\min}_{\bm{u}(\bm{x},t)\in\mathcal{P}[\mathcal{F}]}\quad\mathcal{L}=\mathcal{L}_\mathrm{obs}+\lambda_\mathrm{bou}\mathcal{L}_\mathrm{bou}\text{,}
\end{equation}
in which $\mathcal{P}[\mathcal{F}]$ denotes the reduced search space of divergence-free fields.
%ensuring that only divergence-free fields are considered. 
This embedding approach guarantees that the reconstructed flow fields are both data-driven and physically realistic, leading to more accurate results.

\paragraph{Loss discretization}
In practice, observations are typically provided as discrete sample points. When these data points are uniformly distributed within $\Omega_\mathrm{D}$ (or $\partial\Omega_\mathrm{D}$), the spatial integrals divided by $V$ (or $A$) in Eqs. (\ref{eqn:fitting_loss}--\ref{eqn:boundary_loss}) can generally be approximated by averaging over the data points.
Similarly, the outer temporal integral can be expressed as a summation over discrete time steps, and the time derivatives of physical quantities are computed using finite difference schemes.

\begin{figure}[t]
  \centering%
  \newcommand{\formattedgraphics}[2]{\begin{overpic}[width=.138\linewidth,trim=10cm 8cm 10cm 12cm,clip]{#1}\put(3,89){\sffamily\scriptsize\color{white} #2}\end{overpic}}%
  \formattedgraphics{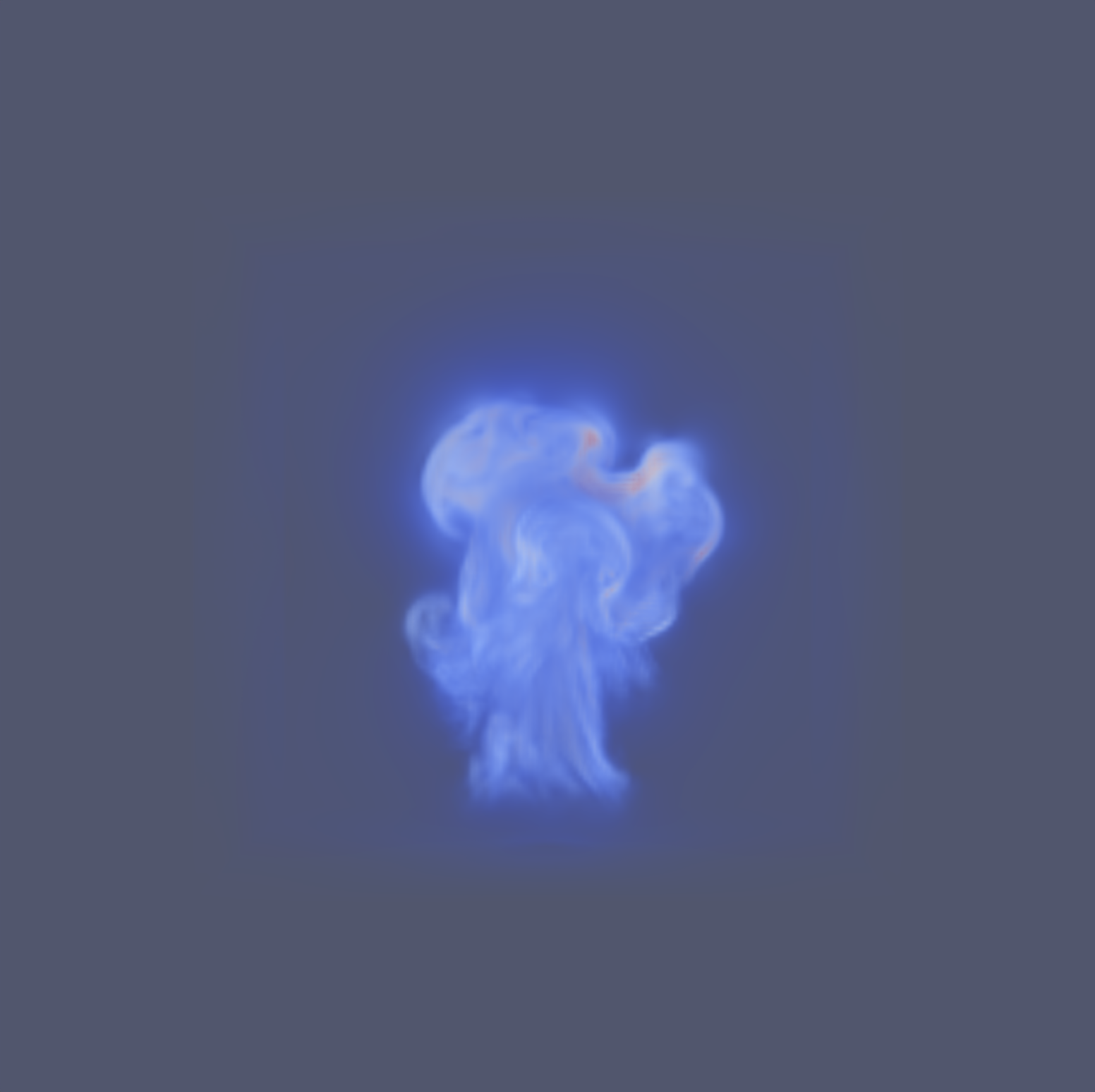}{Ground Truth}%
  \hfill%
  \formattedgraphics{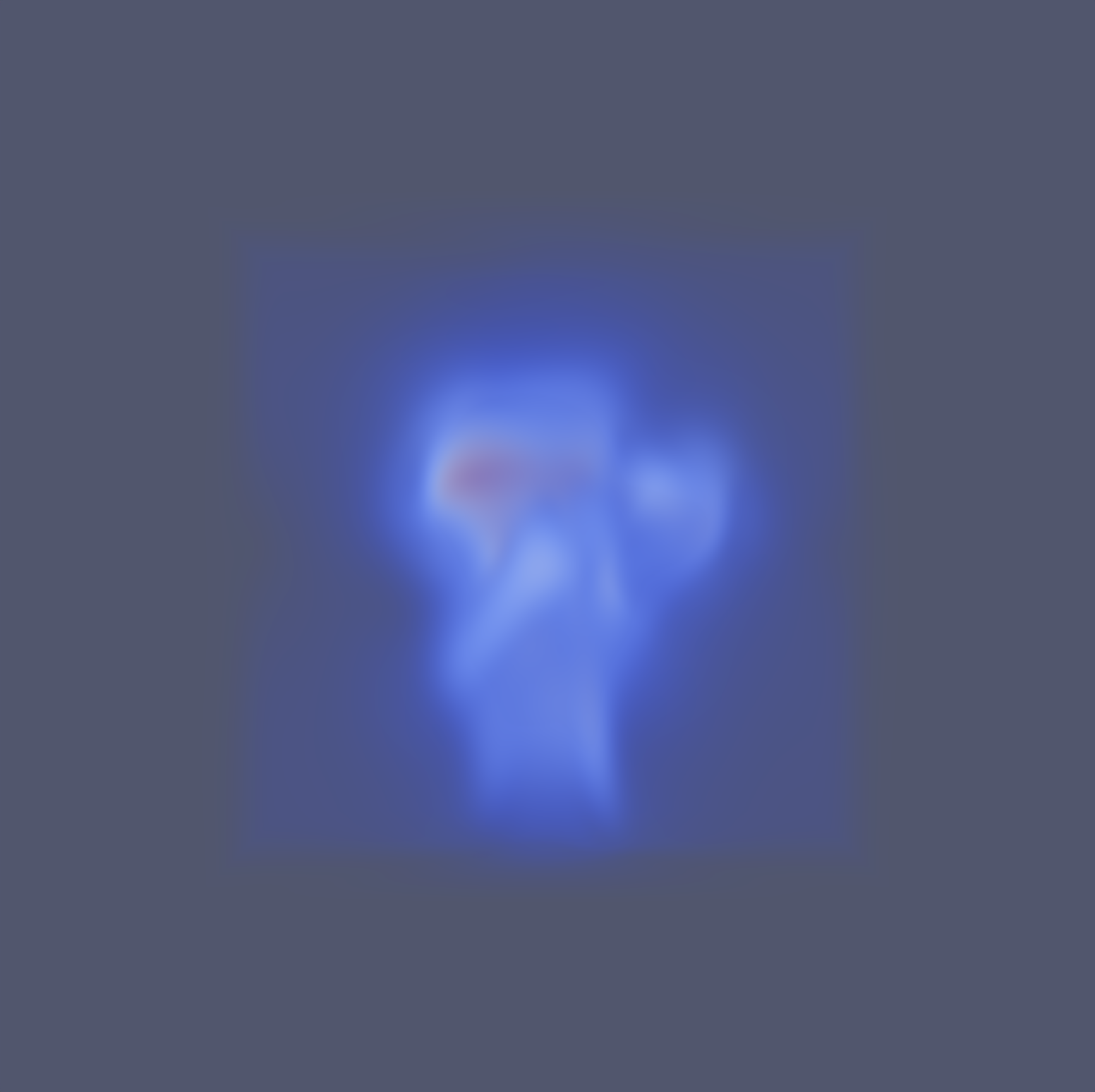}{SIREN}%
  \hfill%
  \formattedgraphics{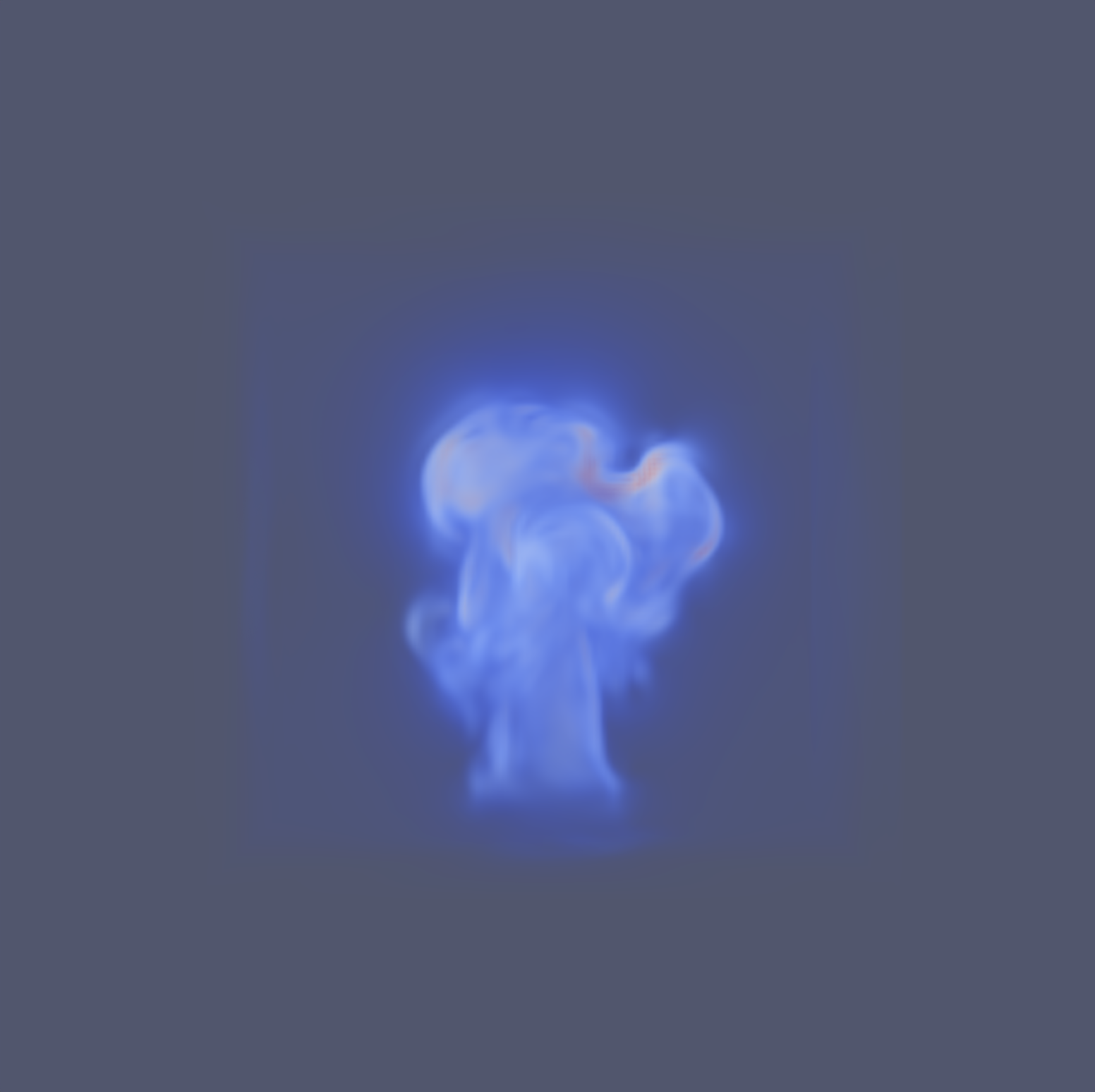}{Curl SIREN}%
  \hfill%
  \formattedgraphics{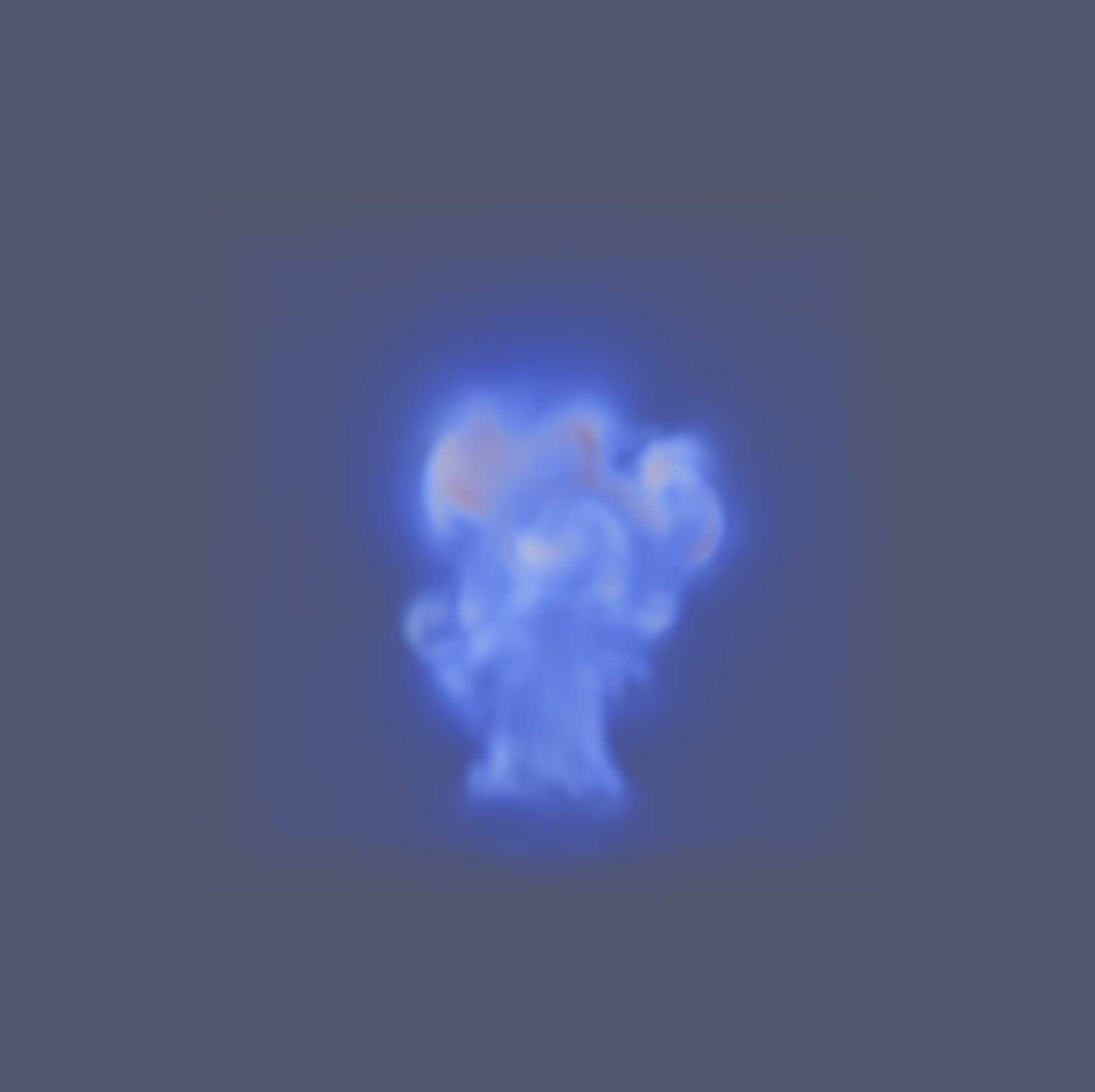}{Regular RBF}%
  \hfill%
  \formattedgraphics{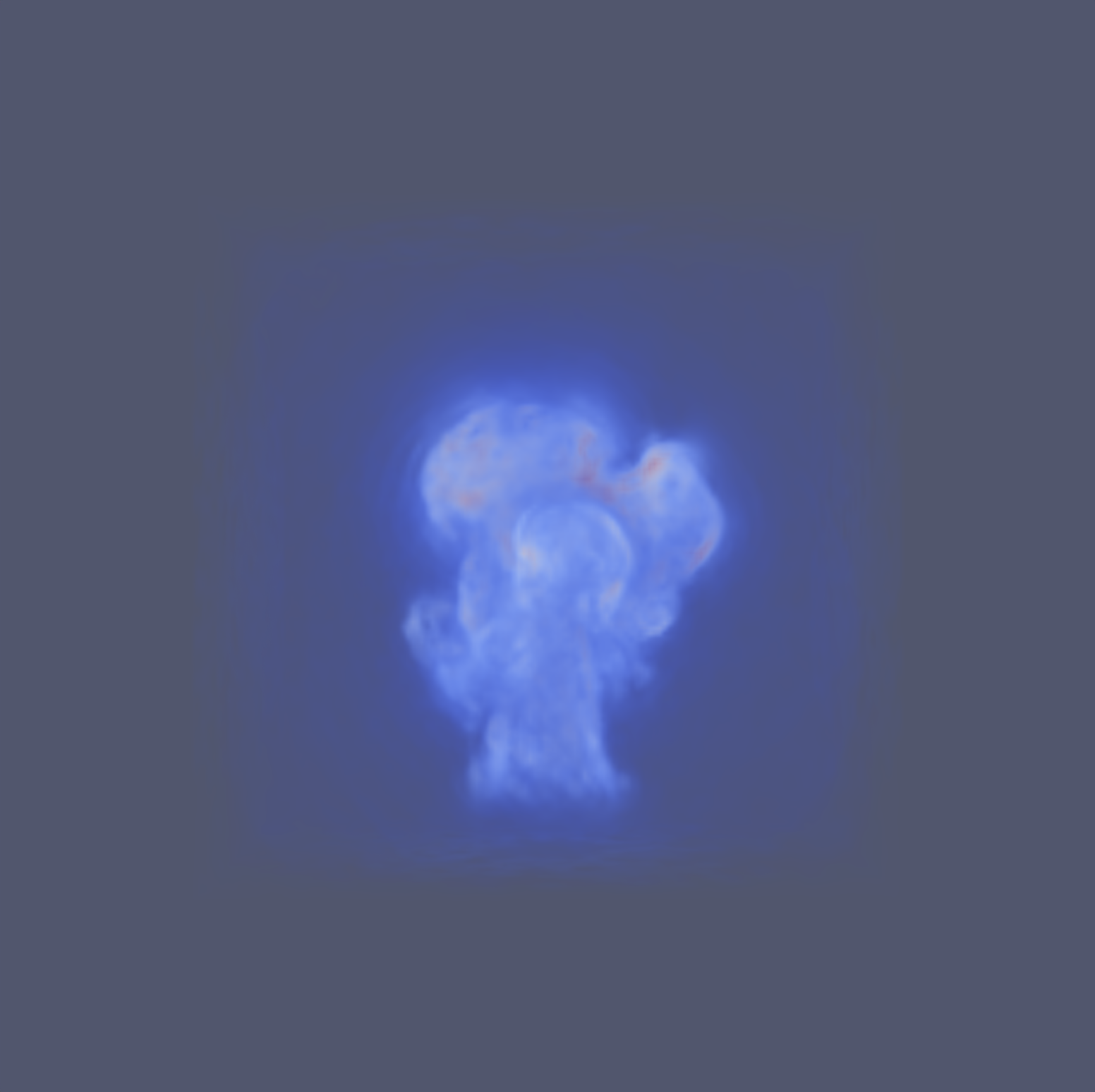}{DFK-Poly6}%
  \hfill%
  \formattedgraphics{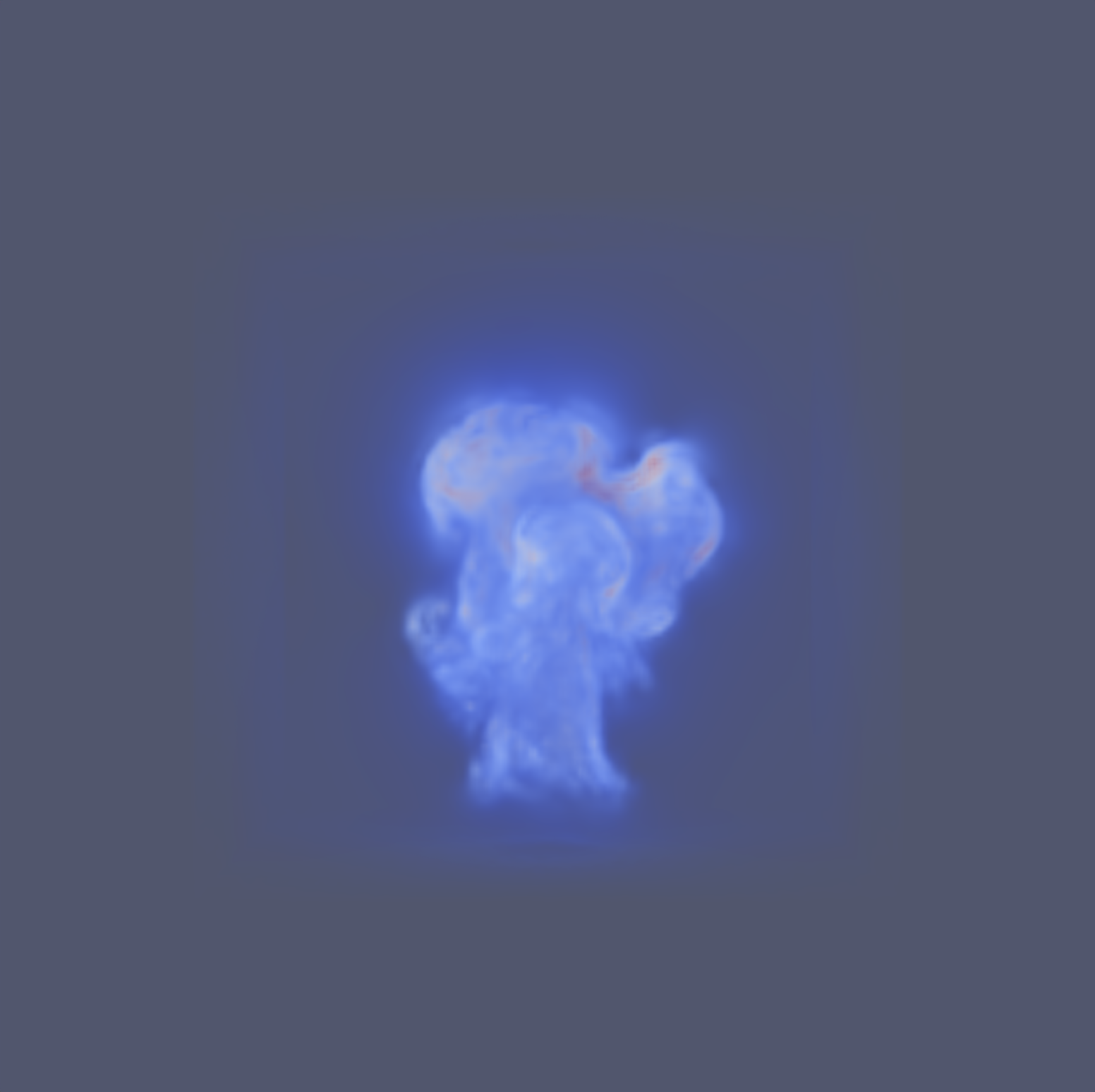}{Curl Kernel}%
  \hfill%
  \formattedgraphics{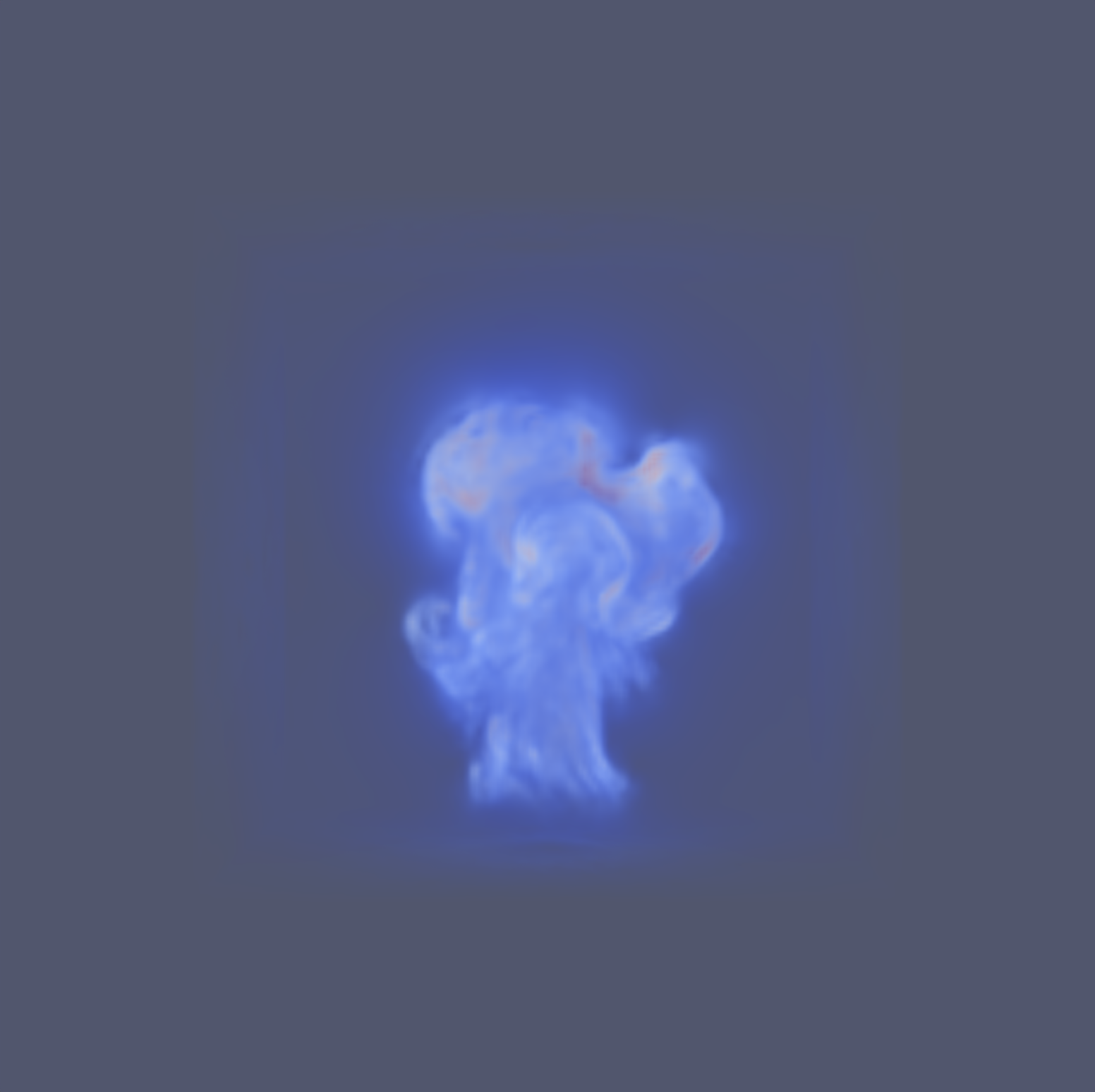}{\bfseries DFK-Wen4}%
  \\
  \vspace{-1em}
  \caption{Fitting experiments of the simple plume, with the resulting vorticity fields rendered. SIREN, Curl SIREN, and Regular RBF produce overly smooth results, failing to capture finer details. In contrast, Curl Kernel and DFK-Wen4 closely match the ground truth (see Tab.~\ref{tab:losses}), delivering accurate and high-fidelity representations, with only $4.3\%$ of DoFs used compared to the raw data.}
  \label{fig:plume}
  \Description{plume experiments}
\end{figure}

\section{Divergence-Free Kernels}

We present the essential formulae for constructing the physics-consistent search space using divergence-free matrix-valued kernels, specifically DFKs-Wen4, along with a theoretical analysis of their advantages over other kernel-based representations.
% Missing proofs and expressions are provided in the supplementary material.
%The FlowPEK method, as introduced earlier, employs compactly supported, positive-definite, and divergence-free martrix-valued kernels to represent two- and three-dimensional flow fields.
%Here, we present the essential formulae for constructing these FlowPEKs, followed by a theoretical analysis \bin{that demonstrates} their advantages over other kernel-based representations.
%Missing proofs and analytic expressions are provided in the supplementary material.

\subsection{Construction of DFKs-Wen4}
\label{sec:con_flowpek}

Given the spatial dimension $d$, we consider the $i$-th kernel, located at $\bm{x}_i\in\mathbb{R}^d$ with a size of $h_i\in\mathbb{R}$, to be associated with a scalar-valued kernel defined as
\begin{equation}
  \phi_i(\bm{x})=
  \phi\left(\frac{\Vert\bm{x}-\bm{x}_i\Vert}{h_i}\right)\text{,}
\end{equation}
where $\phi(r)$ is a radial basis function.
From this scalar kernel $\phi_i(\bm{x})$, we derive a matrix-valued kernel $\bm{\psi}_i:\mathbb{R}^d\to\mathbb{R}^{d\times d}$ by applying a second-order differential operator \cite{Narcowich1994}:
\begin{equation}
  \label{eqn:df-constr}
  \bm{\psi}_i(\bm{x})=\left(-\bm{I}\del^2+\del\del^\top\right)\phi_i(\bm{x})\text{,}
\end{equation}
in which $\bm{I}$ is the identity matrix and $\del\del^\top$ denotes the Hessian operator.
For any choice of $\phi(r)$ and any weight vector $\bm{\omega}_i\in\mathbb{R}^d$, it can be shown (see \S{A.2} in the supplementary document) that
\begin{equation}
  \label{eqn:construct}
  \bm{\psi}_i(\bm{x})\,\bm{\omega}_i=\del\times(\del\phi_i\times\bm{\omega}_i)\text{.}
\end{equation}
Therefore, if we express a flow field $\bm{u}(\bm{x})$ as the summation
\begin{equation}
  \bm{u}(\bm{x})=\sum_i{\bm{\psi}_i(\bm{x})\,\bm{\omega}_i}\text{,}
\end{equation}
the divergence of the field, $\del\cdot\bm{u}$, is guaranteed to be zero.

Considering the trade-off between the flow field smoothness and computational complexity, we adopt the $\mathcal{C}^4$-continuous piecewise-polynomial radial function proposed by \citet{Wendland1995}\text{,}
\begin{equation}
  R_\mathrm{Wen4}(r)=(1-r)_+^6(35r^2+18r+3)\text{,}
\end{equation}
to server as $\phi(r)$, where $(\cdot)_+$ denotes $\max(\cdot, 0)$.
%This choice ensures $\mathcal{C}^2$-continuity throughout the domain.
The concrete formulations of DFKs-Wen4 and their derivatives that are useful for both fluid mechanics and optimization algorithms are provided in \S{B}.

\subsection{Properties Analysis}
\label{sec:analysis}

Although the procedure outlined in \S\ref{sec:con_flowpek} is not the only way to obtain divergence-free kernel-based representations, we argue that DFKs-Wen4 are particularly well-suited for flow field reconstruction due to their desirable properties.

\begin{figure}[t]
  \centering
  \setlength{\imagewidth}{.196\textwidth}
  \subcaptionbox{\label{fig:illu_curl}Curl Kernel;}{\includegraphics[width=\imagewidth]{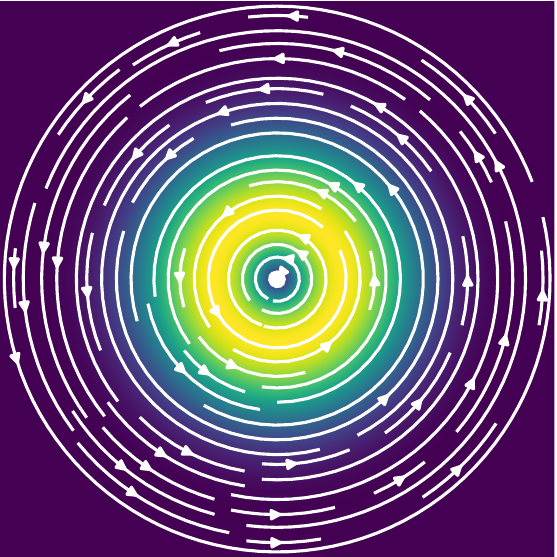}}%
  \hfill%
  \subcaptionbox{\label{fig:illu_gauss}DFK-Gauss;}{\includegraphics[width=\imagewidth]{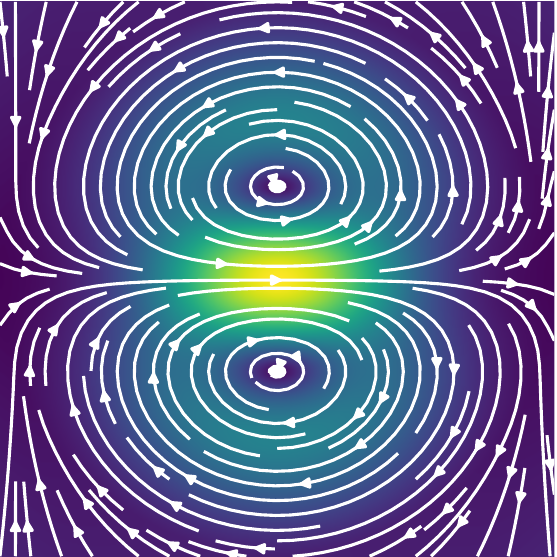}}%
  \hfill%
  \subcaptionbox{\label{fig:illu_poly6}DFK-Poly6;}{\includegraphics[width=\imagewidth]{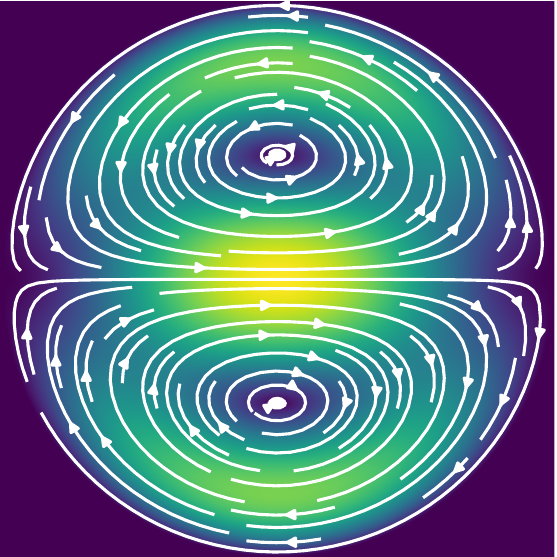}}%
  \hfill%
  \subcaptionbox{\label{fig:illu_flowpek}\textbf{DFK-Wen4};}{\includegraphics[width=\imagewidth]{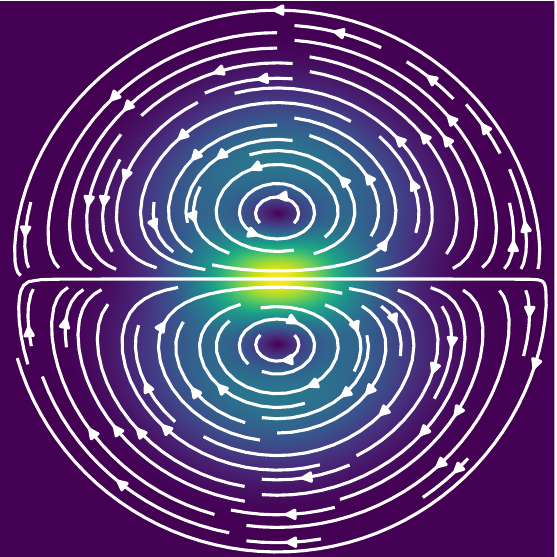}}%
  \hfill%
  \subcaptionbox{\label{fig:illu_vort}Vorticity of (\subref{fig:illu_flowpek}).}{\includegraphics[width=\imagewidth]{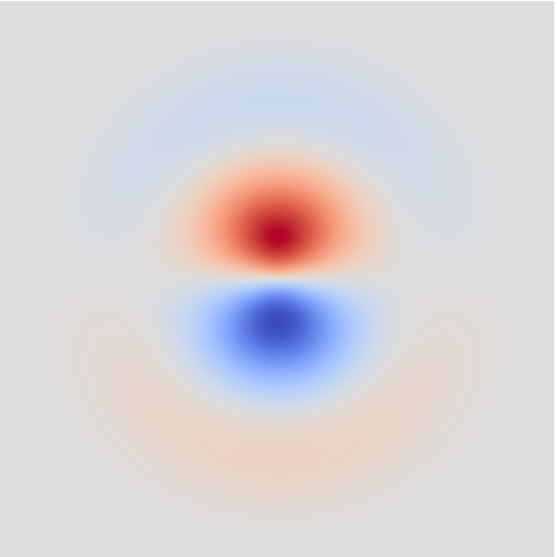}}\\
  \vspace{-1em}
  \caption{\label{fig:illustration}
  Illustration of different divergence-free kernels in 2D. The left four images display streamline plots, where brighter background colors indicate higher velocity.
  For Curl Kernel (\subref{fig:illu_curl}), the vector field is constructed as $(\partial/\partial y,-\partial/\partial x)\,R_\mathrm{Wen4}$, which generalizes to $\del\times R_\mathrm{Wen4}\bm{\omega}$ in 3D.
  For DFKs (\subref{fig:illu_gauss}--\subref{fig:illu_flowpek}), the fields are formulated as $(-\bm{I}\del^2+\del\del^\top)\,\phi\,\bm{\omega}$, where $\phi$ takes the forms $R_\mathrm{Gau}=\exp{(-9r^2/2)}$, $R_\mathrm{Poly6}=(1-r^2)^3_+$, and $R_\mathrm{Wen4}$, respectively. For comparison, $\bm{\omega}$ is set to $(1,0)$.
  Note that these plots also serve as 2D cross-sections of their 3D counterparts.
  The rightmost image (\subref{fig:illu_vort}) shows the corresponding vorticity field of DFK-Wen4, with cool and warm colors indicating opposite rotation directions.}
  \Description{different kernels}
\end{figure}

\subsubsection{Dipolarity}
Unlike vortex flows derived from the curl of a radial vector potential (or the stream function in 2D), as shown in Fig.~\ref{fig:illu_curl}, the kernel constructed via Eq.~\eqref{eqn:construct} exhibits distinct dipole characteristics, with two vortices rotating in opposite directions, as illustrated in Figs.~\ref{fig:illu_gauss}--\ref{fig:illu_flowpek}.
%This structure, characterized by two vortices rotating in opposite directions, 
This dipolar structure is a fundamental phenomenon that arises when a flow interacts with a blocking body (e.g., around a cylinder or in the formation of K\'{a}rm\'{a}n vortex streets). Furthermore, for a divergence-free vector field, a dipole, rather than a single vortex, represents the leading term in its multipole expansion\footnote{Commonly used in the representation of another type of divergence-free vector field, the magnetic field \cite{Griffiths2017}. Streamlines in Figs.~\ref{fig:illu_gauss}--\ref{fig:illu_flowpek} closely resemble the magnetic field lines near a magnetic dipole moment.},
%. As a result, dipole flow 
providing a more compact and accurate representation of incompressible flow fields.
% Its streamlines closely resemble the magnetic field lines near a magnetic dipole moment, which, in electromagnetism, represents the leading term in a multipole expansion for a divergence-free magnetic field \cite{Griffiths2017}. Therefore, the dipole not only provides a more concise representation of divergence-free flow fields compared to the monopole but also carries deeper theoretical significance.

\subsubsection{Compact Support}

DFKs-Wen4 inherit the compact support property of Wendland functions, enabling localized optimizations that avoid global interdependence. This is ideal for representing regions with zero velocity, such as solid boundaries or stationary fluids.
%This property \bin{naturally accommodates the} representation of regions with zero velocity, such as solid boundaries or stationary fluids.
In contrast, Gaussian functions, often used in classification tasks,
% machine learning methods like support vector machines and RBF neural networks,
are less suited for flow field reconstruction.
% Unlike classification tasks, the challenge in flow field learning lies in capturing fine details rather than overall trends.
As shown in Fig.~\ref{fig:illu_gauss}, divergence-free kernels based on Gaussian functions (DFKs-Gauss) influence the entire space, increasing computational complexity and hindering local details capture.
Truncating the Gaussian function at a certain radius disrupts continuity, making it unsuitable for this application.

\subsubsection{Positive Definiteness}
It is a key concept in kernel-based interpolation \cite{Wendland2004}, ensuring that the interpolation matrix derived from distinct data points is invertible and the interpolation problem has a unique solution.
%\xy{The concept of positive definiteness of kernel functions originates from interpolation of scattered data \cite{Wendland2004}.} % TODO
%\bin{For} a kernel function is positive definite, the interpolation matrix constructed from any set of distinct data points will also be positive definite (and thus invertible), ensuring that the interpolation problem has a solution that is both existent and unique. 
For kernels of the form $\bm{\psi}=(-\bm{I}\del^2+\del\del^\top)\,\phi$, it can be proven that if $\phi$ is positive definite, $\bm{\psi}$ will also be positive definite (see \S{A.1} in the supplementary document).  
In flow field reconstruction, positive definiteness means that 
as long as the kernel sizes and positions align with the data points' distribution
%as long as the kernels' sizes are the same and their positions are close to the data points' distribution, 
the flow field will be accurately represented. A set of weights always exists to ensure the reconstructed flow field passes through all data points. Intuitively, 
this property improves the convexity and smoothness of the optimization landscape.
%the existence of such a global minimum will improve the convexity of the optimization landscape.
In contrast, non-positive definite kernels, such as DFKs based on the Poly6 function commonly used in SPH methods \cite{Muller2003} (DFKs-Poly6), tend to produce flow fields with steeper gradients near local extrema (see Fig.~\ref{fig:illu_poly6}).

%Fig.~\ref{fig:illu_poly6} illustrates the divergence-free kernel constructed from the commonly used Poly6 function in the Smoothed Particle Hydrodynamics (SPH) method \cite{Muller2003}. As a non-positive definite kernel, it results in flow fields with steeper gradients near local extrema.

%\xy{Refer to the appendix to prove why it is positive definiteness. Explain why positive definiteness matters in optimization.}

\subsubsection{Differentiability}
By utilizing $\mathcal{C}^4$-continuous Wendland radial functions, DFKs-Wen4 ensure the existence and continuity of second-order derivatives, which are essential for upstream and downstream tasks.
The Navier--Stokes equations that govern fluid motion involve second-order partial derivatives of the flow field for the computation of viscosity,
%with viscosity determined by the second derivatives of the flow field. Additionally, the computation of vorticity relies on the derivative of flow filed 
and the gradient of vorticity (the curl of the flow field, see Fig.~\ref{fig:illu_vort}) forms the foundation of vortex methods \cite{Bridson2015}. As a $\mathcal{C}^2$-continuous representation, DFKs-Wen4 guarantee differentiability for most reconstruction tasks based on physical flow field equations.

\begin{figure}[t]
  \centering%
  \newcommand{\formattedgraphics}[2]{\begin{overpic}[width=.162\linewidth,trim=1cm 1cm 1cm 1cm,clip]{#1}\put(4,90){\sffamily\scriptsize #2}\end{overpic}}%
  \formattedgraphics{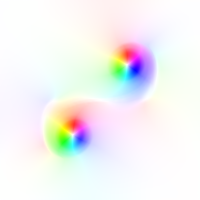}{Input}%
  \hfill%
  \formattedgraphics{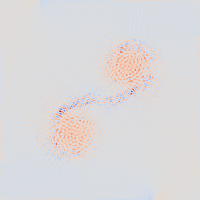}{Divergence}%
  \hfill%
  \formattedgraphics{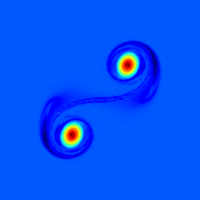}{\color{white} Vorticity (G.T.)}%
  \hfill%
  \formattedgraphics{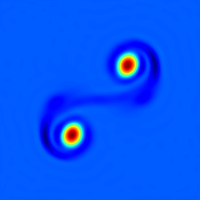}{\color{white} Curl SIREN}%
  \hfill%
  \formattedgraphics{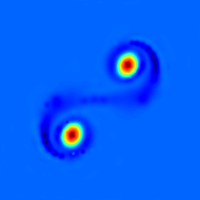}{\color{white} Curl Kernel}%
  \hfill%
  \formattedgraphics{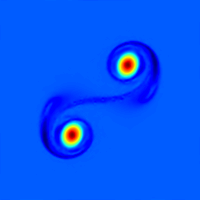}{\color{white}\bfseries DFK-Wen4}%
  \\
  \vspace{-1em}
  \caption{Projection experiments of the Taylor vortex. The input is illustrated using HSV color encoding, with its divergence plotted. The vorticity fields of ground truth and experimental results are presented with \emph{jet} color mapping, where our method provides the most accurate projection. The PSNR/SSIM values are as follows: Curl SIREN: 33.68/0.965; Curl Kernel: 29.80/0.970; \textbf{DFK-Wen4:} \textbf{39.30}/\textbf{0.994}.}
  \label{fig:taylor}
  \Description{taylor experiments}
\end{figure}

\section{Implementation}
\label{sec:implementation}

% We implemented the FlowPEK method using the PyTorch framework.
We implemented the flow field reconstruction framework based on divergence-free kernels using the PyTorch library.
To maintain a balanced distribution of kernels, we developed a custom C++ module that employs fast Poisson disk sampling \cite{Bridson2007} to determine the initial kernel positions.
Let $N$ denote the number of sampled kernels. To encourage that every point within the region of interest is influenced by multiple kernels, we initialize the kernel radii uniformly using the following formula:
\begin{equation}
  \label{eqn:init_radius}
  h_i=\eta\left[\frac{\Gamma\left(1+\frac{d}{2}\right)V}{N\pi^{\frac{d}{2}}}\right]^{\frac{1}{d}}
\end{equation}
where $\Gamma(\cdot)$ is the gamma function, and $\eta$ is a user-defined scaling factor.
Additionally, the initial kernel weights are set to zero.

Similar to 3DGS \cite{Kerbl2023}, in most application scenarios, the position $\bm{x}_i$, radius $h_i$, and weight $\bm{\omega}_i$ of each DFK are treated as trainable parameters, optimized using stochastic gradient descent with an Adam optimizer \cite{Kingma2017} and an exponential learning rate scheduler.
However, for large-scale scenarios, such as multi-frame flow field inference tasks, both time and memory costs become prohibitive. To mitigate this, we fix the positions and radii of the kernels and optimize only the weights.
To further boost performance, we developed a custom C++ hash grid module to precompute and cache influenced data point--kernel pairs and extended PyTorch with Taichi \cite{Hu2019} to enable parallel computation of each pair's contribution.
In this case, we continue using the Adam optimizer but switch to full-batch gradient descent, reducing the learning rate by $10\%$ once a plateau is detected.

\paragraph{Comparison methods}
% For comparison, we implemented several kernel-based methods including those presented in Fig.~\ref{fig:illustration}. 
For ablation studies, we implemented several divergence-free kernel-based representations, including Curl Kernels, DFKs-Poly6, and DFKs-Wen4, as well as Regular (scalar-valued) RBFs that mimic the Gaussian kernel using compactly supported Wenland's $\mathcal{C}^2$
polynomial $R_\mathrm{Wen2}=(1-r)^4_+(4r+1)$.
The same initialization and training strategies outlined above are applied.
For comparison with INRs, we trained
% We also compared our approach with
neural network models that use periodic activation functions, dubbed sinusoidal representation networks (SIRENs) \cite{Sitzmann2020}, to capture fine details.
In addition to standard SIRENs with physics-informed divergence penalties, we introduced Curl SIREN architectures, which model flow fields by taking the curl of network outputs \cite{Richter2024}, thereby ensuring inherent divergence-free behavior. 
%This ensures that Curl SIRENs are inherently divergence-free.

\begin{figure}[t]
  \centering
  \newcommand{\formattedgraphics}[2]{\begin{overpic}[width=.18\linewidth,trim=5cm 5cm 5cm 5cm,clip]{#1}\put(4,88){\sffamily\scriptsize\color{white} #2}\end{overpic}}
  \formattedgraphics{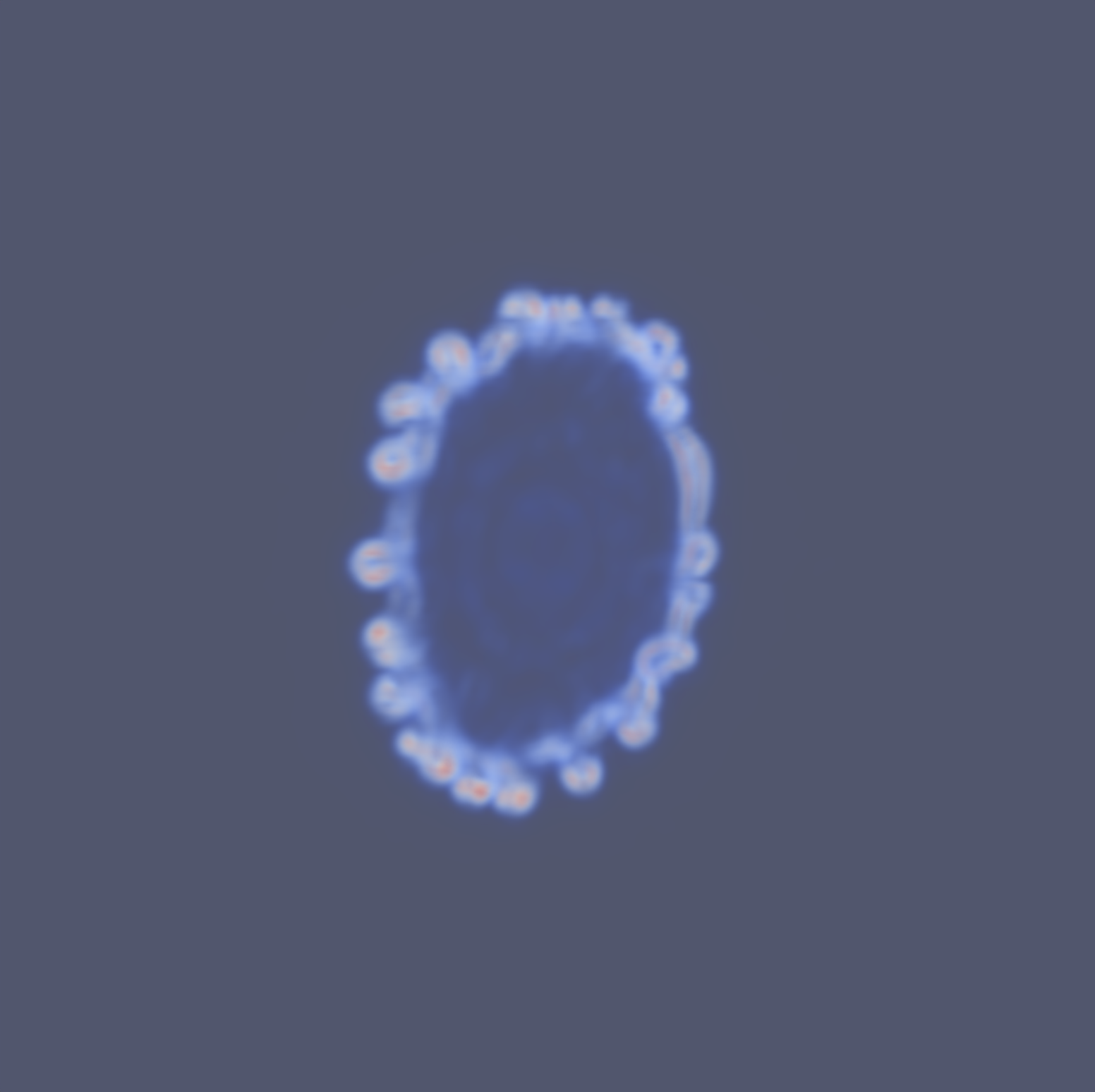}{Input}%
  \formattedgraphics{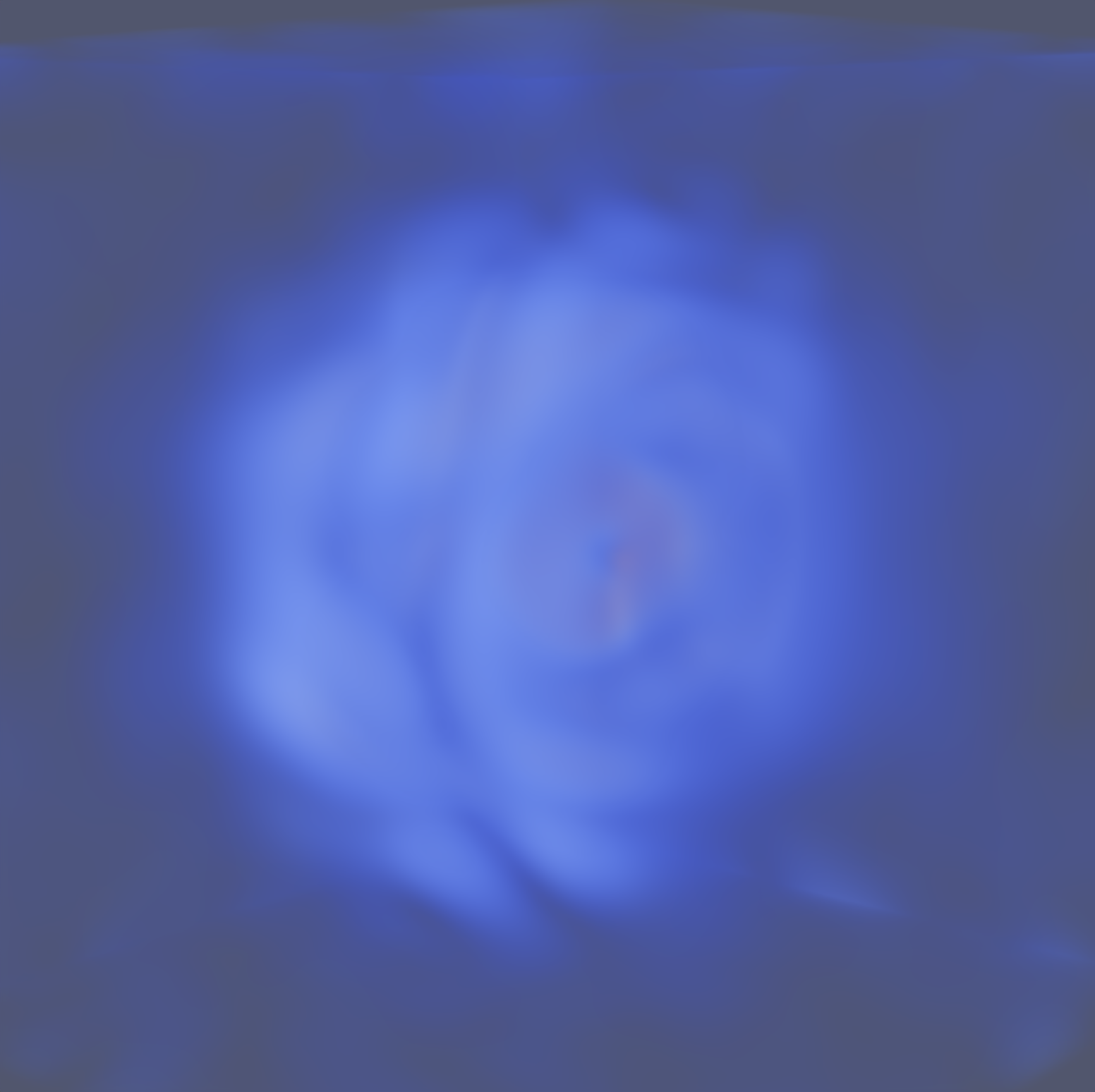}{Curl SIREN}%
  \formattedgraphics{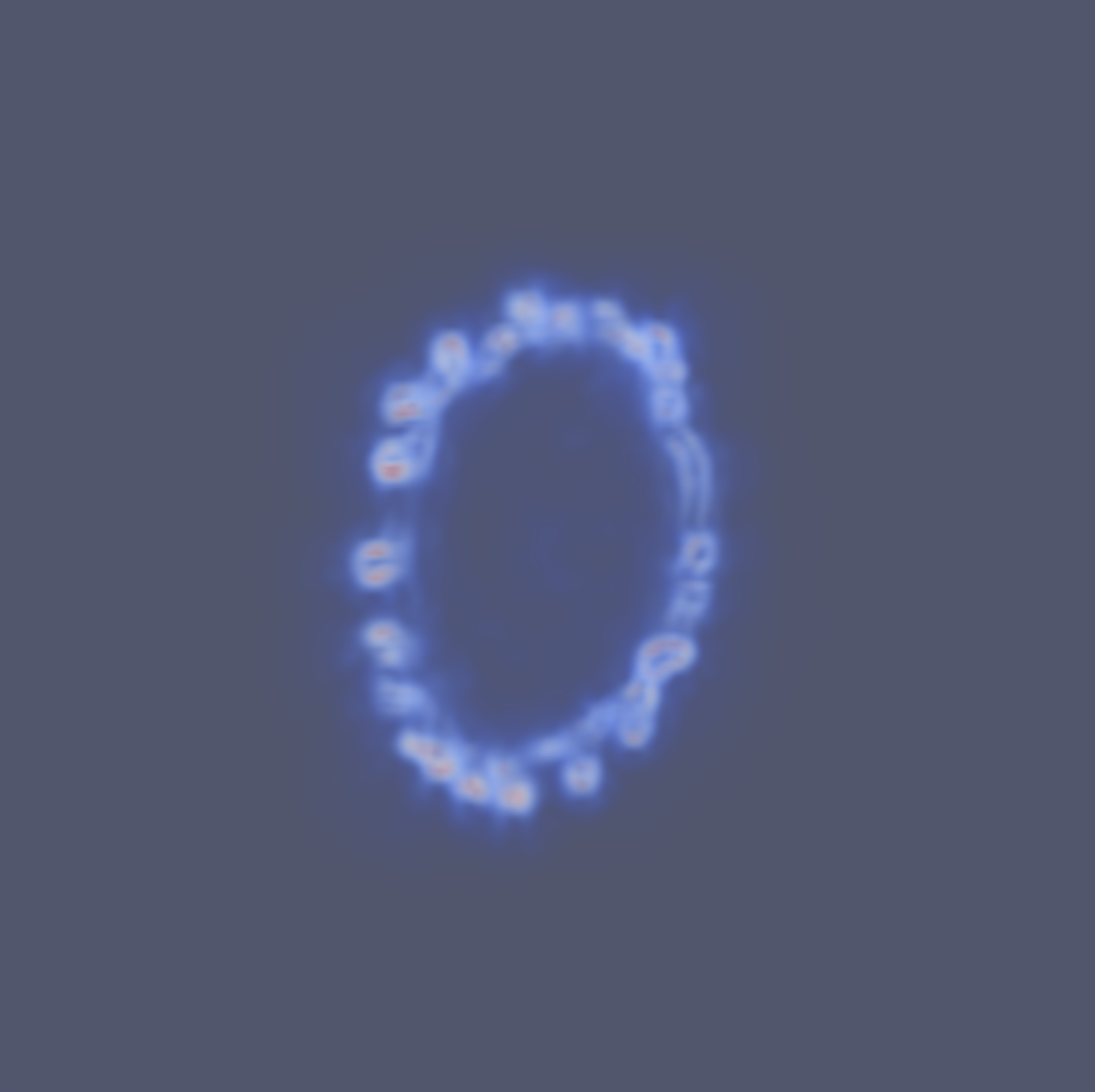}{Curl Kernel}%
  \formattedgraphics{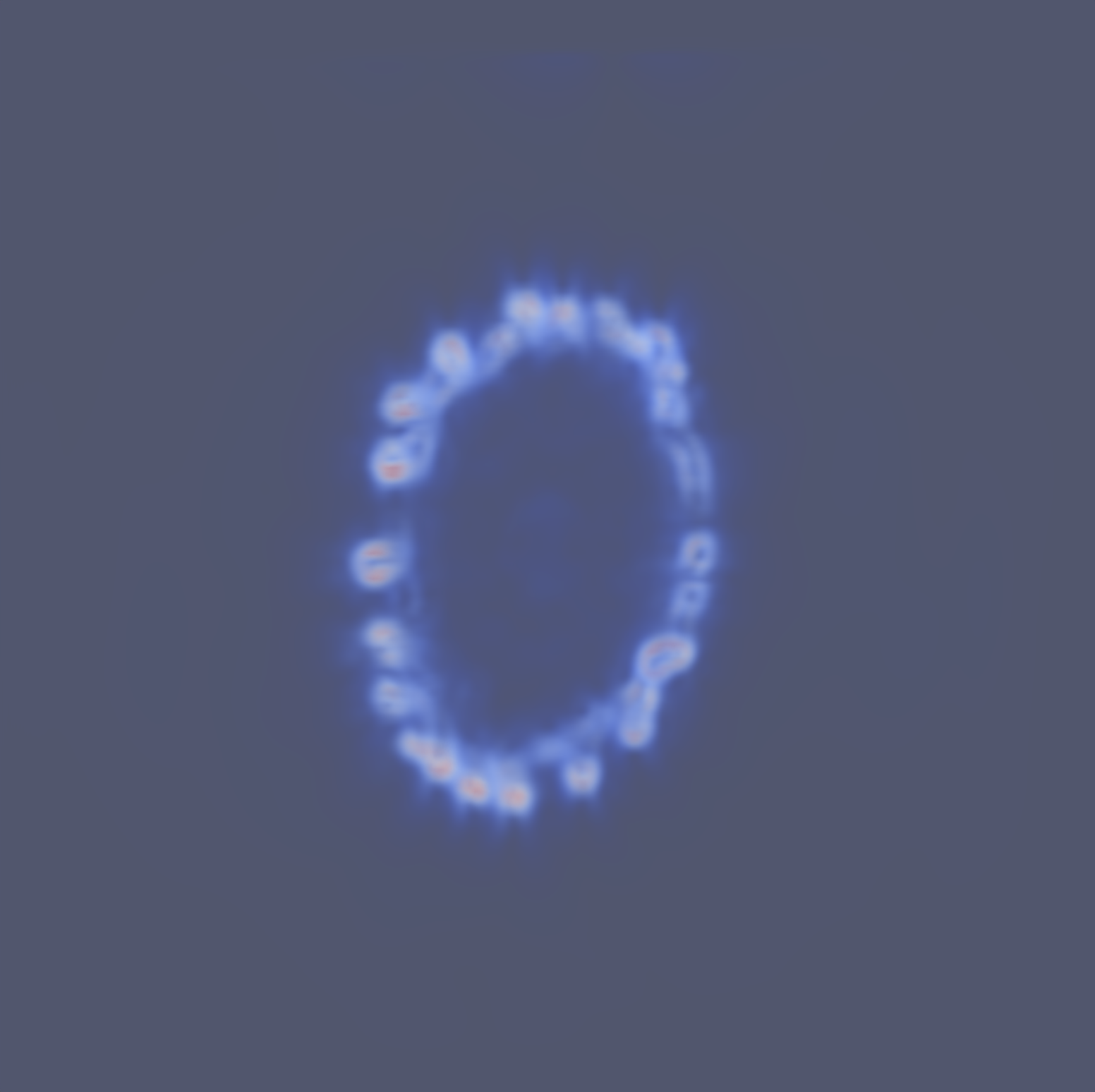}{\bfseries DFK-Wen4}%
  \\
  \vspace{-1em}
  \caption{Projection experiments of the vortex ring collision. The vortice fields of input and results are rendered. As demonstrated in Tab.~\ref{tab:losses}, DFK-Wen4 has the lowest loss in this case, though Curl Kernel also provides comparable visualization. Note that the vorticity field output by the INR is very diffuse.}
  \label{fig:rings}
  \Description{rings experiments}
\end{figure}

\section{Experiments}
\label{sec:experiments}

%Through comprehensive experiments across various tasks, we demonstrate the significant advantages of the proposed physics-embedded kernels in flow field learning. 
All experiments are conducted on a Windows 11 system equipped with an AMD Ryzen 9 7950X processor and an NVIDIA GeForce RTX 4090 GPU (24GB VRAM).
Unless otherwise specified, all methods are trained using the same number of epochs and batch size for each scenario.
We ensure that DFKs-Wen4 consistently have the fewest trainable parameters compared to the other methods by adjusting the numbers of initialized kernels.
A summary of the test case details and the final loss values at convergence can be found in Tab.~\ref{tab:losses}.
For further experimental settings and statistics, please refer to \S{C} in the supplementary document.

\begingroup
\setlength{\intextsep}{0pt}%
\setlength{\columnsep}{5pt}%
\begin{wrapfigure}{r}{0.24\linewidth}
\centering
\vspace{-0.5pt}
\includegraphics[width=\linewidth]{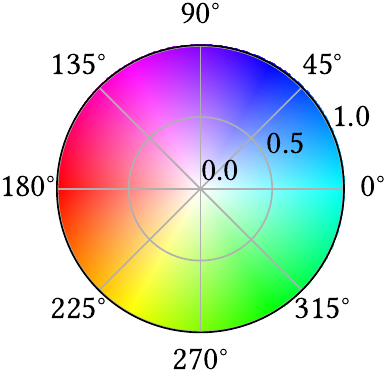}
\end{wrapfigure}
\paragraph{Flow visualization}
We use three primary approaches to visualize flow fields.
% (1) \emph{Vorticity-based visualization}:
(1) \emph{Vorticity mapping}:
We calculate the vorticity of the flow and map regions with higher vorticity to greater opacity and warmer colors to highlight intense rotational motion.
% (2) \emph{Optical flow-inspired Visualization}:
(2) \emph{HSV color encoding}:
Drawing inspiration from optical flow techniques \cite{Baker2011}, we map the direction of the 2D velocity to hue and its (relative) magnitude to saturation, as illustrated in the inset figure. For 3D flow fields, we project the data onto a specified 2D plane in advance.
(3) \emph{Line integral convolution (LIC)} \cite{Cabral1993}: This 2D technique represents the flow's streamlines, offering a clear visualization of its structure.

\endgroup

\begin{table}[t]
  \centering
  \caption{Basic specifications and final loss values at convergence of experiments. For each case, the data scale is given in the number of data points, and the result with the minimal loss is colored green.
  % We implemented Regular RBFs using $\mathcal{C}^2$-continuous Wendland polynomial to achieve comparable smoothness of represented flow fields with others.
  The DFKs-Wen4 won in nearly all the experiments except for \emph{plume} and \emph{missile}.
  See discussions in the main text.}
  \label{tab:losses}
  \def\best{\color{Green}}
  \vspace{-1em}
  \mbox{}\clap{\resizebox{\textwidth}{!}{%
  \begin{tabular}{cccccccccc}
      \hline\hline
      Section & Case & $d$ & Data scale & SIREN & Curl SIREN & Regular RBF & DFK-Poly6 & Curl Kernel &\bfseries DFK-Wen4 \\
      \hline\hline
      \ref{sec:fitting} & K\'{a}rm\'{a}n & 2 & $104.4\,\mathrm{k}$ & \num{3.950e-02} & \num{5.289e-03} & \num{2.150e-03} & \num{2.758e-03} & \num{1.483e-03} &\best\num{5.421e-04}\\
      \ref{sec:fitting} & analytic & 3 & $512.0\,\mathrm{k}$ & \num{4.819e-02} & \num{7.580e-03} & \num{1.176e-02} & \num{1.395e-02} & \num{8.445e-03} & \best\num{7.661e-03}\\
      \ref{sec:fitting} & plume & 3 & $2.097\,\mathrm{M}$ & \num{3.783e-03} & \num{1.940e-03} & \num{2.515e-03} & \num{2.309e-03} & \best\num{1.517e-03} & \num{2.039e-03}\\
      \hline
      \ref{sec:projection} & Taylor & 2 & $40.00\,\mathrm{k}$ &  & \num{9.082e-04} &  &  & \num{1.357e-03} & \best\num{2.611e-04}\\
      \ref{sec:projection} & collision & 3 & $2.097\,\mathrm{M}$ &  & \num{1.069e-03} &  &  & \num{4.982e-04} & \best\num{3.372e-04}\\
      \hline
      \ref{sec:completion} & flows (\SI{0}{\degree}) & 2 & $210.4\,\mathrm{k}$ &  & \num{1.062e-04} &  & & & \best\num{7.124e-05}\\
      \ref{sec:completion} & flows (\SI{45}{\degree}) & 2 & $210.4\,\mathrm{k}$ &  & \num{1.654e-04} &  & & & \best\num{1.034e-04}\\
      \ref{sec:completion} & flows (\SI{90}{\degree}) & 2 & $210.4\,\mathrm{k}$ &  & \num{2.131e-04} &  & & & \best\num{1.264e-04}\\
      \ref{sec:completion} & missile & 2 & $1.012\,\mathrm{M}$ &  & \best\num{2.608e-04} &  & \num{4.140e-03}  & \num{1.134e-03} & \num{3.306e-04}\\
      \ref{sec:completion} & bullet & 3 & $5.504\,\mathrm{M}$ &  & \num{1.174e-03} &  &  & \num{4.612e-04} & \best\num{3.954e-04}\\
      \hline
      \ref{sec:super-resolution} & turb. A & 2 & $4.096\,\mathrm{k}$ &  & \num{1.327e-02} &  & \num{2.109e-03}  & \num{2.494e-03} & \best\num{4.950e-04}\\
      \ref{sec:super-resolution} & turb. B & 2 & $4.096\,\mathrm{k}$ &  & \num{8.306e-03} &  & \num{1.908e-03}  & \num{2.595e-03} & \best\num{4.936e-04}\\
      \ref{sec:super-resolution} & obstacle & 3 & $135.2\,\mathrm{k}$ &  & \num{5.472e-03} &  & \num{6.670e-03}  & \num{4.789e-03} & \best\num{2.302e-03}\\
      \hline
      \ref{sec:inference-data} & rising & 3 & $116.0\,\mathrm{M}$ & \num{6.212e-02} & \num{6.436e-02} &  & & & \best\num{4.632e-02}\\
      \ref{sec:inference-data} & teapot & 3 & $134.0\,\mathrm{M}$ & \num{8.319e-02} & \num{8.693e-02} &  & & & \best\num{5.447e-02}\\
      \hline
      \ref{sec:inference-videos} & scalar & 3 & $232.4\,\mathrm{M}$ & \num{6.099e-02}  &  &  & & & \best\num{5.032e-02}\\
      \hline\hline
  \end{tabular}}}
\end{table}

\subsection{Storage of Complete Flow Fields}

We begin by validating DFKs' representation capabilities through fitting tasks with dense, complete data ($\Omega_\mathrm{D}=\Omega$).
By optimizting DFKs' positions, radii, and weights, we are able to store high-fidelity flow fields with far fewer degrees of freedom (DoFs) than the raw data.
% as a sanity check before exploring more complex scenarios.

\subsubsection{Fitting of Incompressible Flows}
\label{sec:fitting}
For dense incompressible flow field data, whether experimentally or synthetically generated, fitting accuracy on such data reflects the suitability of the chosen representation for reconstruction. We compare SIREN, Curl SIREN, Regular RBF, DFK-Poly6, Curl Kernel, and DFK-Wen4 on pure fitting tasks. The loss functions follow Eq.~\eqref{eqn:fitting_loss} for $\mathcal{L}_\mathrm{obs}$, with $\lambda_\mathrm{div} = 0.5$ (for non-divergence-free methods) and $\lambda_\mathrm{bou} = 1$.

\paragraph{K\'{a}rm\'{a}n vortex street (2D)}

We simulate the steady incoming flow around a cylinder using the method of \citet{Narain2019} on a $512\times204$ grid, capturing the periodic shedding of counter-rotating vortex pairs, regularly arranged on both sides of the obstacle. After the formation of the vortex street, we select a single frame for fitting. 
In this scenario, our method employs only \num{5367} kernels ($\sim27\,\mathrm{k}$ parameters, compression ratio as low as $8.6\%$) to achieve the least fitting error.
The vorticity fields presented in Fig.~\ref{fig:karman} clearly illustrate that DFKs-Wen4 yield results with superior clarity and accuracy, effectively capturing the flow dynamics with greater precision.

\paragraph{Analytic vortices (3D)}
We generate a pointwise divergence-free vector field, going beyond the mere incompressibility requirement in the finite-volume sense typically used in simulations. Specifically, we compute the curl of the following vector potential:
\begin{equation}
  \bm{A}=\begin{pmatrix}
  (1-x^2)\,(1-y^2)\,(1-z^2)\\
  (1-x^2)\,(1-y^2)\,(1-z^2)+\sin{\pi x}\sin{\pi y}\sin{\pi z}\\
  (1-x^2)\,(1-y^2)\,(1-z^2)
  \end{pmatrix}\text{,}
\end{equation}
%Based on this field, \num{100} vortex particles are randomly seeded. The results are shown in Fig.~\ref{fig:analytic}. Our method has the minimal fitting loss.
and \num{100} vortex particles are then randomly seeded in the resulting field. 
The illustration in Fig.~\ref{fig:analytic} indicates that our method with \num{21117} kernels initialized has the minimal fitting loss.

\paragraph{Simple plume (3D)}
Using the method proposed by \citet{Fedkiw2001} and the \emph{mantaflow} framework\footnote{Maintained by Nils Thuerey et al. http://mantaflow.com/}, we simulate a rising smoke plume on a $128^3$ grid and selecte a single frame for fitting.
As shown in Fig.~\ref{fig:plume}, Curl Kernel and DFK-Wen4 achieve the best results, each comprising $\sim27\,\mathrm{k}$ parameters, far less than the simulation output ($\sim6.3\,\mathrm{M}$).

\begin{figure}[t]
  \centering
  \setlength{\tabcolsep}{0.5pt}
  \setlength{\imagewidth}{.155\linewidth}
   \renewcommand{\arraystretch}{0.}
    {\small\begin{tabular}{m{0.3cm}<{\centering}m{\imagewidth}<{\centering}m{\imagewidth}<{\centering}m{\imagewidth}<{\centering}}
        & \SI{0}{\degree} & \SI{45}{\degree} & \SI{90}{\degree}\\
        \specialrule{0em}{1pt}{1pt}
        \rotatebox{90}{Curl SIREN} &
        \includegraphics[width=\imagewidth]{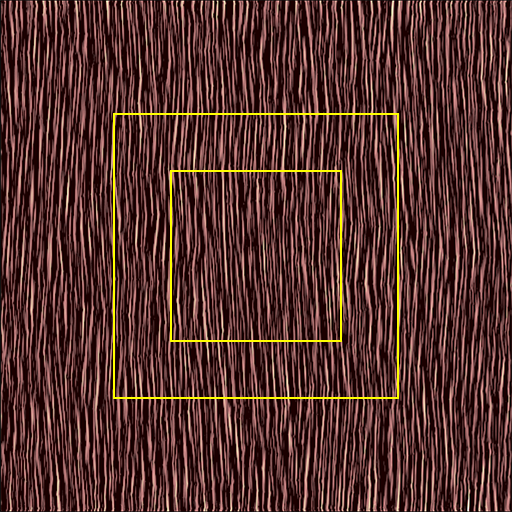} & 
        \includegraphics[width=\imagewidth]{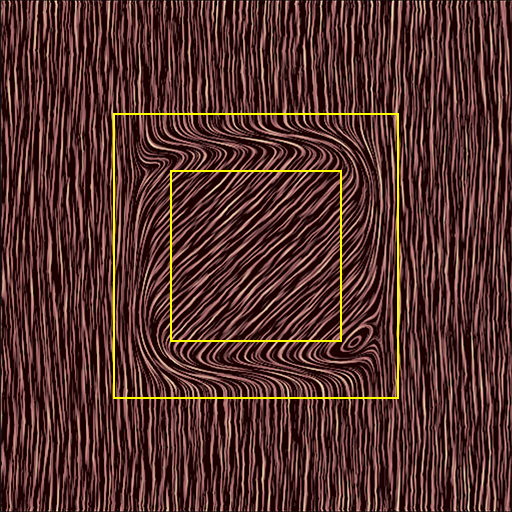} &
        \includegraphics[width=\imagewidth]{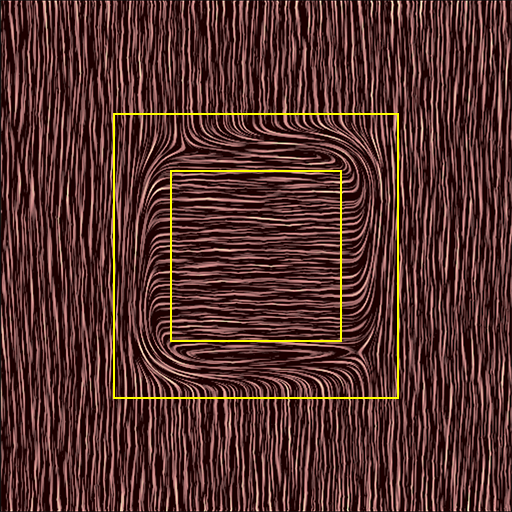}\\
        \specialrule{0em}{0pt}{1pt}
        \rotatebox{90}{\bfseries DFK-Wen4} &
        \includegraphics[width=\imagewidth]{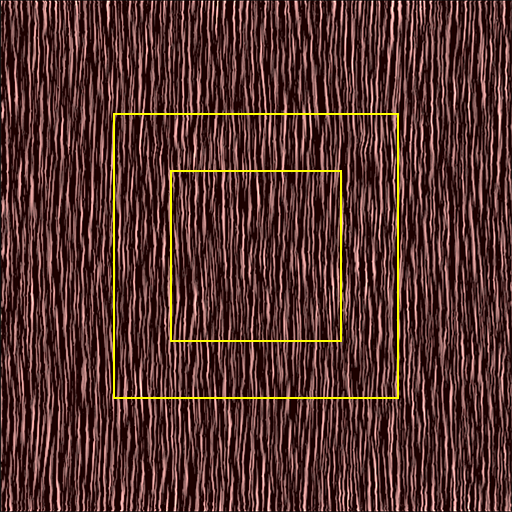} &
        \includegraphics[width=\imagewidth]{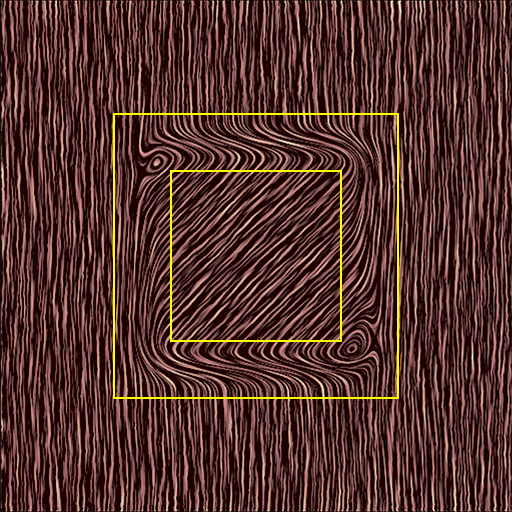} &
        \includegraphics[width=\imagewidth]{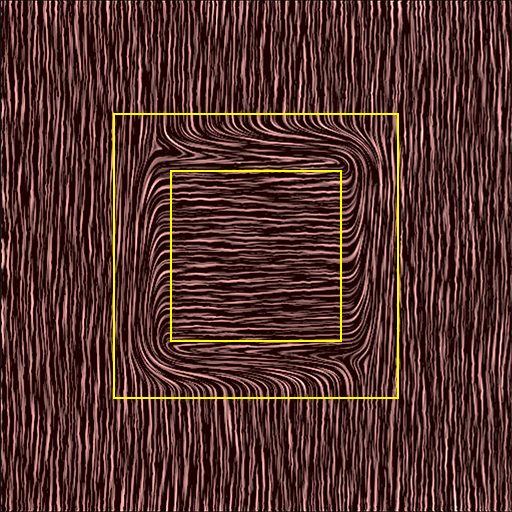}\\
    \end{tabular}}%
  \vspace{-1em}
  \caption{Inpainting experiments of the laminar flows, visualized using the LIC method. The region between the yellow rectangles is the area where the flow is completed. DFK-Wen4 fits the existing data better and produces more symmetric results. We use large initial kernel radii to encourage smoother inpainting.}
  \label{fig:laminar}
  \Description{laminar expriments}
\end{figure}

\subsubsection{Leray Projection}
\label{sec:projection}
A key advantage of DFKs are their ability to provide a local, adaptive Helmholtz decomposition basis.
The vector field expressed by Eq.~\eqref{eqn:construct} corresponds precisely to the divergence-free component of $-\del^2\phi_i(\bm{x})\,\bm{\omega}$. Therefore, for any given vector field, we can simply apply $-\del^2\phi_i$ (a regular RBF) to fit the data without divergence-free constraints, naturally performing the Leray projection afterwards. 
Unlike the traditional grid-based projection \cite{Stam1999}, this approach ensures that the resulting flow field is divergence-free at every point.

As noted by \citet{Richter2024}, other divergence-free representations can achieve similar results with proper adjustments. For instance, starting from the Curl Kernel, one can construct a curl-free Gradient Kernel, and their combination can be used for fitting, yeilding only the Curl Kernel component:
\begin{equation}
\bm{v}(\bm{x})=\del\times\left[R_\mathrm{Wen4}(\bm{x})\,\bm{\omega}_1\right]+\omega_2\,\del R_\mathrm{Wen4}(\bm{x})\text{.}
\end{equation}
%The idea based on Curl SIREN is similar. 
%These approaches, though introducing additional coefficients, are compared with FlowPEK.
The concept is akin to the Curl SIREN approach. Although these methods introduce additional coefficients, we compare them with DFKs to assess their relative performance.

\paragraph{Taylor vortex (2D)}
We use velocity field from a specific frame of the Taylor vortex example produced by the optimization-based fluid solver \cite{xing2024gridfreefluidsolverbased}, which does not guarantee divergence free, as the input for the projection step. Ideally, the projection should eliminate the divergence while preserving the vorticity. As shown in Fig.~\ref{fig:taylor}, our method with only \num{5416} kernels successfully recovers a velocity field with a vorticity distribution closest to the ground truth.

\paragraph{Vortex ring collision (3D)}
We project the velocity field from a specific frame of a 3D animation generated by the simulator of \citet{xing2024gridfreefluidsolverbased}. In this example, two vortex rings gradually approach each other and eventually shred into many smaller rings upon collision. As shown in Fig.~\ref{fig:rings}, our method achieves the best result with only \num{8544} kernels initialized.

\subsection{Completion of Quasi-Static Flow Data}

Due to measurement limitations, acquiring complete flow field data is often impractical. Typical challenges include missing data in specific regions or low-resolution sampling. In these cases, the optimization objective remains the same as in the fitting task, but $\Omega_\mathrm{D}$ is only a subset of $\Omega$ now.

%Since \S\ref{sec:fitting} has demonstrated that representations with incompressibility as penalty terms behave far worse than the others, we compare only physics-embedded approaches here.
As demonstrated in \S\ref{sec:fitting}, representations that enforce incompressibility via penalty terms perform significantly worse than alternative methods. Therefore, in this section, we focus our comparison exclusively on divergence-free approaches.
Note that there is no additional regularization except for the divergence-free constraint. We choose $\eta$ in Eq.~\eqref{eqn:init_radius} to be sufficiently large in order to encourage smoother completion.

\begin{figure}[t]
  \centering%
  \newcommand{\formattedgraphics}[2]{\begin{overpic}[width=.245\linewidth]{#1}\put(3,51){\sffamily\scriptsize #2}\end{overpic}}%
  \formattedgraphics{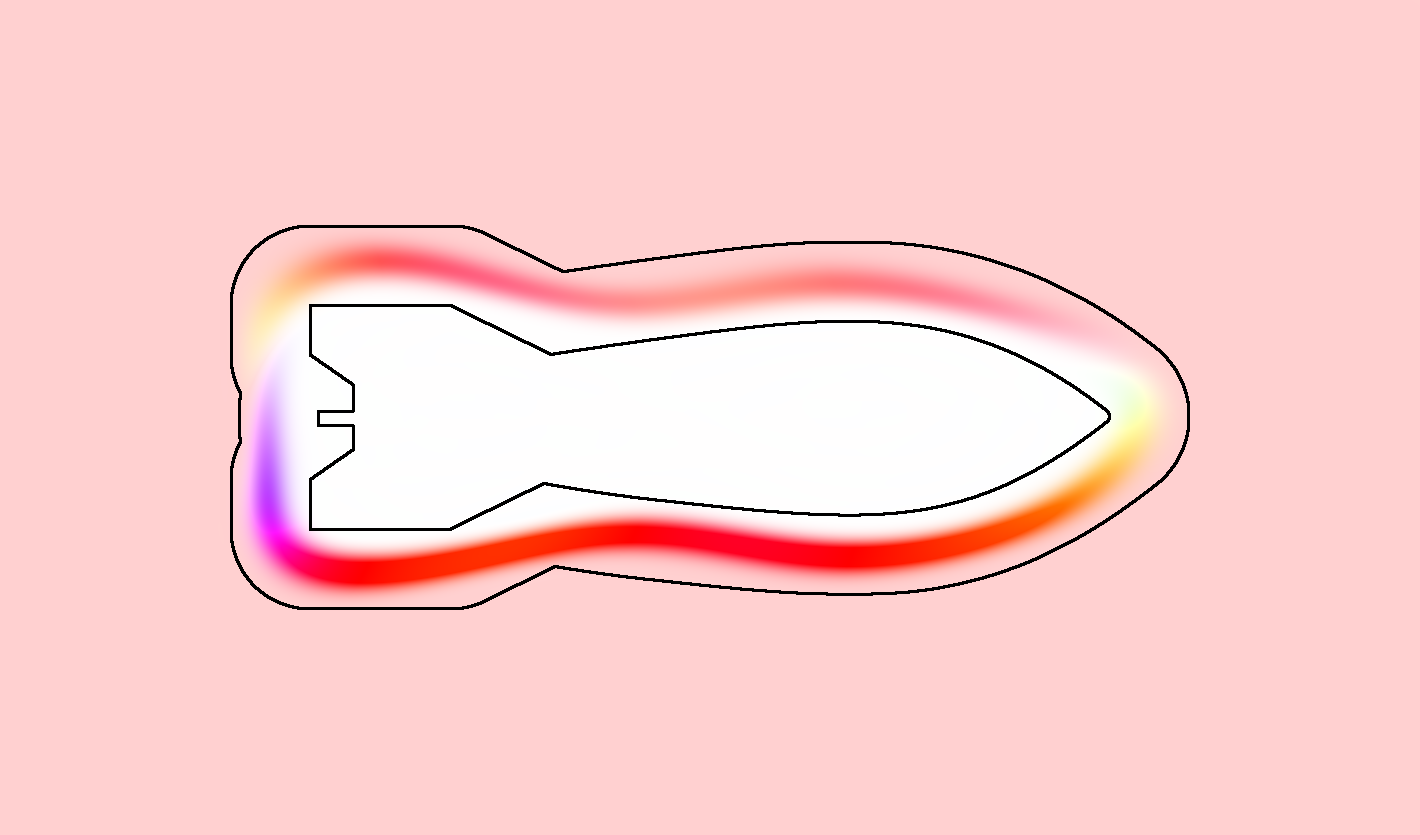}{Curl SIREN}%
  \hfill%
  \formattedgraphics{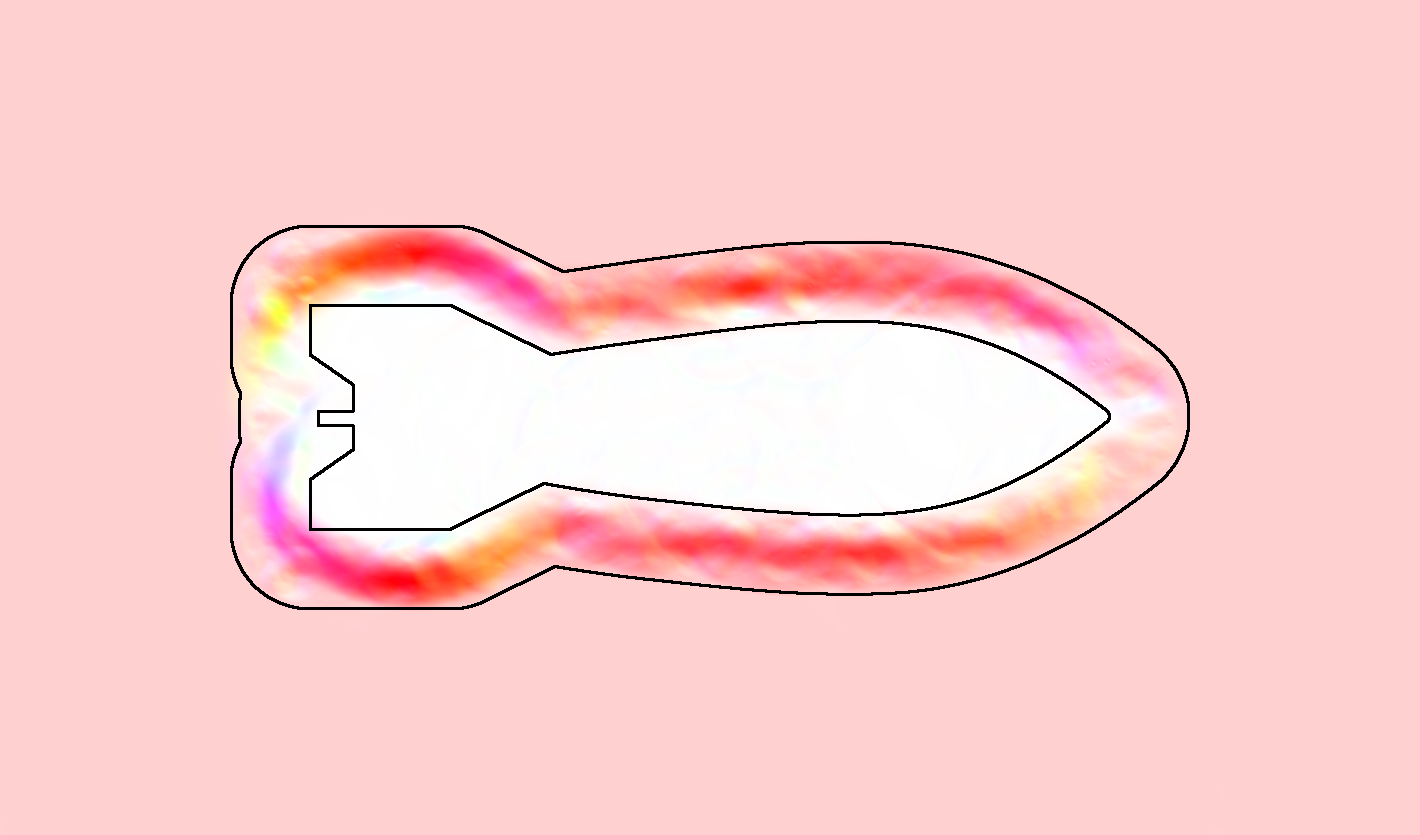}{DFK-Poly6}%
  \hfill%
  \formattedgraphics{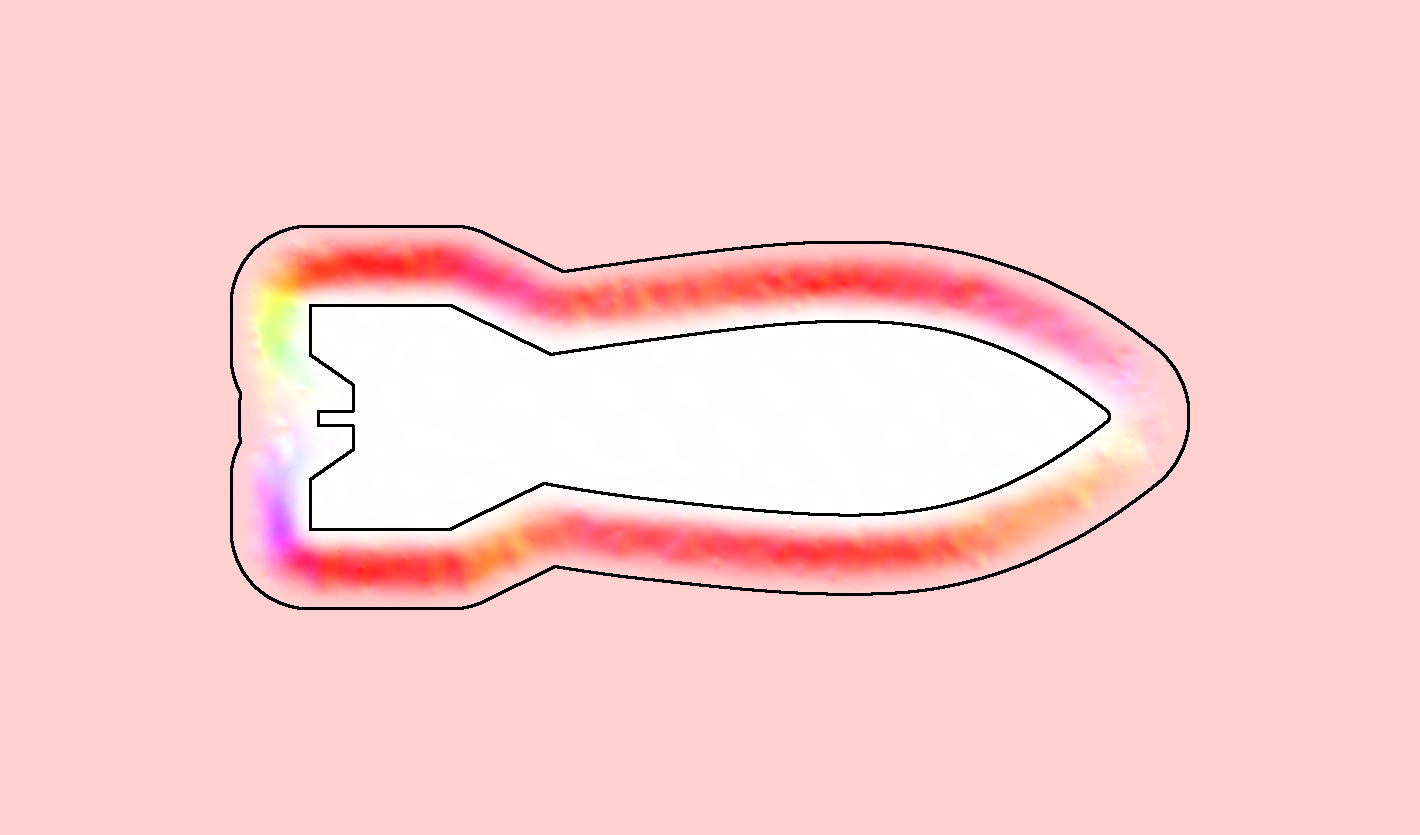}{Curl Kernel}%
  \hfill%
  \formattedgraphics{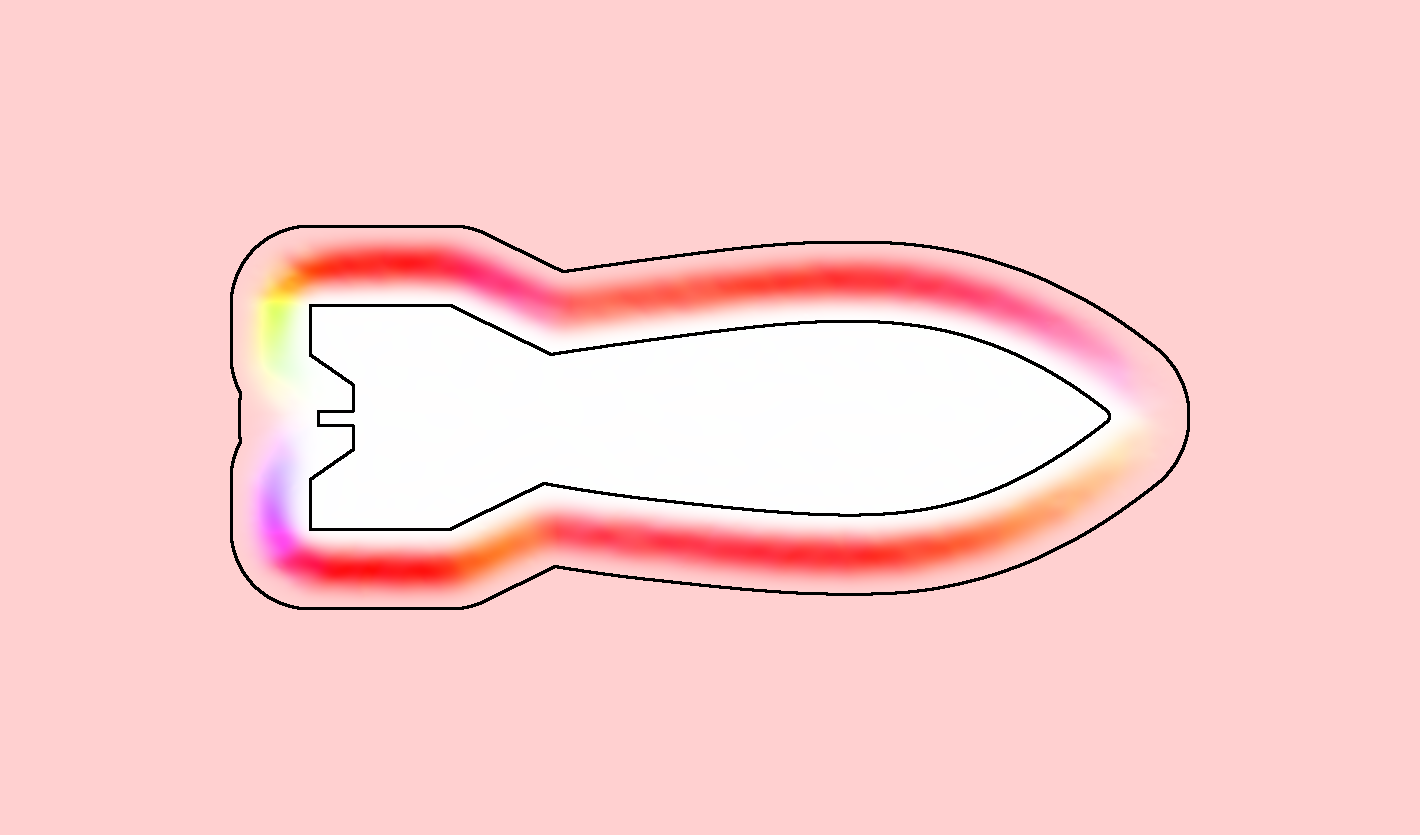}{\bfseries DFK-Wen4}%
  \\
  \vspace{-1em}
  \caption{Inpainting experiments of the missile, visualized using HSV color encoding, reveal clear differences in performance. Compared to DFK-Wen4, Curl SIREN fails to adhere to the solid boundary and lacks symmetry. Meanwhile, both DFK-Poly6 and Curl Kernel exhibit rough velocity fields due to kernel-based artifacts. We choose the same, large initial radii for all these kernels.}
  \label{fig:missile}
  \Description{missile experiments}
\end{figure}

\begin{figure}[t]
  \centering
  \newcommand{\formattedgraphics}[2]{\begin{overpic}[width=.23\linewidth,trim=5cm 15cm 5cm 12cm,clip]{#1}\put(3,53){\sffamily\scriptsize\color{white} #2}\end{overpic}}
  \formattedgraphics{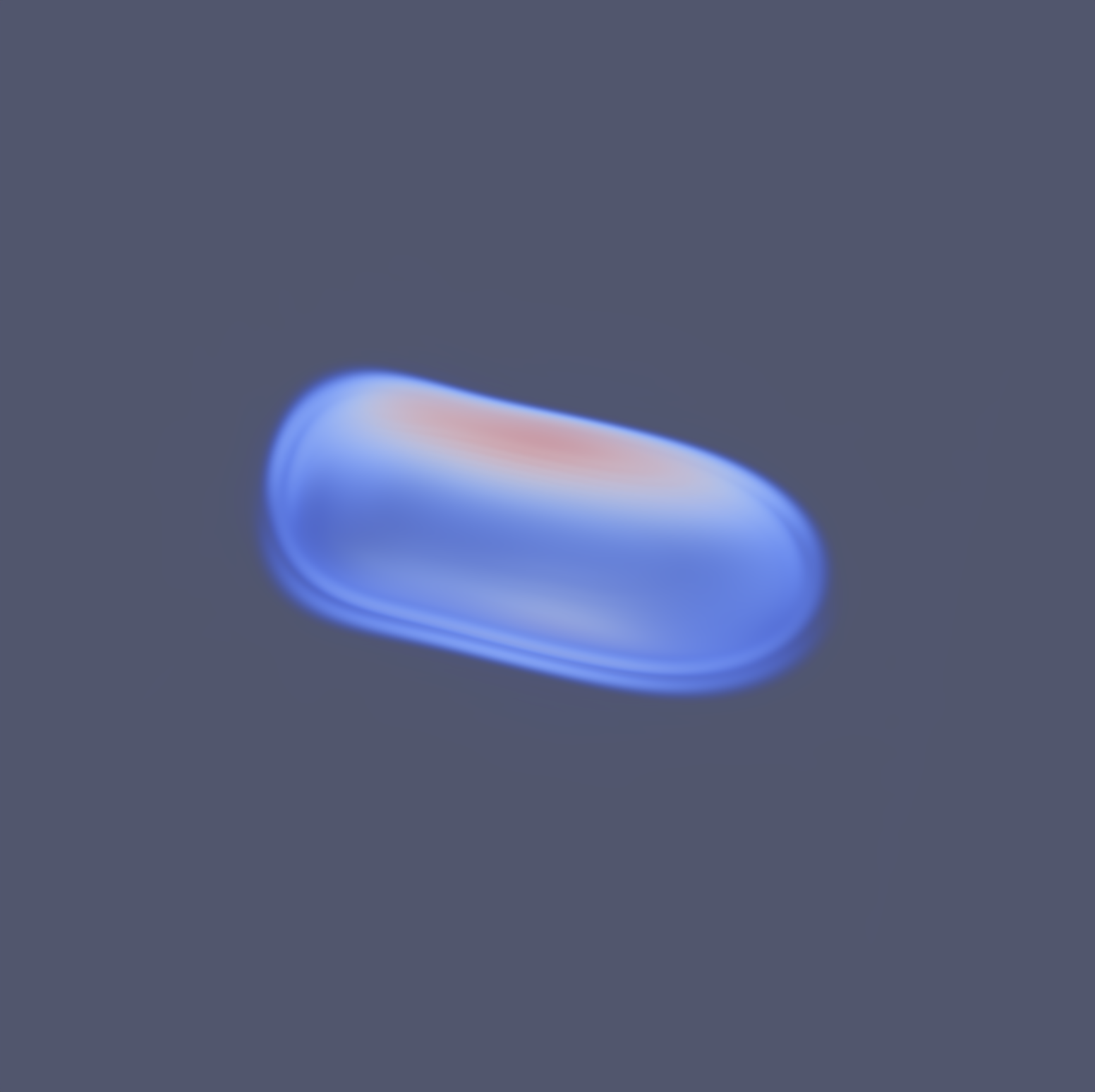}{Curl SIREN}%
  \formattedgraphics{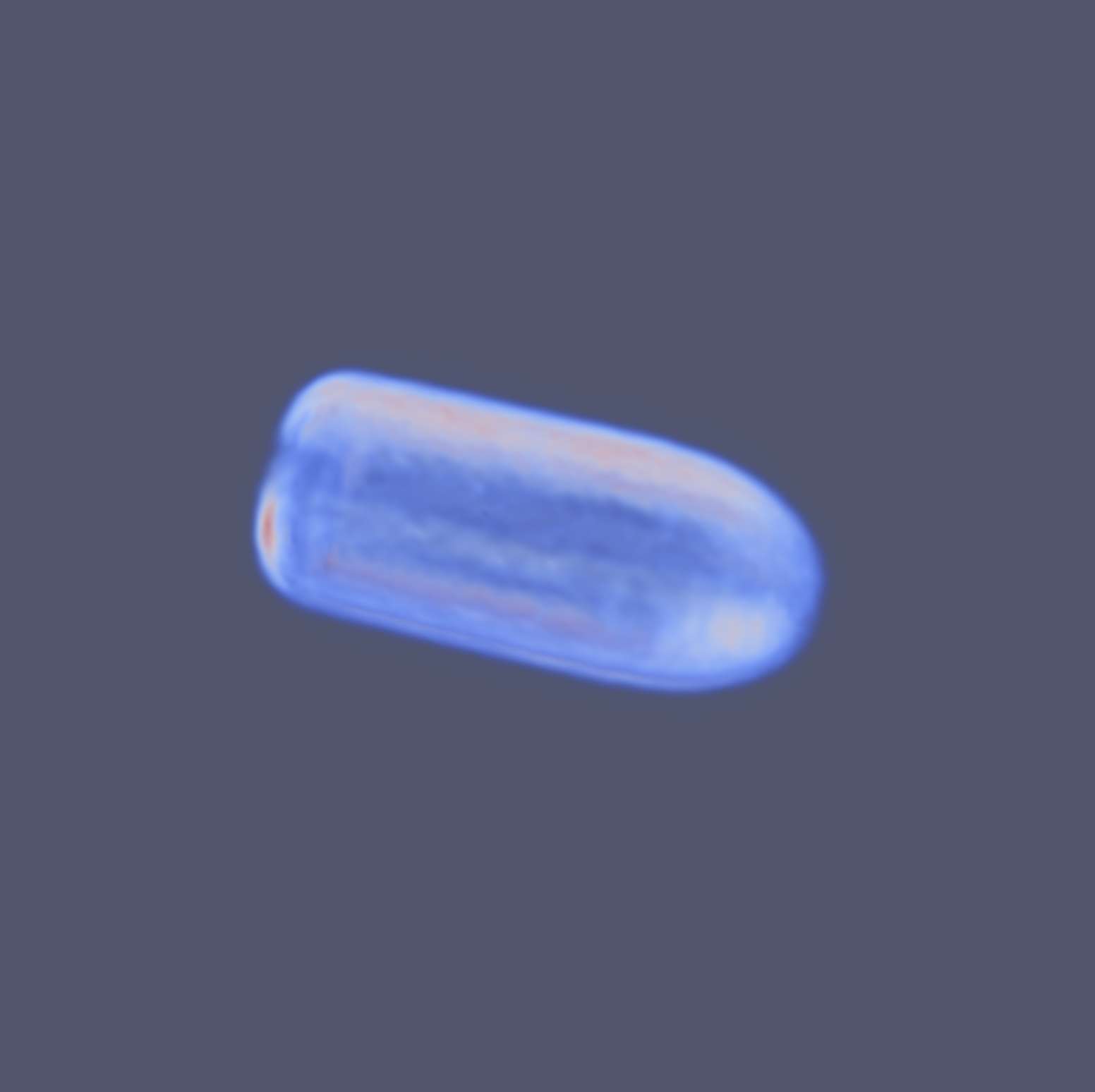}{Curl Kernel}%
  \formattedgraphics{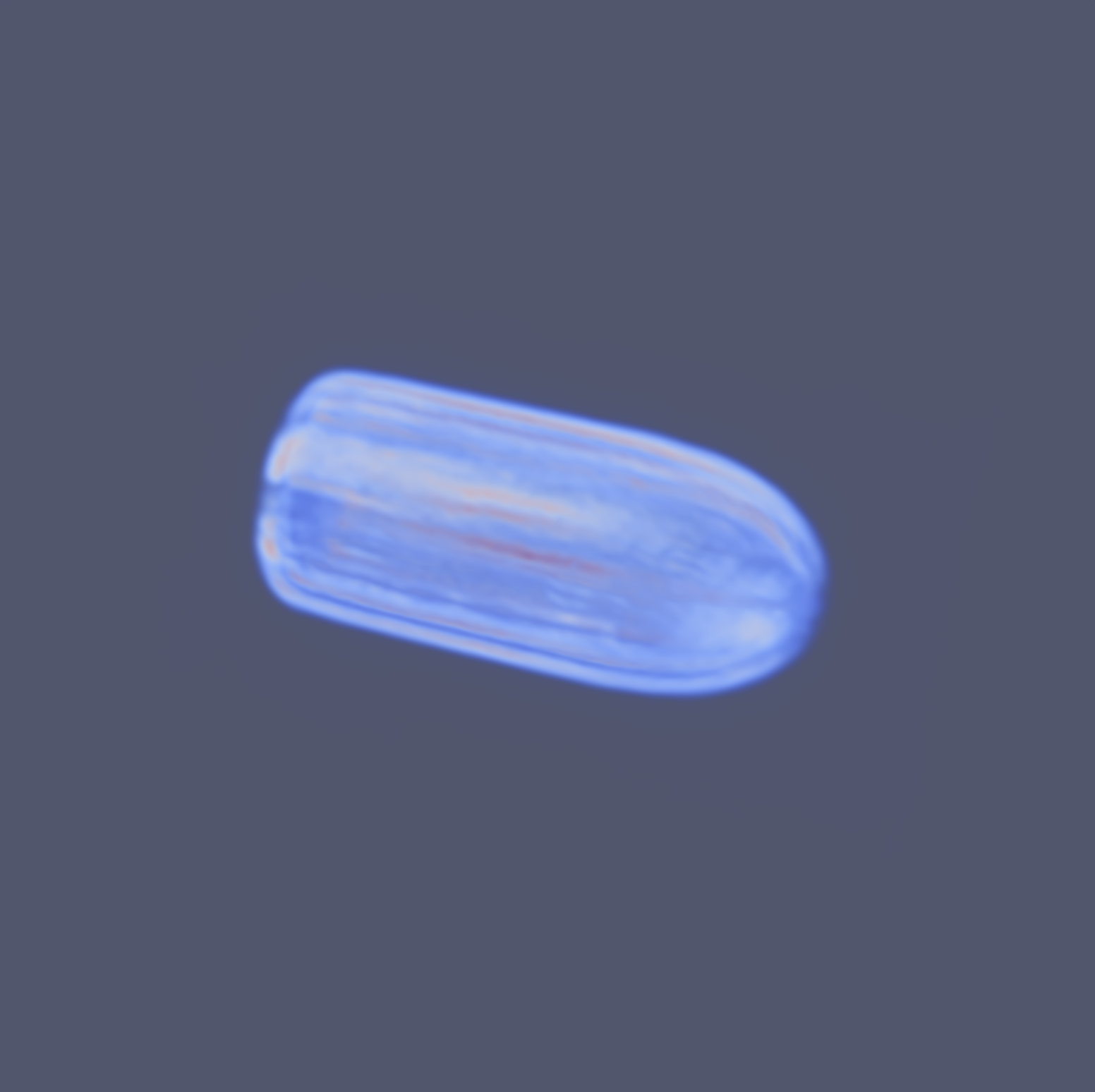}{\bfseries DFK-Wen4}%
  \\
  \vspace{-1em}
  \caption{Inpainting experiments of the bullet with the vorticity fields rendered. Curl SIREN fails to keep the rotational symmetry and performs poorly in fitting the shape of the solid boundary, while the result of Curl Kernel is very rough. DFK-Wen4 has the lowest fitting loss (see Tab.~\ref{tab:losses}) and overcomes these shortcomings.}
  \label{fig:bullet}
  \Description{bullet experiments}
\end{figure}

\subsubsection{Inpainting}
\label{sec:completion}

Inpainting is a powerful technique for interpolating unknown regions of a flow field from known data \cite{Ozdemir2024}. It enables the smooth integration of user-defined regions into existing flow fields and can automatically generate local perturbations caused by the introduction of new solid boundaries.

\paragraph{Laminar flows (2D)}
Inspired by \citet{Ozdemir2024}, we explore the task of ``stitching'' two uniform flow fields with a specified angular offset. As shown in Fig.~\ref{fig:laminar}, the regions inside the inner yellow rectangle and outside the outer yellow rectangle exhibit different velocity directions, while the area in between is the target region for intelligent inpainting. We evaluate the performance of Curl SIREN and DFK-Wen4 for completing under three angular offsets: \SI{0}{\degree}, \SI{45}{\degree}, and \SI{90}{\degree}. The results demonstrate that DFKs-Wen4 (\num{5416} kernels) not only better preserve the known velocity field but also generate more symmetric and accurate completions compared to Curl SIREN.

\paragraph{Missile (2D)}
In wind tunnel experiments, disturbances occur in the background flow due to obstacles, where data completion can provide a high-quality initial guess.
For this scenario, we define a target completion region around a 2D missile, marked by the area between two black contours in Fig.~\ref{fig:missile}. Outside this region, the flow remains uniformly leftward, while inside, the velocity is constrained to zero due to the solid boundary. Although Curl SIREN achieves the lowest loss value, it compromises boundary adherence. In contrast, our method with \num{5390} kernels not only aligns better with the boundary but also produces the smoothest and most symmetric inpainting, providing a more physically realistic flow reconstruction.

\paragraph{Bullet (3D)}
We conduct experiments to complete the flow field around a 3D bullet as well. As shown in Fig.~\ref{fig:bullet}, Curl SIREN fails to maintain symmetry, and the result of Curl Kernel is very rough.  DFK-Wen4 (\num{25325} kernels) generates the smoothest and most symmetric result. 

\subsubsection{Super-Resolution}
\label{sec:super-resolution}
Flow field super-resolution is an effective technique for overcoming challenges of data collection and storage while improving the accuracy and computational efficiency of numerical simulations.
By incorporating the continuity equation of incompressible flows as a physical prior, this method enables the recovery of fine-scale fluid details, even from highly compressed or low-resolution data.

\paragraph{Turbulence A/B (2D)}
%Following \citet{Stam1999}, 
We generate two turbulent flow fields caused by multiple velocity sources following \citet{Stam1999} on a $512\times512$ Cartesian grid, and then downsampled them to $64\times64$ data points.
As shown in Figs. \ref{fig:multisrc} and \ref{fig:mixing}, the DFK-Wen4 method excels at recovering even the finest details of the original fields, preserving both the smoothness and clarity of the flow structures.

\paragraph{Spherical obstacle (3D)}
To assess super-resolution with obstacles, we test using $32^3$ data sampled from a smoke simulation on a $128^3$ grid, where a sphere is positioned at the center.
%we validate the effectiveness of super-resolution with the existence of obstacles, using $32^3$ data sampled from the smoke simulation result on a $128^3$ grid, where a sphere is placed at the center.
Our method with \num{42002} points initialized delivers the best accuracy as shown in Tab.~\ref{tab:losses} and Fig.~\ref{fig:obsplume}.

\begin{figure}[t]
  \centering
  \setlength{\imagewidth}{0.163\textwidth}
    \newcommand{\formattedgraphics}[2]{%
      \begin{tikzpicture}
        \node[anchor=south west, inner sep=0] at (0,0){\includegraphics[width=\imagewidth,trim=7cm 10cm 3cm 0,clip]{#1}};
        \draw[red] (0.25\imagewidth, 0.2\imagewidth) rectangle (0.7\imagewidth, 0.65\imagewidth);
        \node[anchor=west] at (.01\imagewidth, .92\imagewidth) {\sffamily\scriptsize #2};
        \end{tikzpicture}%
    }
  \formattedgraphics{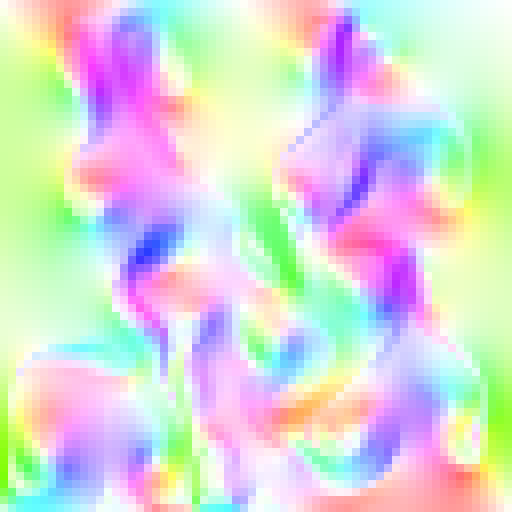}{Input}%
  \hfill
  \formattedgraphics{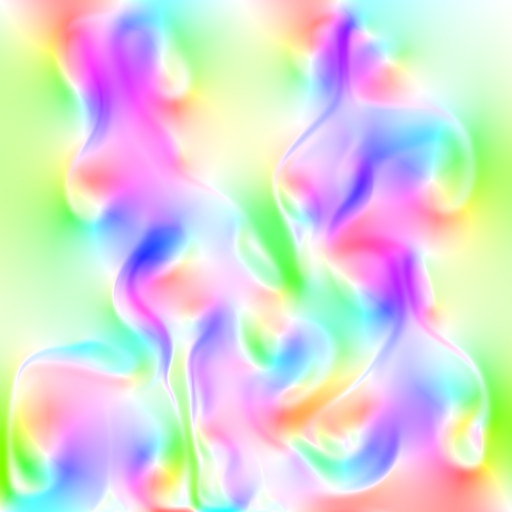}{Ground Truth}%
  \hfill
  \formattedgraphics{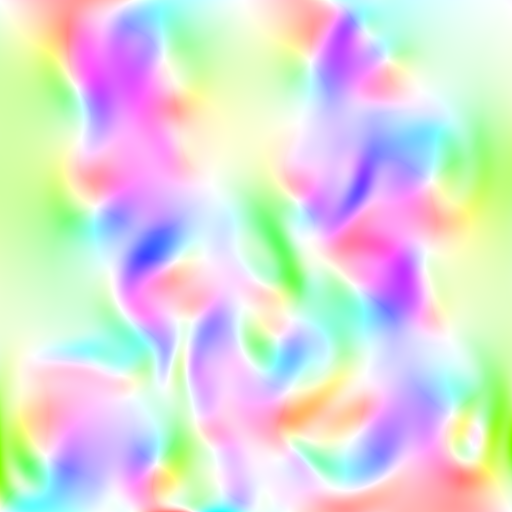}{Curl SIREN}%
  \hfill
  \formattedgraphics{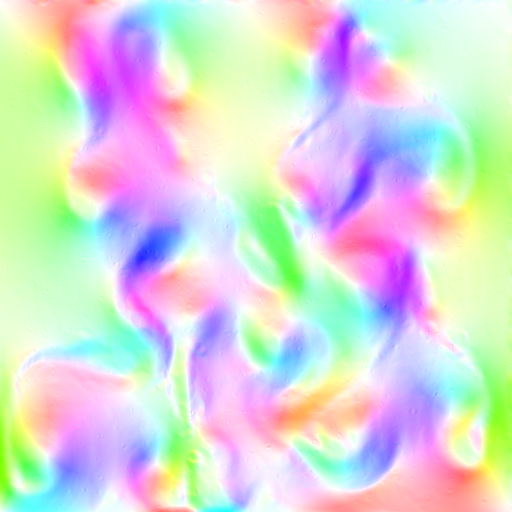}{DFK-Poly6}%
  \hfill
  \formattedgraphics{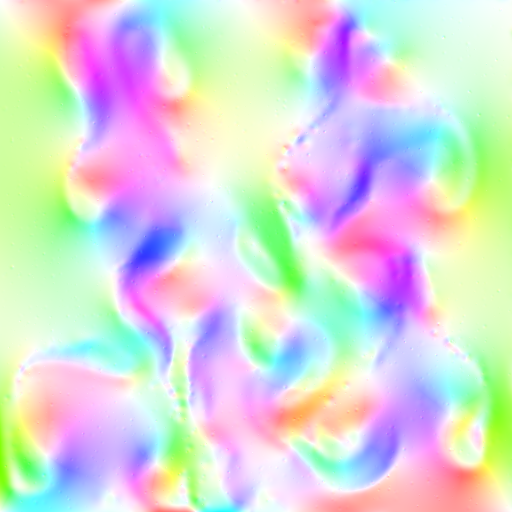}{Curl Kernel}%
  \hfill
  \formattedgraphics{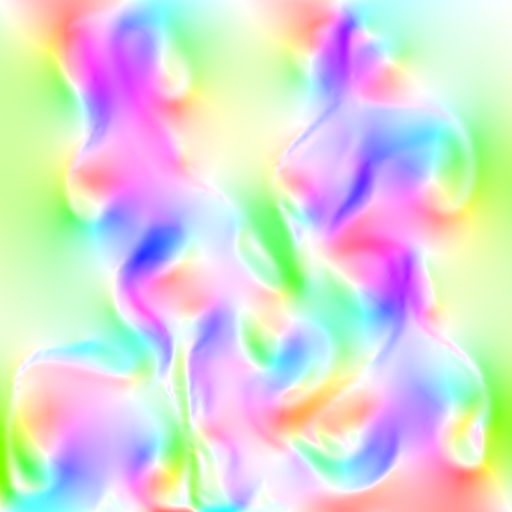}{\bfseries DFK-Wen4}%
  \\
  \vspace{-1em}
  \caption{Super-resolution experiments of the turbulence A, visualized using HSV color encoding, show notable performance differences.
  DFK-Wen4 is the only approach that recovers the extremly anisotropic, sharp structure in the red box.
  Note that the input field is very chaotic and asymmetric.
  The PSNR/SSIM values are as follows: Curl SIREN: 31.14/0.948; DFK-Poly6: 36.40/0.950; Curl Kernel: 37.08/0.965; \textbf{DFK-Wen4:} \textbf{40.93}/\textbf{0.987}.}
  \label{fig:multisrc}
  \Description{multisrc experiments}
\end{figure}

\begin{figure}[t]
  \centering
  \setlength{\imagewidth}{0.163\textwidth}
    \newcommand{\formattedgraphics}[2]{%
      \begin{tikzpicture}[spy using outlines={rectangle, magnification=3, connect spies}]
        \node[anchor=south west, inner sep=0] at (0,0){\includegraphics[width=\imagewidth,trim=4cm 2cm 2cm 4cm,clip]{#1}};
        \spy [red,size=.4\imagewidth] on (.6\imagewidth,.7\imagewidth) in node at (.25\imagewidth,.25\imagewidth);
        \node[anchor=west] at (.01\imagewidth, .92\imagewidth) {\sffamily\scriptsize #2};
        \end{tikzpicture}%
    }
  \formattedgraphics{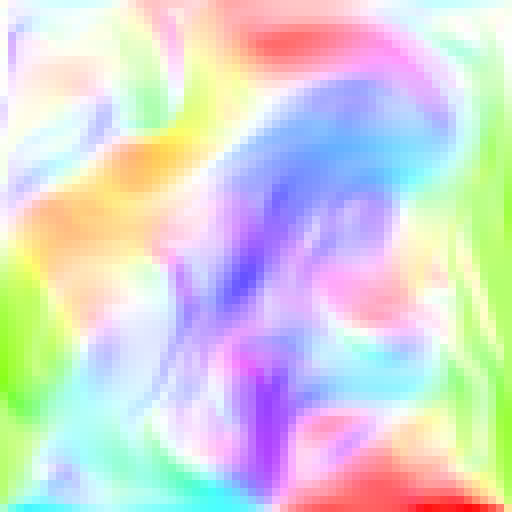}{Input}%
  \hfill
  \formattedgraphics{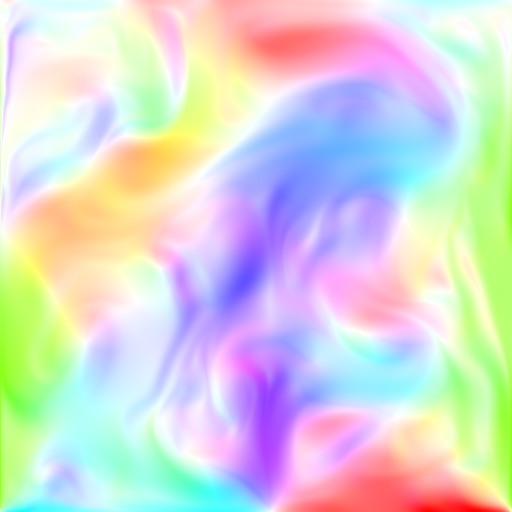}{Ground Truth}%
  \hfill
  \formattedgraphics{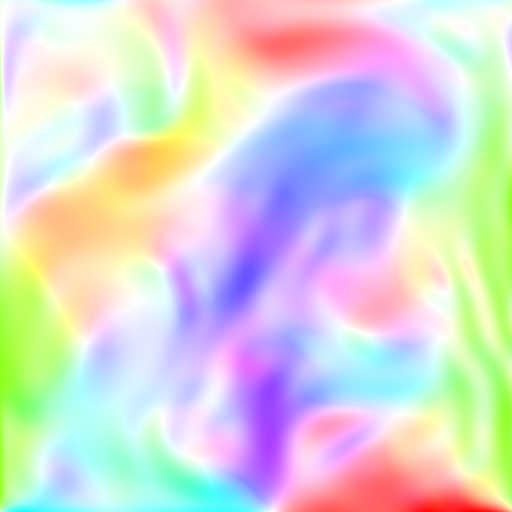}{Curl SIREN}%
  \hfill
  \formattedgraphics{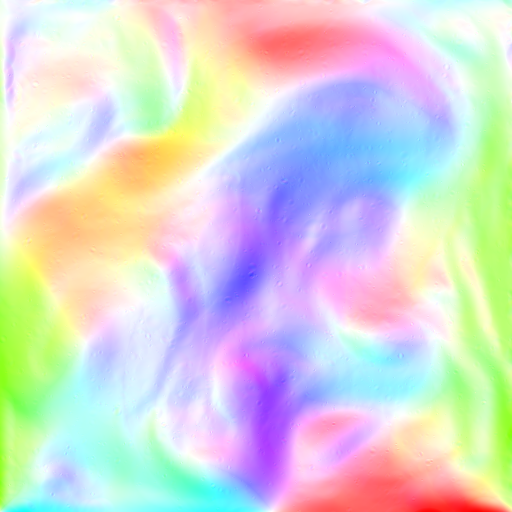}{DFK-Poly6}%
  \hfill
  \formattedgraphics{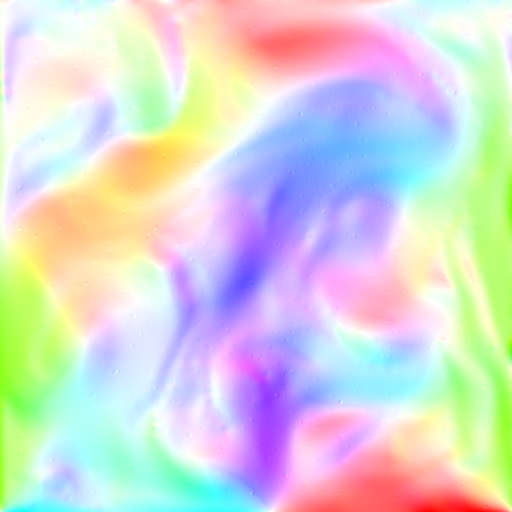}{Curl Kernel}%
  \hfill
  \formattedgraphics{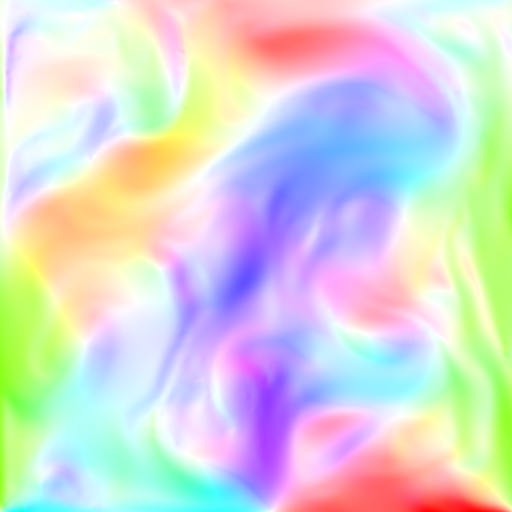}{\bfseries DFK-Wen4}%
  \\
  \vspace{-1em}
  \caption{Super-resolution experiments of turbulence B. The results are visualized using HSV color encoding, among which DFK-Poly6 and Curl Kernel, though recover the flow field globally, bring visible kernel-based artifacts.
  By optimizing DFKs-Wen4 with different radii and weights, we can successfully recover such chaotic and anisotropic patterns with high accuracy.
  The PSNR/SSIM values are as follows: Curl SIREN: 36.31/0.970; DFK-Poly6: 37.10/0.956; Curl Kernel: 38.10/0.969; \textbf{DFK-Wen4:} \textbf{41.05}/\textbf{0.991}.}
  \label{fig:mixing}
  \Description{mixing experiments}
\end{figure}

\begin{figure}[t]
  \centering
  \newcommand{\formattedgraphics}[2]{\begin{overpic}[width=.195\linewidth]{#1}\put(4,88){\sffamily\scriptsize\color{white} #2}\end{overpic}}
  \formattedgraphics{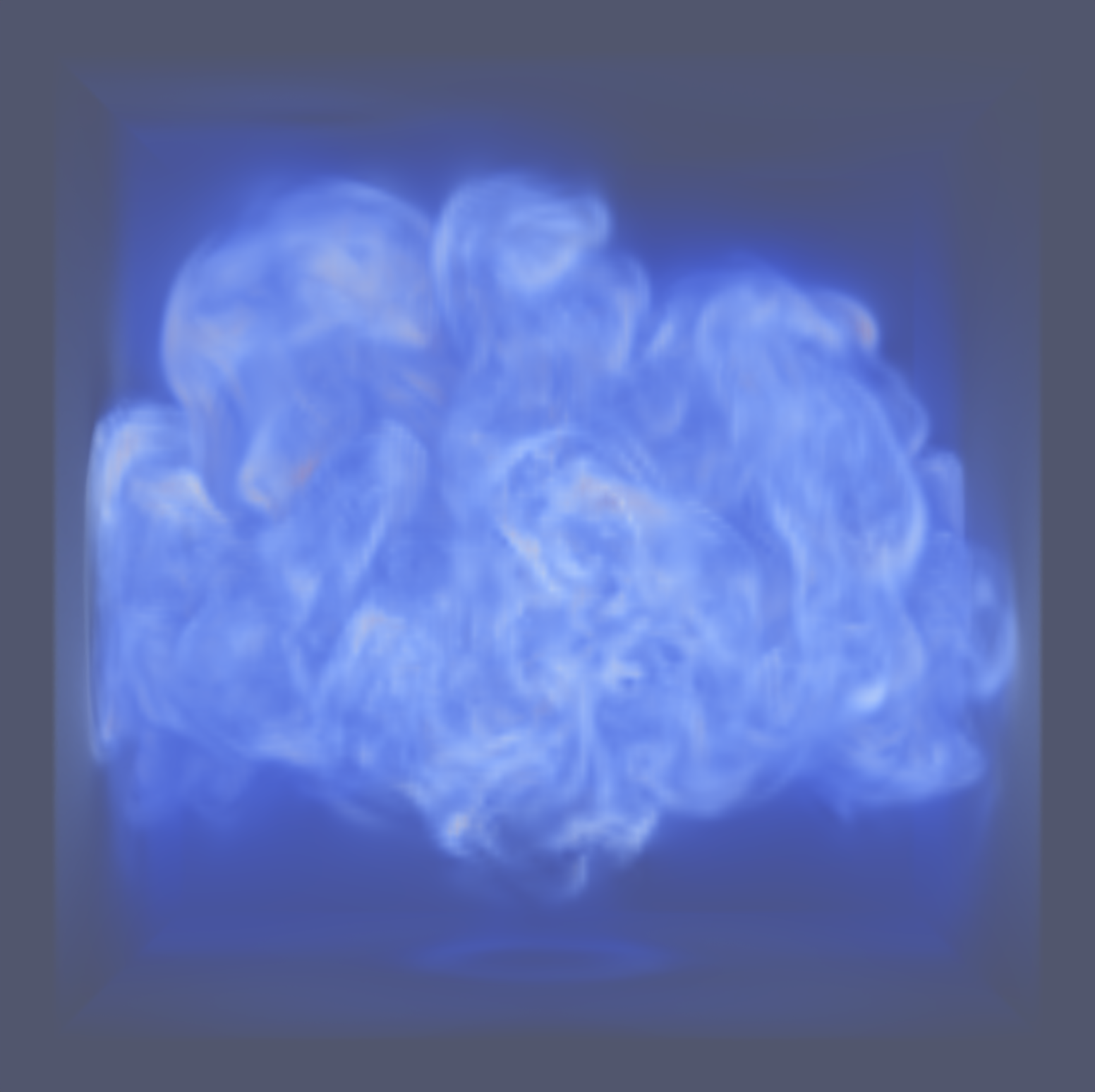}{Ground Truth}%
  \hfill
  \formattedgraphics{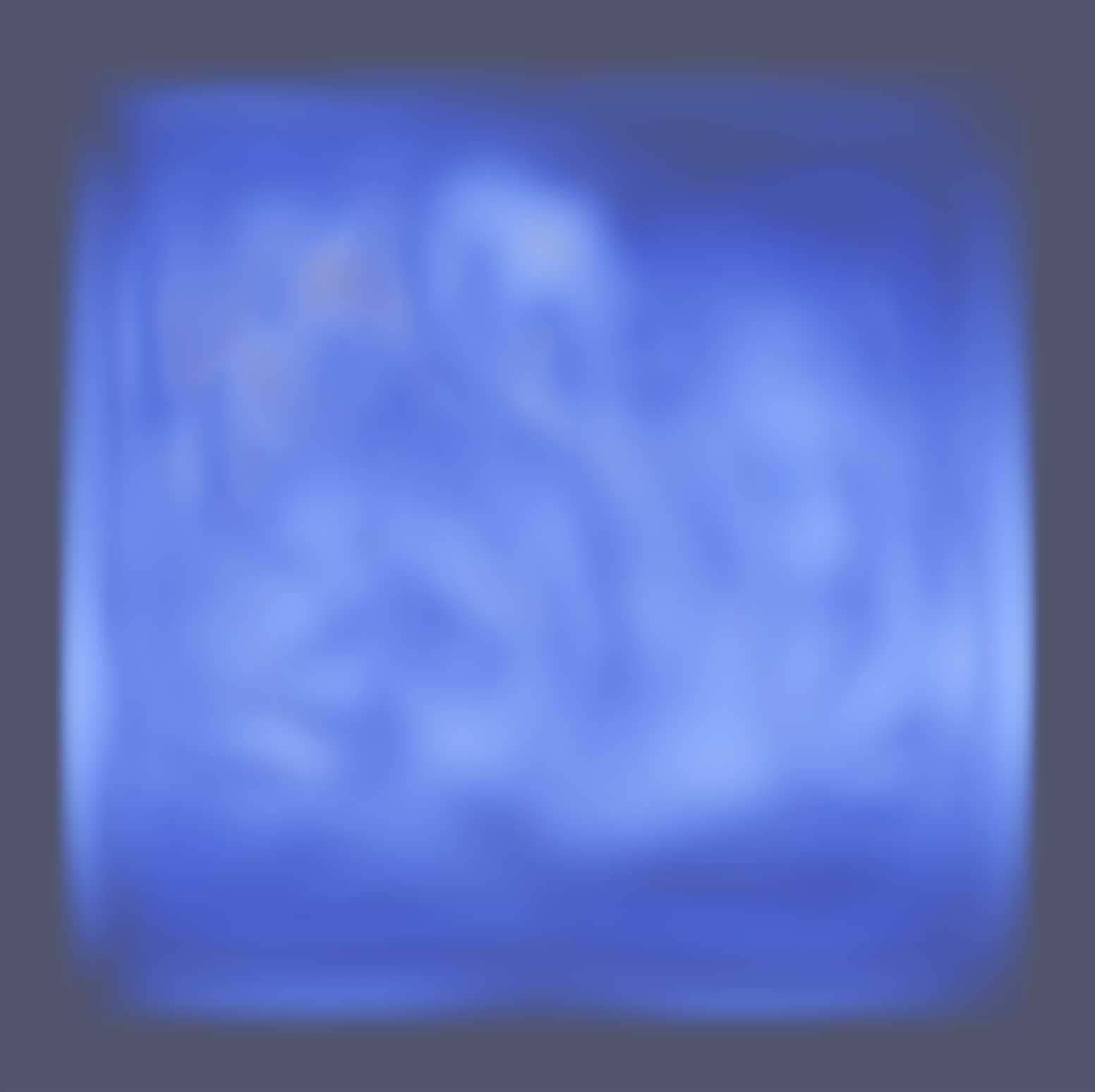}{Curl SIREN}%
  \hfill
  \formattedgraphics{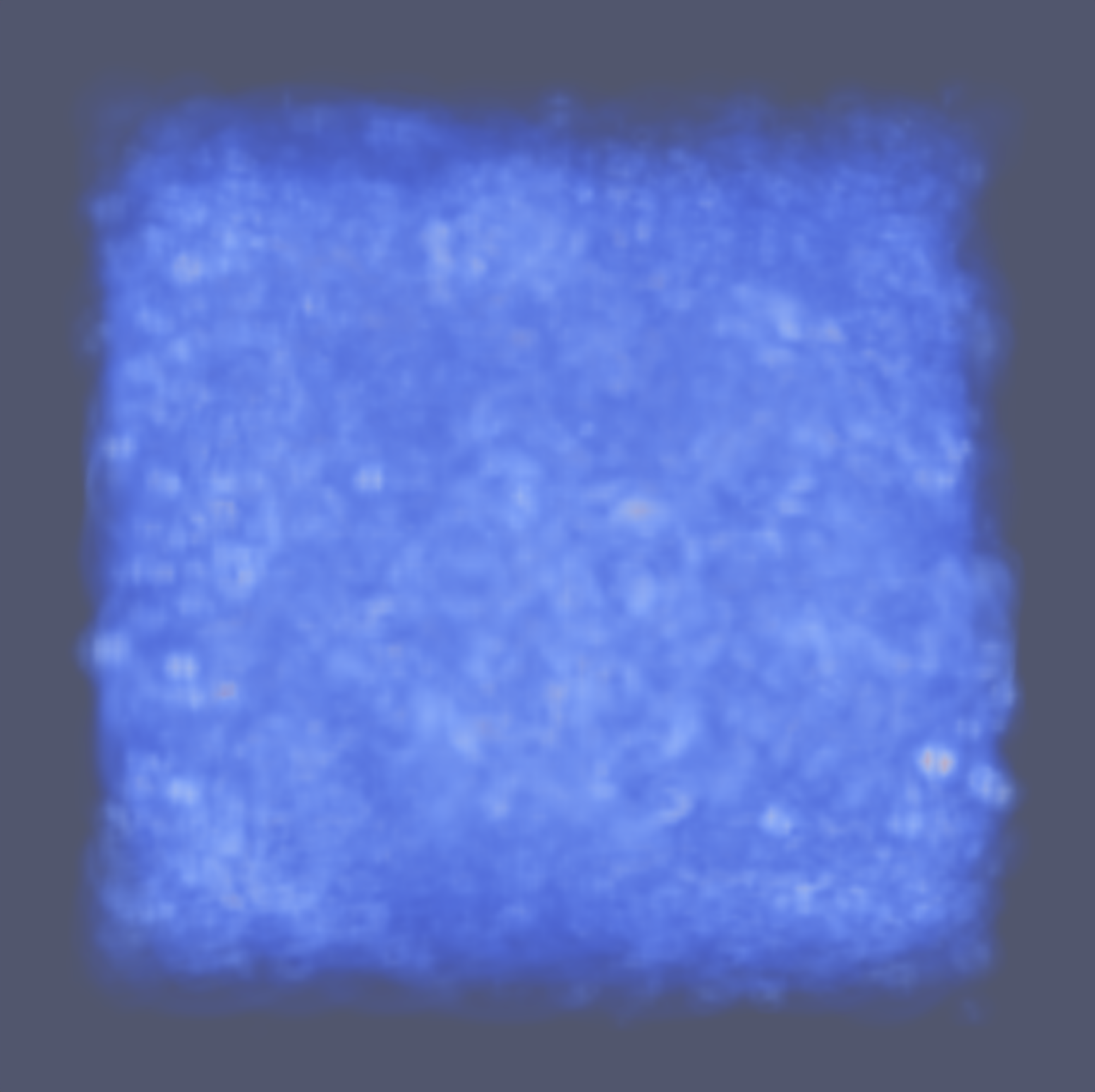}{DFK-Poly6}%
  \hfill
  \formattedgraphics{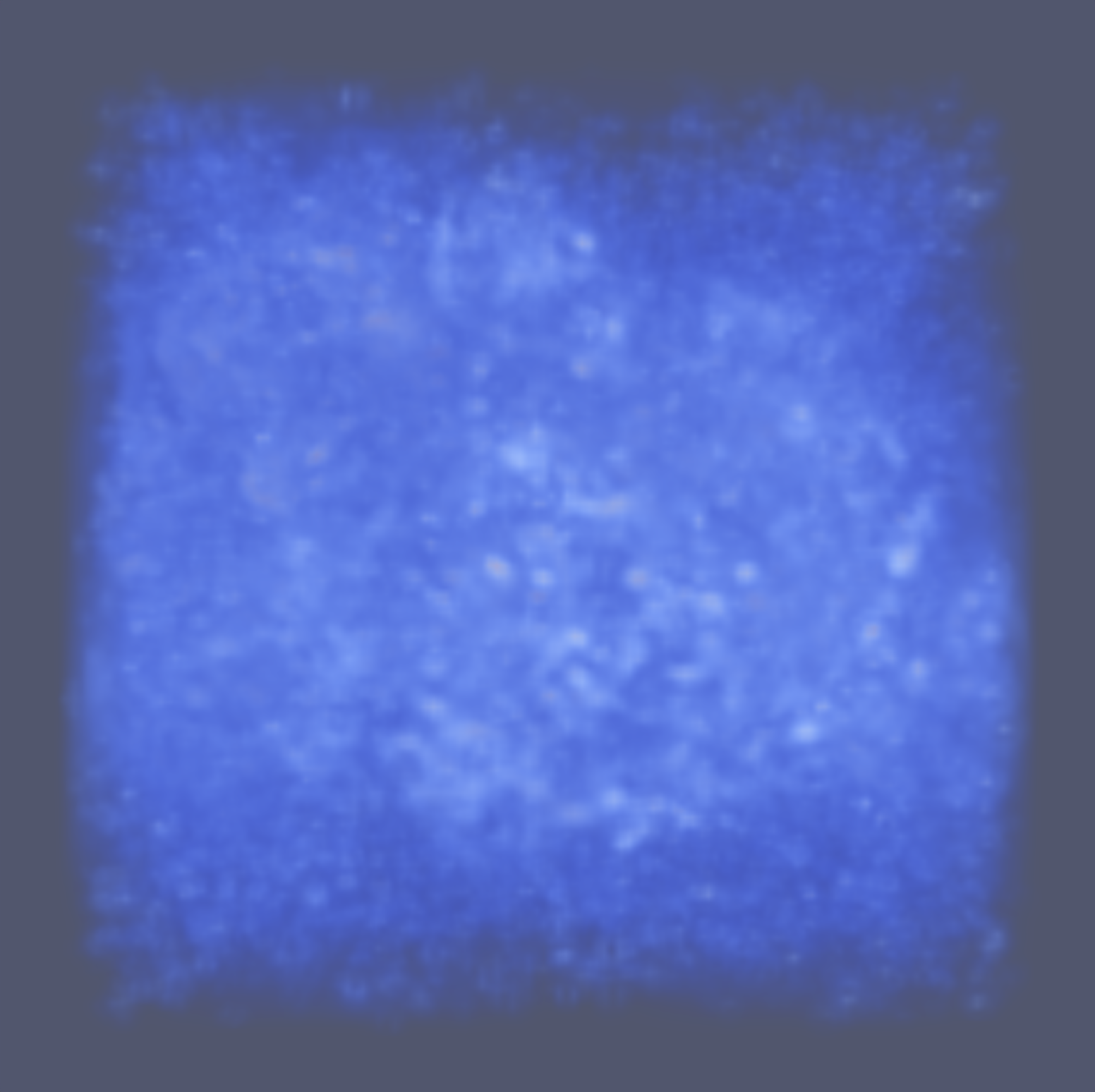}{Curl Kernel}%
  \hfill
  \formattedgraphics{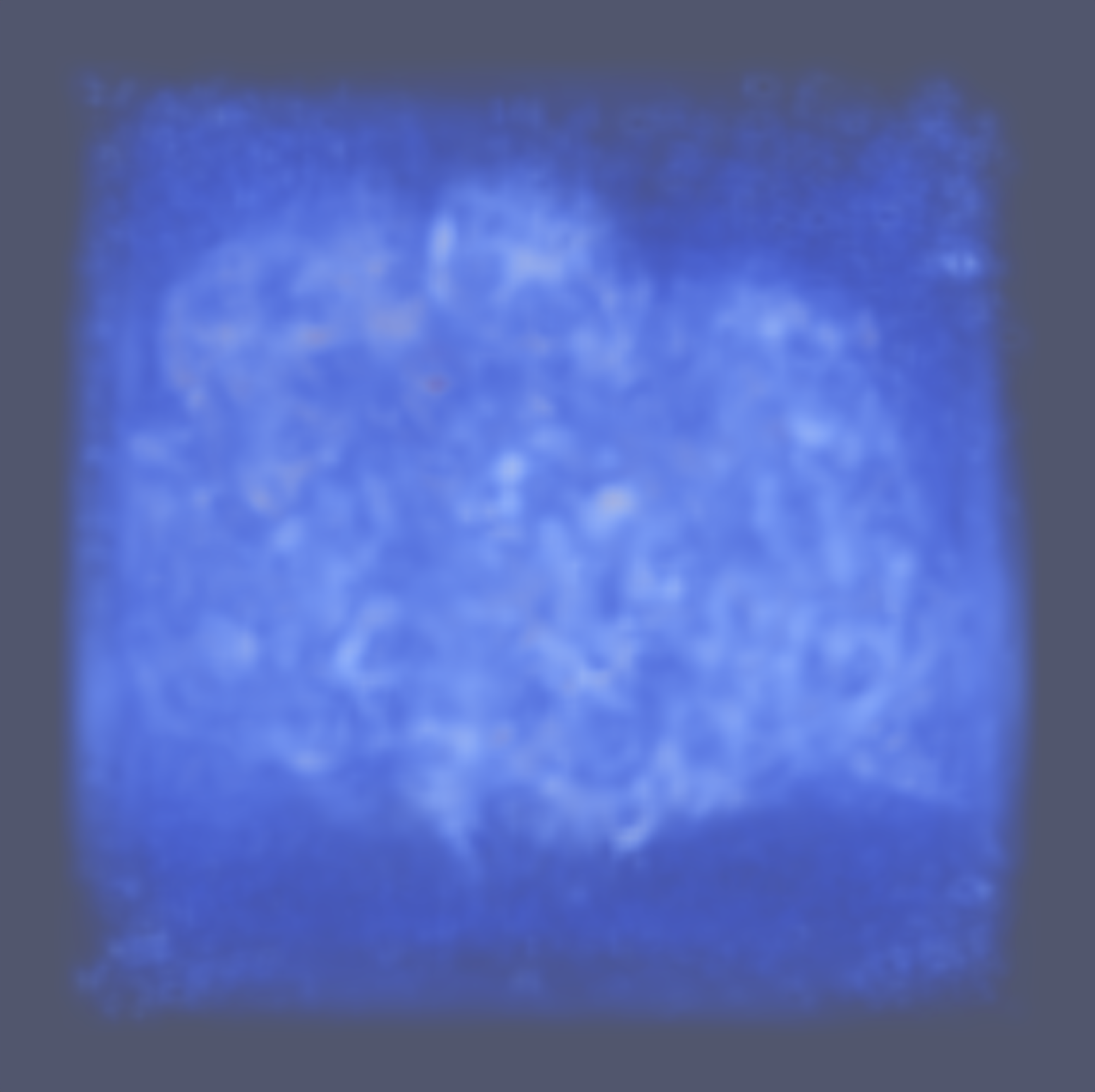}{\bfseries DFK-Wen4}%
  \\
  \vspace{-1em}
  \caption{Super-Resolution experiments of the flow field with a spherical obstacle. The vorticity fields are rendered. DFK-Wen4 shows the best super-resolution capability, especially in recovering the detailed vortices, while the result of Curl SIREN is overly diffuse and those of other kernels are noisy.}
  \label{fig:obsplume}
  \Description{obsplume experiments}
\end{figure}

\subsection{Inference of Time-Continuous Flows}

\label{sec:reconstruction}

Directly measuring a flow field is often challenging in practice. However, passive scalar fields advected by the flow, such as soot density or smoke color, are typically easier to capture with higher accuracy. When diffusion effects are negligible, the flow field can be reconstructed from the continuous evolution of these passive fields using the advection equation. In this context, the observational loss, $\mathcal{L}_\mathrm{obs}$, is defined as shown in Eq.~\eqref{eqn:advection_loss}.

To mitigate velocity noise in regions with weakly supervision and ensure the temporal continuity of the fields, we introduce regularization and discontinuity penalty terms into the loss function:
\begin{gather}
  \mathcal{L}=\mathcal{L}_\mathrm{obs}+\lambda_\mathrm{div}\mathcal{L}_\mathrm{div}+\lambda_\mathrm{bou}\mathcal{L}_\mathrm{bou}+\lambda_\mathrm{reg}\mathcal{L}_\mathrm{reg}+\lambda_\mathrm{con}\mathcal{L}_\mathrm{con}\text{,}\\
  \mathcal{L}_\mathrm{reg}=\frac{1}{V}\int_0^T\int_{\Omega}\left\Vert\bm{u}\right\Vert\mathrm{d}V\,\mathrm{d}t\text{,}\\
  \mathcal{L}_\mathrm{con}=\frac{1}{V}\int_0^T\int_{\Omega}\left\Vert\frac{\partial\bm{u}}{\partial t}\right\Vert\mathrm{d}V\,\mathrm{d}t\text{,}
\end{gather}
where the time derivative of $\bm{u}$ is estimated using central differences for second-order accuracy.
We found that setting
$\lambda_\mathrm{bou}=1$ and $\lambda_\mathrm{div}=\lambda_\mathrm{reg}=\lambda_\mathrm{con}=0.1$ provides a well-balanced weighting.
Note that for divergence-free representations, the value of $\lambda_\mathrm{div}$ is ineffective because $\mathcal{L}_\mathrm{div}$ is always zero.

In dynamic scenarios, a separate SIREN model is trained for each frame in the INR-based approaches.
%Unlike quasi-static situations, in each of the following experiments, we trained a separate SIREN model for each frame in the BPNN-based approaches. 
In contrast, kernel-based approaches use fixed kernel positions and radii that are shared across frames, while each frame is assigned a unique set of weights.

\begin{figure}[t]
  \centering
  \newcommand{\formatteddensity}[2]{\begin{overpic}[height=.195\linewidth,trim=9cm 2cm 9cm 0cm,clip]{#1}\put(3,5){\sffamily\scriptsize #2}\end{overpic}}
  \newcommand{\formattedvelocity}[2]{\begin{overpic}[width=.13\linewidth]{#1}\put(3,5){\sffamily\scriptsize #2}\end{overpic}}
  \formatteddensity{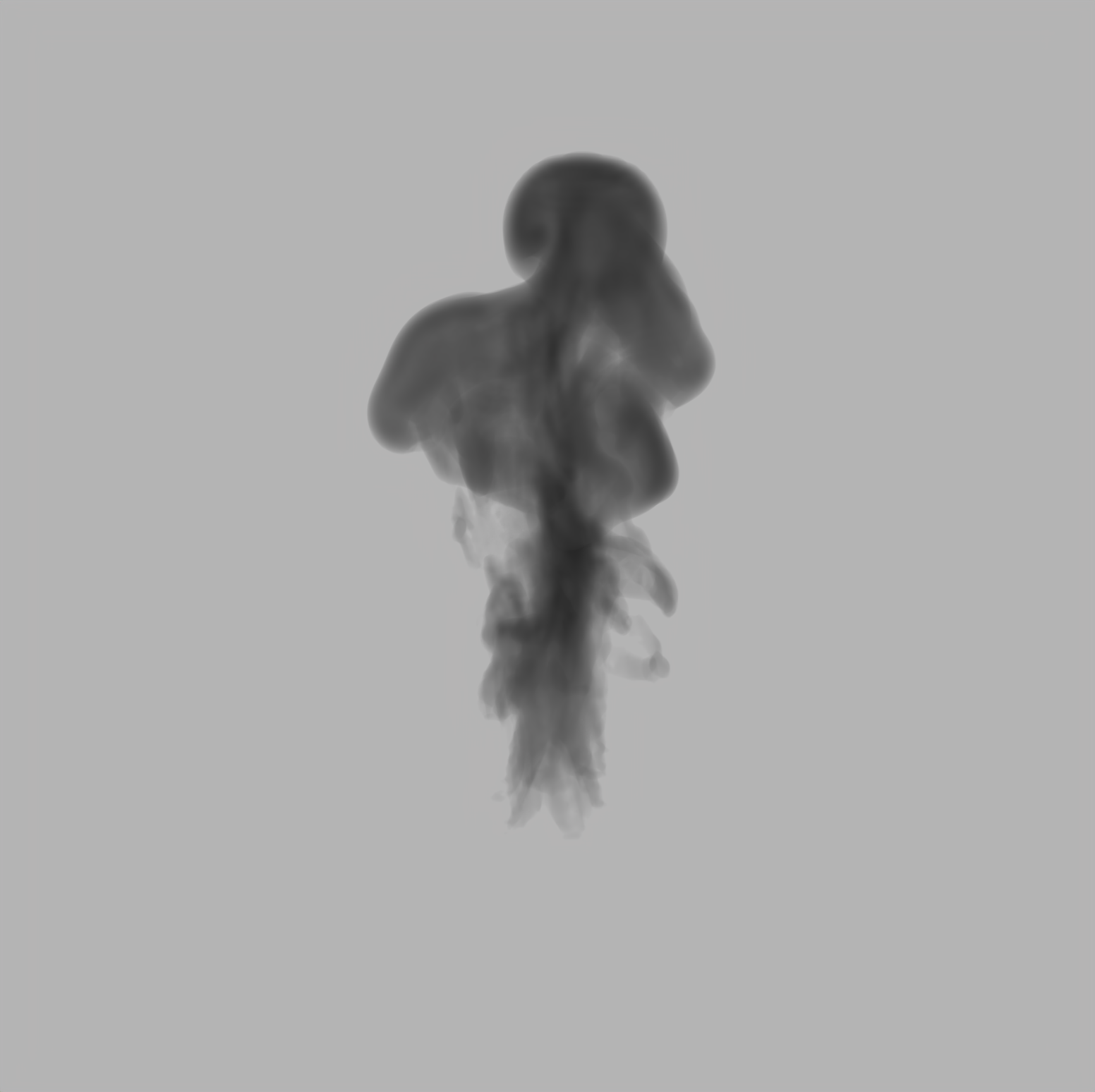}{Input ($t=\SI{1.2}{\second}$)}%
  \formattedvelocity{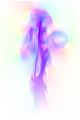}{Ground Truth}%
  \formattedvelocity{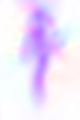}{SIREN}%
  \formattedvelocity{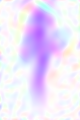}{Curl SIREN}%
  \formattedvelocity{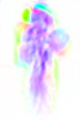}{\textbf{DFK-Wen4}}%
  \\
  \formatteddensity{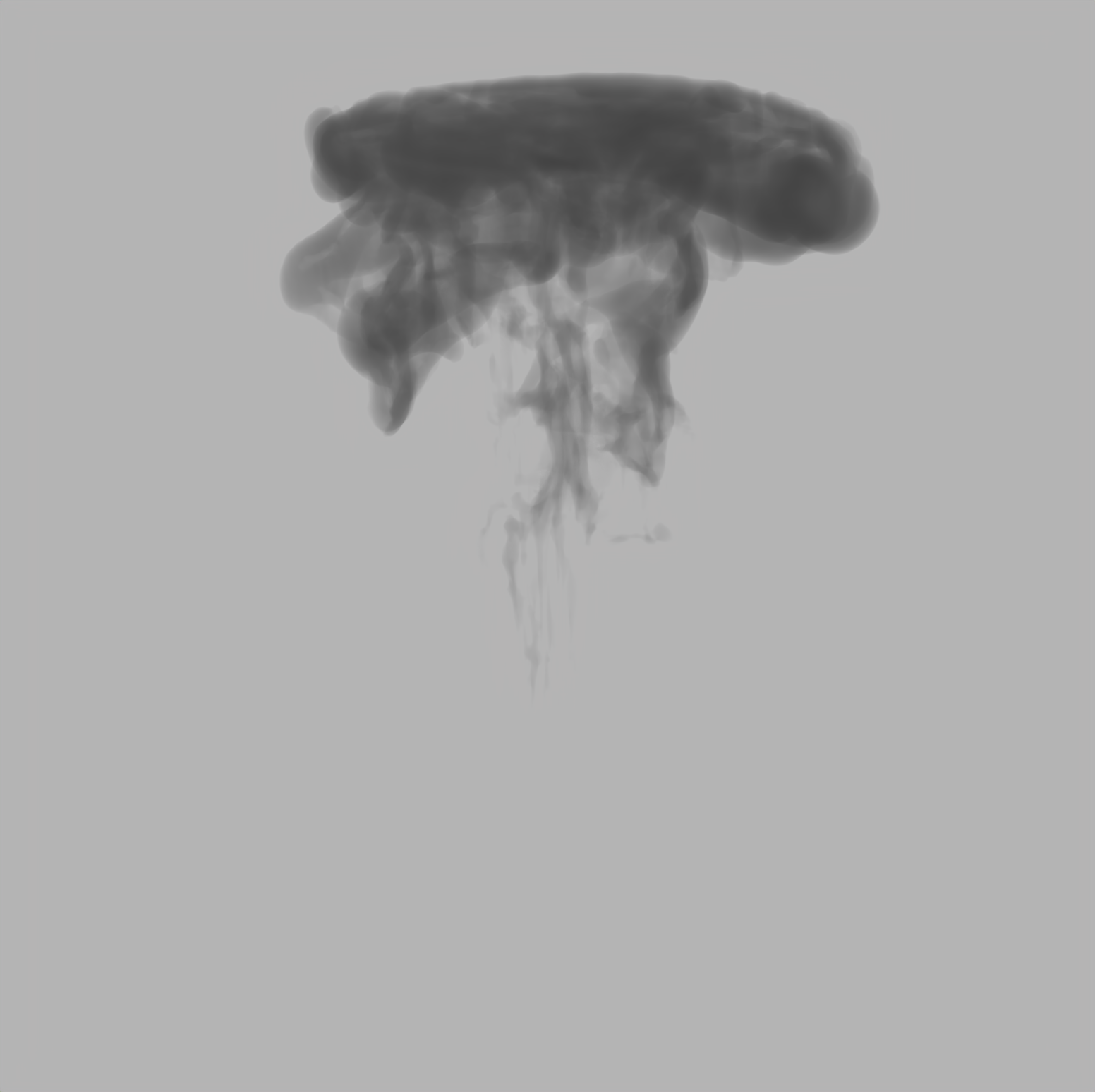}{Input ($t=\SI{3.2}{\second}$)}%
  \formattedvelocity{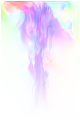}{Ground Truth}%
  \formattedvelocity{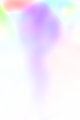}{SIREN}%
  \formattedvelocity{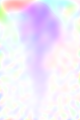}{Curl SIREN}%
  \formattedvelocity{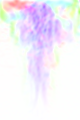}{\textbf{DFK-Wen4}}%  
  \\
  \vspace{-1em}
  \caption{Inference experiments of the rising smoke. The leftmost image depicts the rendered visualization of the input density field. The reconstructed velocity fields are shown using HSV color encoding, presented as slices viewed along the positive $x$-axis. The results demonstrate the detailed inference capabilities of DFK-Wen4, which uniquely captures rich flow structures.}
  \label{fig:rising}
  \Description{rising experiments}
\end{figure}

\subsubsection{From Passive Field Data}
\label{sec:inference-data}
We compare the performance of SIREN, Curl SIREN, and DFK-Wen4 in inferring the flow field from complete and accurate passive field data.

\paragraph{Rising (3D)}
We perform a smoke simulation on a $80\times120\times80$ Cartesian grid for $150$ frames and use the multi-frame density data (soot concentration) to infer the dynamic flow field. As illustrated in Fig.~\ref{fig:rising}, the DFK-Wen4 method (\num{26375} kernels, only $40\%$ of parameters compared to INRs) uniquely succeeds in inferring detailed vortices within the field, while network-based methods produce overly smooth results, missing critical fine-grained structures.

\paragraph{Teapot (3D)}
In a similar setup, the passive field data is acquired from the simulated motion of smoke over $100$ frames on a $192\times128\times128$ grid, with a teapot placed at the center of the scene (Fig.~\ref{fig:teapot}).
As demonstrated in the figure, while SIREN-based methods capture the global behaviors of the fluid, they fail to account for the presence of the obstacles and struggle to reconstruct the thin structures within the flow field.
Despite the hard divergence-free constraint, our method achieves $30\%$ less advection loss than INRs with only $60\%$ of trainable parameters.

\subsubsection{From Multi-View Videos}
\label{sec:inference-videos}
By integrating with a NeRF \cite{Mildenhall2021} frontend, our method can also infer the flow field from multi-view videos sequences of dynamic smoke. Here, the implementation follows PINF-Smoke \cite{Chu2022}.

\paragraph{Scalar flow (3D)}
For inferring flow fields from real captures, we use the ScalarFlow dataset \cite{Eckert2019}, which contains videos of buoyancy-driven rising smoke plumes.
Five cameras with fixed positions were evenly distributed across a \SI{120}{\degree} arc centered on the rising smoke. We use the middle $120$ frames from each camera view for fluid reconstruction. 
After training of PINF-Smoke, its resulting density field is fed as input into our method for further processing.
Fig.~\ref{fig:scalar} illustrates the reconstruction results, in which our method shows the best ability in capturing local vortices,
while the others only recover a general rising direction. The total loss of DFK-Wen4 is roughly $20\%$ less with only $60\%$ of trainable parameters compared to INRs.

\begin{figure}[t]
  \centering
  \setlength{\imagewidth}{.171\linewidth}
  \newcommand{\formatteddensity}[2]{\begin{overpic}[height=.113\linewidth,trim=0cm 6cm 0cm 6cm,clip]{#1}\put(3,5){\sffamily\tiny #2}\end{overpic}}
  \newcommand{\formattedvelocity}[2]{%
  % \begin{tikzpicture}
  %       \node[anchor=south west, inner sep=0] at (0,0){\includegraphics[width=\imagewidth]{#1}};
  %       \node[anchor=east] at (.99\imagewidth, .05\imagewidth) {\sffamily\tiny #2};
  % \end{tikzpicture}%
  \begin{overpic}[height=.113\linewidth]{#1}\put(3,5){\sffamily\tiny #2}\end{overpic}%
  }
  \formatteddensity{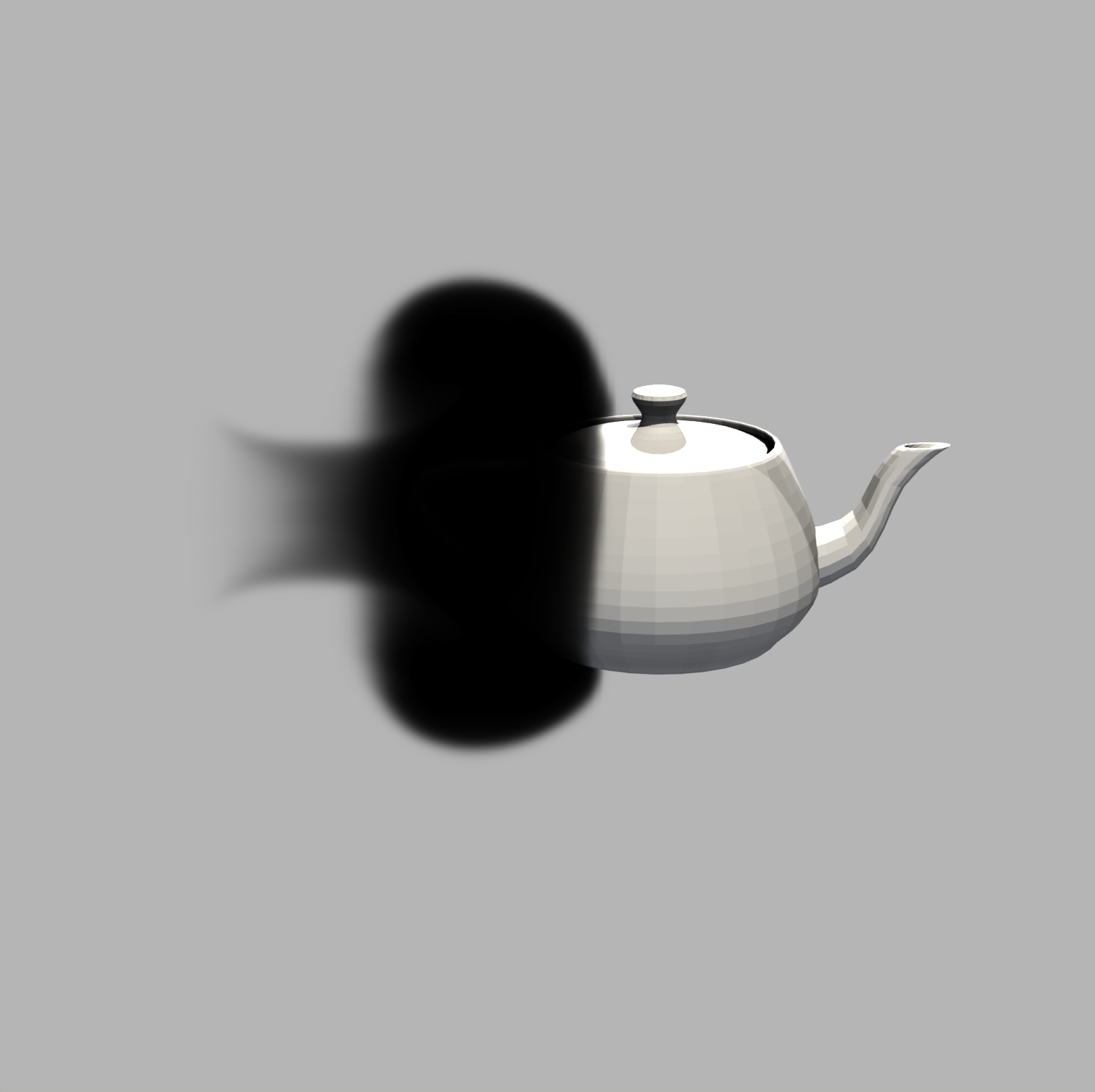}{Input ($t=\SI{0.92}{\second}$)}%
  \hfill
  \formatteddensity{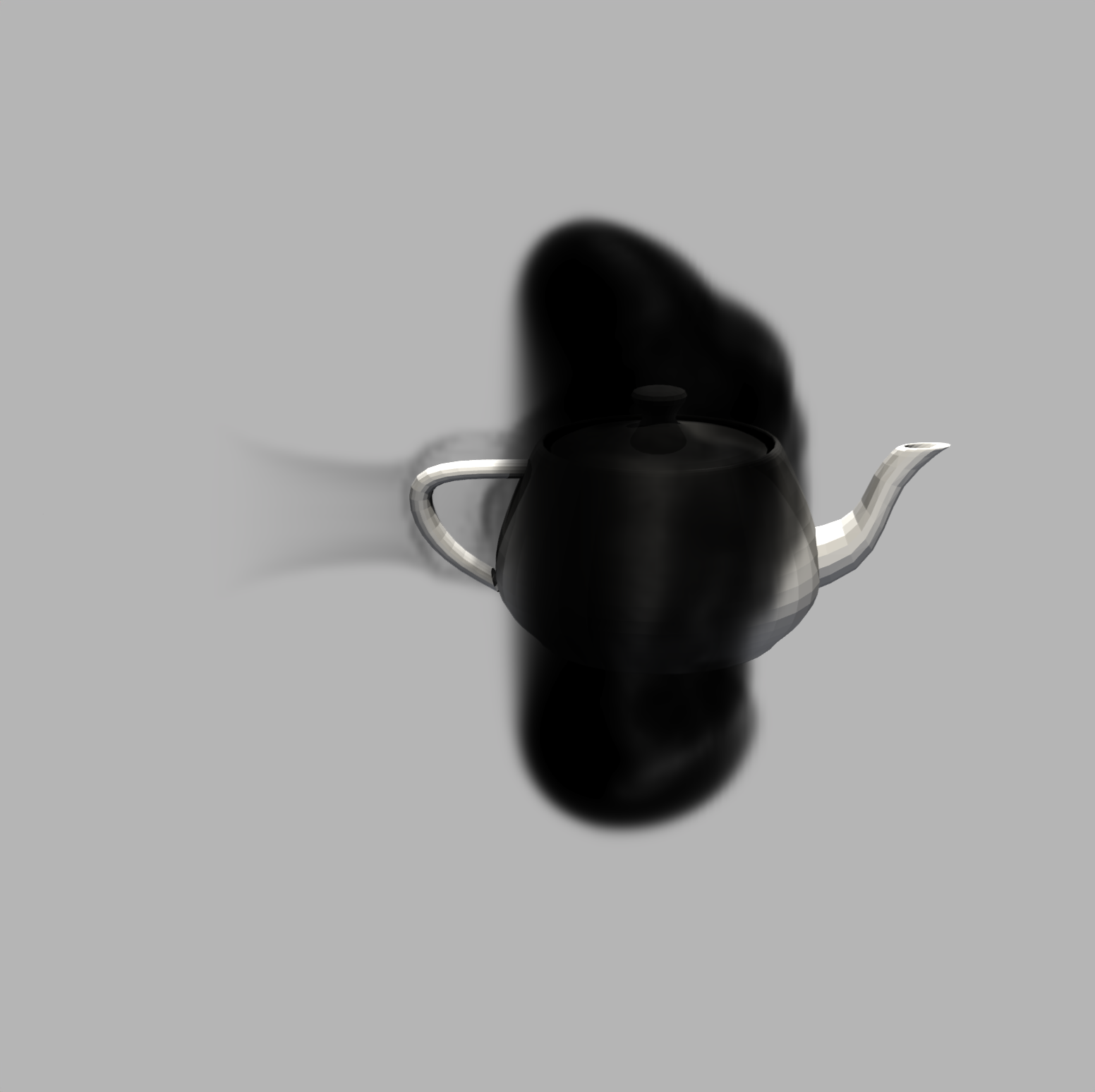}{Input ($t=\SI{2.0}{\second}$)}%
  \hfill
  \formattedvelocity{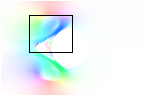}{G.T. ($t=\SI{1.2}{\second}$)}%
  \hfill
  \formattedvelocity{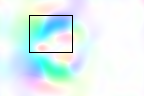}{SIREN ($t=\SI{1.2}{\second}$)}%
  \hfill
  \formattedvelocity{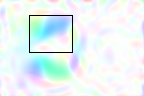}{Curl SIREN ($t=\SI{1.2}{\second}$)}%
  \hfill
  \formattedvelocity{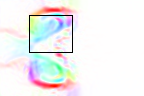}{\textbf{DFK-Wen4} ($t=\SI{1.2}{\second}$)}%
  \\
  \vspace{-1em}
  \caption{Inference experiments of the teapot. The two leftmost images illustrate the motion direction of the smoke and the position of the teapot. The flow fields are visualized using the HSV color encoding, with slices viewed along the negative $y$-axis (i.e., top-down view). Among the methods, only DFK-Wen4 accurately captures the influence of the obstacle, recovering the thin structures around it (marked with a black frame).
  }
  \label{fig:teapot}
  \Description{teapot experiments}
\end{figure}

\begin{figure}[t]
  \centering
  \newcommand{\formatteddensity}[2]{\begin{overpic}[height=.239\linewidth,trim=0cm 0cm 0cm 7cm,clip]{#1}\put(3,92){\sffamily\tiny\color{white} #2}\end{overpic}}
  \newcommand{\formattedvelocity}[2]{\begin{overpic}[height=.239\linewidth,trim=0cm 0cm 0cm 0cm,clip]{#1}\put(3,5){\sffamily\tiny #2}\end{overpic}}
  \formatteddensity{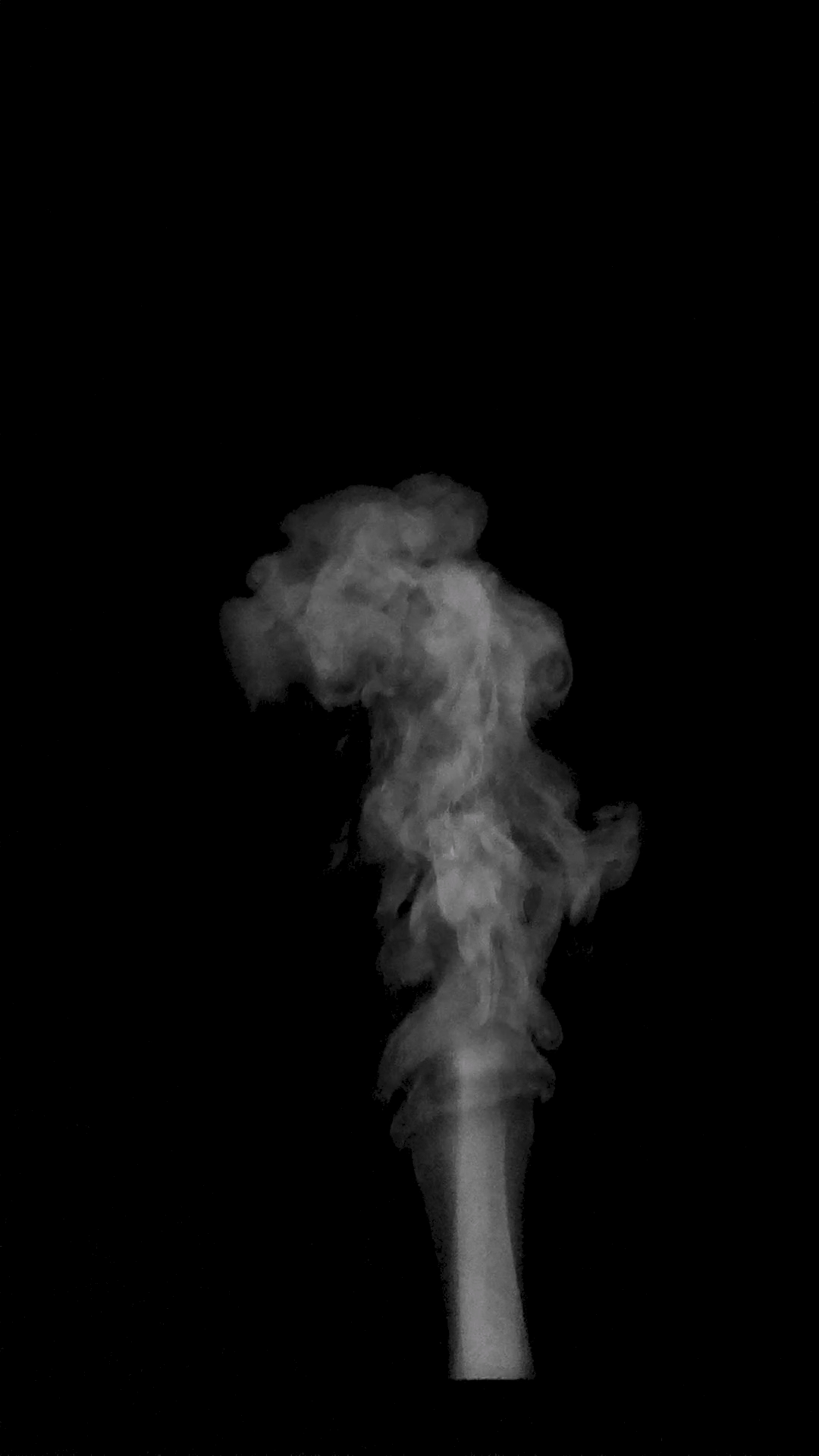}{Camera 1 ($t=\SI{4.8}{\second}$)}%
  \hfill
  \formatteddensity{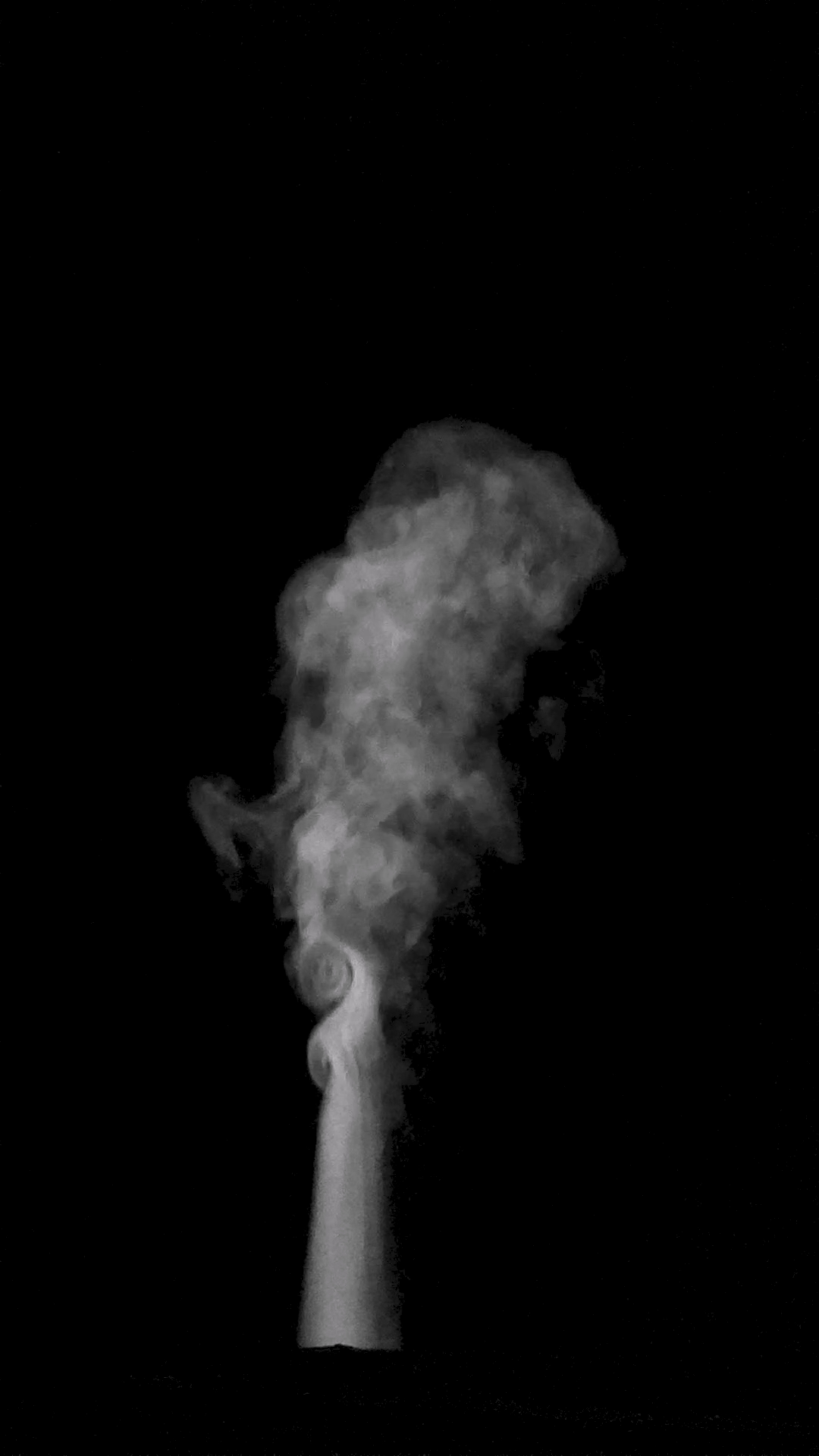}{Camera 2 ($t=\SI{4.8}{\second}$)}%
  \hfill
  \begin{overpic}[height=.239\linewidth,trim=6cm 2cm 6cm 0cm,clip]{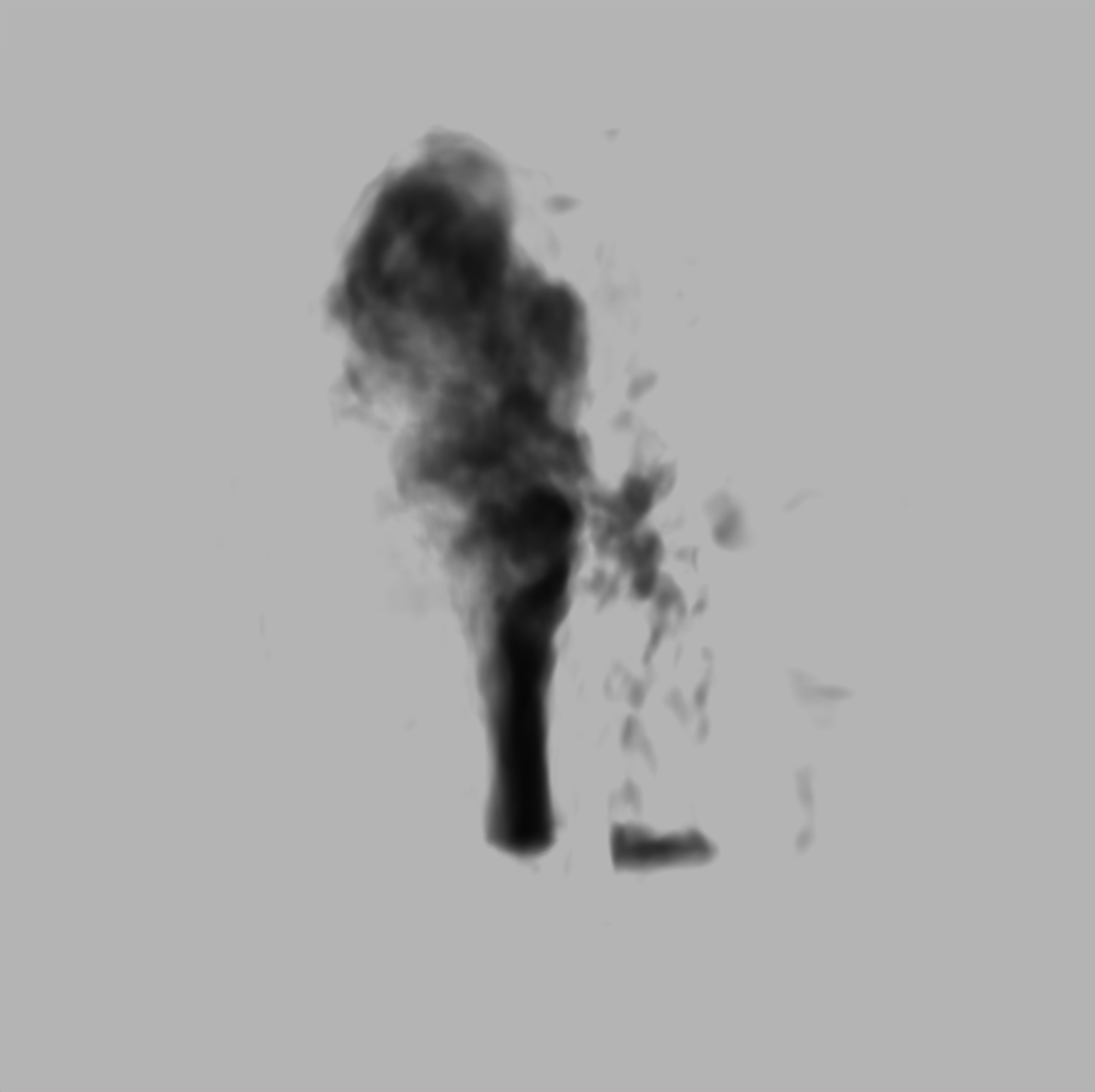}\put(3,5){\sffamily\tiny Input ($t=\SI{4.8}{\second}$)}\end{overpic}
  \hfill
  \formattedvelocity{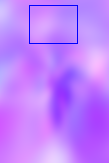}{PINF Smoke ($t=\SI{4.8}{\second}$)}%
  \hfill
  \formattedvelocity{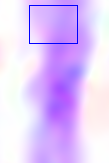}{SIREN ($t=\SI{4.8}{\second}$)}%
  \hfill
  \formattedvelocity{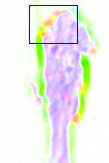}{\textbf{DFK-Wen4} ($t=\SI{4.8}{\second}$)}%
  \\
  \vspace{-1em}
  \caption{Inference experiments of the scalar flow. The two leftmost images are real-world captured from two of the five cameras, while the third one presents the rendered visualization of the input density field reconstructed using NeRF.
  Among the comparison results, only DFK-Wen4 recovers detailed vorticities at the edge of the smoke (marked with a blue frame). Note that PINF Smoke does not include the regularization term used in our loss function.}
  \label{fig:scalar}
  \Description{scalar experiments}
\end{figure}

\section{Conclusion \& Discussions}

% We have introduced FlowPEKs, a set of physics-embedded kernels specifically designed for flow field learning.
In this paper, we develop a new kernel-based framework for high-fidelity and physically accurate reconstruction of incompressible flow fields. Compared to recent approaches, our framework uniquely employs matrix-valued kernel functions---specifically DFKs-Wen4---as analytically divergence-free approximations of the velocity field.
%We have introduced DFKs-Wen4, a set of physics-embedded kernels particularly well-suited for flow field reconstruction.
DFKs-Wen4 offer several distinct advantages over alternative representations of incompressible velocity fields: their dipole nature aligns with fundamental flow solutions, enabling accurate modeling of core fluid dynamics; their compact support is ideal for capturing local vortex structures and handling boundary conditions; their positive definiteness ensures superior convergence rates, while their differentiability allows for smooth and continuous modeling of flow fields.
Across a variety of flow field reconstruction tasks, from fitting and extrapolation to inference, the framework employing DFKs-Wen4 consistently outperforms competing approaches. Consequently, we believe DFKs-Wen4 hold great promise as a foundational representation capable of replacing INRs for flow field reconstruction.

Beyond their application in inverse problems, DFKs-Wen4, with their pointwise divergence-free expressive power, strong fitting capabilities, and rapid convergence, exhibit exceptional potential for forward simulations. Experiments in \S\ref{sec:projection} have demonstrated their effectiveness in pressure computation, suggesting the feasibility of a fully pointwise incompressible mesh-free fluid simulation framework. Such a framework could unify representations for both forward and inverse tasks, fostering seamless integration of fluid generation and understanding.

\paragraph{Limitations}
A key limitation of this work is that our validation data is mainly derived from numerical simulations rather than real-world measurements.
Expanding the DFK-Wen4 representation to tackle interdisciplinary applications involving real-world data is a natural next step.
Besides, for simplicity, only no-slip boundary conditions are considered in this paper. 
While free-slip conditions can be incorporated by introducing dot products into $\mathcal{L}_\mathrm{bou}$, modeling free-surface flows remains an open challenge that warrants further investigation.
%The free-slip boundary conditions can be naturally enforced by incorporting dot products into $\mathcal{L}_\mathrm{bou}$, but the treatment of free-surfce flows need deeper investigation.

\paragraph{Future work}
Currently, the RBF underlying DFKs-Wen4 (i.e., Wendland's $\mathcal{C}^4$ polynomial) is isotropic. Inspired by recent advancements in 3DGS \cite{Kerbl2023}, adapting DFKs to use anisotropic ellipsoids could further enhance their expressive capabilities.
% Additionally, combining Gaussian ellipsoids for visible passive fields with FlowPEKs for underlying flow fields could facilitate a unified, point-based representation of fluid dynamics.
Moreover, the DFKs presented in this paper are restricted to flow fields in Euclidean space.
Extending DFKs to surface-based kernels \cite{Narcowich2007} and applying them to flows on manifolds%
%\xy{Utilizting DFKs on surfaces \cite{Narcowich2007} and }
%extending the framework to flows on manifolds
---such as those found in soap films or planetary atmospheres---opens up a compelling avenue for future exploration.

\bibliographystyle{ACM-Reference-Format}
\bibliography{refs}

\end{document}

% --- supplement: main_suppl.tex ---

\title{Representing Flow Fields with Divergence-Free Kernels for Reconstruction (Supplementary Document)}

\author{Xingyu Ni}
\email{nixy@pku.edu.cn}
\orcid{0000-0003-1127-2848}
\affiliation{
\institution{School of Computer Science, Peking University}
\city{Beijing}
\country{China}
}

\author{Jingrui Xing}
\email{xjr01@hotmail.com}
\orcid{0000-0001-7219-9969}
\affiliation{
\institution{School of Intelligence Science and Technology, Peking University}
\city{Beijing}
\country{China}
}

\author{Xingqiao Li}
\email{lixingqiao@pku.edu.cn}
\orcid{0000-0002-8131-6140}
\affiliation{
\institution{School of Intelligence Science and Technology, Peking University}
\city{Beijing}
\country{China}
}

\author{Bin Wang}
\email{binwangbuaa@gmail.com}
\orcid{0000-0001-9496-772X}
\affiliation{
\institution{Independent}
\city{Beijing}
\country{China}
}
\authornote{corresponding authors}

\author{Baoquan Chen}
\email{baoquan@pku.edu.cn}
\orcid{0000-0003-4702-036X}
\affiliation{
\institution{State Key Laboratory of General Artificial Intelligence, Peking University}
\city{Beijing}
\country{China}
}
\authornotemark[1]

\maketitle

\appendix

\section{Matrix-Valued Radial Basis Functions}

As proposed by \citet{Narcowich1994}, matrix-valued radial basis functions (RBFs) are used to represent a vector field $\bm{f}:\mathbb{R}^d\to\mathbb{R}^d$ by
\begin{equation}
  \widetilde{\bm{f}}(\bm{x})=\sum_{k=1}^N{\bm{\psi}(\bm{x}-\bm{x}_k)\,\bm{\alpha}_k}\text{,}
\end{equation}
where $S=\{\bm{x}_1,\bm{x}_2,\ldots,\bm{x}_N\}$ represents a set of scattered points, with corresponding vector weights $\bm{\omega}_1,\bm{\omega}_2,\ldots,\bm{\omega}_N\in\mathbb{R}^d$,
and $\bm{\psi}:\mathbb{R}^d\to\mathbb{R}^{d\times d}$ is the matrix-valued kernel.

\subsection{Positive Definiteness}

A continuous complex-valued matrix function $\psi:\mathbb{R}^d\to\mathbb{C}^{n\times n}$ is said to be \emph{positive semi-definite} if, for any set of pairwise distinct points $S=\{\bm{x}_1,\bm{x}_2,\ldots,\bm{x}_N\}\subset\mathbb{R}^d$ and any set of vector coefficients $\bm{\alpha}_1,\bm{\alpha}_2,\ldots,\bm{\alpha}_N\in\mathbb{C}^n$, the following inequality holds:
\begin{equation}
  \label{eqn:pd_func_matrix}
  0\le\sum_{j=1}^N{\sum_{k=1}^N{\overline{\bm{\alpha}_j}^{\top}\bm{\psi}(\bm{x}_j-\bm{x}_k)\,\bm{\alpha}_k}}\in\mathbb{R}\text{.}
\end{equation}
Furthermore, $\bm{\psi}(\bm{x})$ is said to be \emph{positive definite} if equality holds if and only if $\bm{\alpha}_1=\bm{\alpha}_2=\cdots=\bm{\alpha}_N=\bm{0}$.

A common approach to constructing matrix-valued positive definite functions is to apply second-order differential operators to scalar-valued positive definite functions \cite{Wendland2009}. To analyze the positive definiteness of such matrix-valued functions, we first establish Lemma~\ref{lem:drv2}.

\begin{lemma}
  \label{lem:drv2}
  Let $\bm{x}\in\mathbb{R}^d$ have components $x^1,x^2,\ldots,x^d$.
  Suppose $\psi\in\mathcal{C}^2(\mathbb{R}^d)\cap\mathcal{L}^1(\mathbb{R}^d)$ is a complex-valued function with Fourier transform $\widehat{\psi}(\bm{\omega})$. Then, the Fourier transform of its second-order partial derivatives satisfy
  \begin{equation}
    \mathcal{F}\left[\frac{\partial^2\psi}{\partial x^p\partial x^q}\right](\bm{\omega})=-\omega^m\omega^n\widehat{\psi}(\bm{\omega})\text{,\qquad}m,n=1,2,\ldots,d\text{.}
  \end{equation}
\end{lemma}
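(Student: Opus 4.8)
The plan is to reduce the second-order statement to the elementary first-order Fourier differentiation rule and apply it twice, exploiting $\partial^2\psi/\partial x^p\partial x^q=\partial/\partial x^p\bigl(\partial\psi/\partial x^q\bigr)$. Fixing the convention $\widehat{\psi}(\bm{\omega})=\int_{\mathbb{R}^d}\psi(\bm{x})\,e^{-\mathrm{i}\,\bm{\omega}\cdot\bm{x}}\,d\bm{x}$, I would first prove that for any $\psi\in\mathcal{C}^1\cap\mathcal{L}^1$ whose derivative $\partial\psi/\partial x^p$ is also in $\mathcal{L}^1$ one has $\mathcal{F}[\partial\psi/\partial x^p](\bm{\omega})=\mathrm{i}\,\omega^p\,\widehat{\psi}(\bm{\omega})$. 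Applying this identity once to $\psi$ and once to $\partial\psi/\partial x^q$ (which lies in $\mathcal{C}^1$ by the $\mathcal{C}^2$ hypothesis) produces the factor $(\mathrm{i}\,\omega^p)(\mathrm{i}\,\omega^q)=-\omega^p\omega^q$, giving the claim. The sign of the exponent in the convention is immaterial, since squaring $\pm\mathrm{i}$ yields $-1$ either way; the indices $m,n$ in the displayed statement are to be read as $p,q$.

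The first-order identity I would establish by Fubini's theorem followed by one-dimensional integration by parts. Writing $\bm{x}=(x^p,\bm{x}')$ with $\bm{x}'$ collecting the remaining coordinates, I split
\begin{equation}
  \mathcal{F}\!\left[\frac{\partial\psi}{\partial x^p}\right](\bm{\omega})=\int_{\mathbb{R}^{d-1}}\!\left(\int_{-\infty}^{\infty}\frac{\partial\psi}{\partial x^p}\,e^{-\mathrm{i}\,\omega^p x^p}\,dx^p\right)e^{-\mathrm{i}\,\bm{\omega}'\cdot\bm{x}'}\,d\bm{x}'\text{,}
\end{equation}
which is legitimate because $\partial\psi/\partial x^p\in\mathcal{L}^1(\mathbb{R}^d)$. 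Integrating the inner integral by parts transfers the $x^p$-derivative onto the exponential, yielding the multiplier $\mathrm{i}\,\omega^p$ together with the boundary term $\bigl[\psi(x^p,\bm{x}')\,e^{-\mathrm{i}\,\omega^p x^p}\bigr]_{x^p=-\infty}^{+\infty}$.

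The crux, and the step I expect to be the main obstacle, is showing that this boundary term vanishes; this is precisely where decay-at-infinity information hides, since $\mathcal{L}^1$-membership alone does not force pointwise decay. I would argue as follows: by Tonelli's theorem, for almost every $\bm{x}'$ the slice $\psi(\,\cdot\,,\bm{x}')$ and its derivative $\partial\psi/\partial x^p(\,\cdot\,,\bm{x}')$ both lie in $\mathcal{L}^1(\mathbb{R})$. Continuity of the derivative and the fundamental theorem of calculus then give $\psi(x^p,\bm{x}')=\psi(0,\bm{x}')+\int_0^{x^p}(\partial\psi/\partial x^p)(t,\bm{x}')\,dt$, so $\psi(x^p,\bm{x}')$ admits finite limits as $x^p\to\pm\infty$; integrability of the slice forces these limits to be $0$, eliminating the boundary term. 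With the first-order rule in hand the second application is immediate, and I would close by noting that integrability of both the intermediate and the top-order derivatives—implicit in the statement, since otherwise the left-hand Fourier transform is not classically defined—is exactly what legitimizes the two integrations by parts.
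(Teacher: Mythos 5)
Your proof is correct, and it reaches the paper's identity $-\omega^p\omega^q\widehat{\psi}(\bm{\omega})$ by the same underlying mechanism --- transferring the two derivatives onto the complex exponential by integration by parts --- but organized along a genuinely different route. The paper performs both integrations by parts directly on the $d$-dimensional integral, writing the integrand as a total derivative plus a remainder and discarding the terms $\int_{\mathbb{R}^d}\frac{\partial}{\partial x^p}[\cdots]\,\mathrm{d}\bm{x}$ by appeal to the divergence theorem; you instead factor the claim through the first-order rule $\mathcal{F}[\partial\psi/\partial x^p]=\mathrm{i}\,\omega^p\widehat{\psi}$ applied twice, proving that rule by Fubini reduction to one-dimensional slices. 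The substantive difference is your treatment of the boundary terms: the paper's invocation of the divergence theorem on all of $\mathbb{R}^d$ tacitly assumes decay at infinity, which $\mathcal{L}^1$-membership alone does not supply, whereas your Tonelli--plus--fundamental-theorem argument (a.e.\ slice and its derivative are integrable, hence the slice has limits at $\pm\infty$, hence those limits are zero) actually proves the required vanishing. You are also right to flag, as the paper does not, that integrability of the first- and second-order derivatives is an implicit hypothesis needed for the two integrations by parts and for the left-hand side to be classically defined; this is consistent with the integrability proviso that appears later in Theorem~\ref{thm:matrix_kernel}. The reading of the statement's $m,n$ as $p,q$ is the intended one. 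In short, your version costs a little more bookkeeping but closes a gap the paper leaves open; the paper's version is shorter at the price of an unjustified decay assumption.
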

\begin{proof}
The computation of the Fourier transform proceeds as follows:
\begin{align}
  \mathcal{F}\left[\frac{\partial^2\psi}{\partial x^p\partial x^q}\right](\bm{\omega})
  &=\int_{\mathbb{R}^d}{\frac{\partial}{\partial x^p}\left[\frac{\partial\psi}{\partial x^q}(\bm{x})\right]\mathrm{e}^{-i\bm{x}\cdot\bm{\omega}}\mathrm{d}\bm{x}}\notag\\
  &=\int_{\mathbb{R}^d}{\frac{\partial}{\partial x^p}\left[\mathrm{e}^{-i\bm{x}\cdot\bm{\omega}}\frac{\partial\psi}{\partial x^q}(\bm{x})\right]\mathrm{d}\bm{x}}-\int_{\mathbb{R}^d}{\frac{\partial\mathrm{e}^{-i\bm{x}\cdot\bm{\omega}}}{\partial x^p}\frac{\partial\psi(\bm{x})}{\partial x^q}\,\mathrm{d}\bm{x}}\notag\\
  &=-\int_{\mathbb{R}^d}{\frac{\partial\mathrm{e}^{-i\bm{x}\cdot\bm{\omega}}}{\partial x^p}\frac{\partial\psi(\bm{x})}{\partial x^q}\,\mathrm{d}\bm{x}}\notag\\
  &=-\int_{\mathbb{R}^d}{\frac{\partial}{\partial x^q}\left[\psi(\bm{x})\frac{\partial\mathrm{e}^{-i\bm{x}\cdot\bm{\omega}}}{\partial x^p}\right]\mathrm{d}\bm{x}}+\int_{\mathbb{R}^d}{\psi(\bm{x})\,\frac{\partial^2\mathrm{e}^{-i\bm{x}\cdot\bm{\omega}}}{\partial x^q\partial x^p}\,\mathrm{d}\bm{x}}\notag\\
  &=\int_{\mathbb{R}^d}{\psi(\bm{x})\,\frac{\partial^2\mathrm{e}^{-i\bm{x}\cdot\bm{\omega}}}{\partial x^q\partial x^p}\,\mathrm{d}\bm{x}}\notag\\
  &=-\omega^p\omega^q\int_{\mathbb{R}^d}{\psi(\bm{x})\,\mathrm{e}^{-i\bm{x}\cdot\bm{\omega}}\mathrm{d}\bm{x}}\notag\\
  &=-\omega^p\omega^q\widehat{\psi}(\bm{\omega})\text{,}
\end{align}
where the second and fourth equalities follow from integration by parts, and the third and fifth equalities follow from the divergence theorem.
\end{proof}

Let $\del=(\partial/\partial x^1,\partial/\partial x^2,\ldots,\partial/\partial x^d)$ denote the gradient operator in column vector form. We can now establish the following theorem regarding the construction of matrix-valued positive definite functions.
\begin{theorem}
  \label{thm:matrix_kernel}
  Suppose $\psi\in\mathcal{C}^2(\mathbb{R}^d)\cap\mathcal{L}^1(\mathbb{R}^d)$ is a scalar-valued positive definite function. Then, the matrix-valued functions
  \begin{align}
    \bm{\psi}_1(\bm{x})&=-\Delta\psi(\bm{x})\,\bm{I}\text{,}\\
    \bm{\psi}_2(\bm{x})&=-\del\del^{\top}\psi(\bm{x})\text{,}\\
    \bm{\psi}_3(\bm{x})&=-\Delta\psi(\bm{x})\,\bm{I}+\del\del^{\top}\psi(\bm{x})\text{,}
  \end{align}
  are positive definite as long as they are integrable over $\mathbb{R}^d$.
\end{theorem}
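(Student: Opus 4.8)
The plan is to pass to the Fourier domain and use the matrix-valued analogue of Bochner's theorem. Writing the inversion formula $\bm\psi(\bm x)=(2\pi)^{-d}\int_{\mathbb R^d}\widehat{\bm\psi}(\bm\omega)\,\mathrm e^{i\bm x\cdot\bm\omega}\,\mathrm d\bm\omega$, substituting into the defining form~\eqref{eqn:pd_func_matrix}, and collecting the exponentials into the vector-valued sum $\bm g(\bm\omega)=\sum_{k=1}^N\bm\alpha_k\,\mathrm e^{-i\bm x_k\cdot\bm\omega}$, I would obtain
\begin{equation}
  \sum_{j=1}^N\sum_{k=1}^N\overline{\bm\alpha_j}^\top\bm\psi(\bm x_j-\bm x_k)\,\bm\alpha_k=\frac1{(2\pi)^d}\int_{\mathbb R^d}\overline{\bm g(\bm\omega)}^\top\,\widehat{\bm\psi}(\bm\omega)\,\bm g(\bm\omega)\,\mathrm d\bm\omega\text.
\end{equation}
This reduces positive (semi-)definiteness of $\bm\psi$ to the pointwise positive semi-definiteness of the Hermitian symbol $\widehat{\bm\psi}(\bm\omega)$, plus an argument that the nonnegative integrand cannot vanish almost everywhere unless all $\bm\alpha_k$ are zero.

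Next I would compute the three symbols with Lemma~\ref{lem:drv2}. Applying it entrywise and summing over the diagonal gives $\mathcal F[-\Delta\psi](\bm\omega)=|\bm\omega|^2\widehat\psi(\bm\omega)$ and $\mathcal F[-\del\del^\top\psi](\bm\omega)=\widehat\psi(\bm\omega)\,\bm\omega\bm\omega^\top$, so that
\begin{align}
  \widehat{\bm\psi}_1(\bm\omega)&=\widehat\psi(\bm\omega)\,|\bm\omega|^2\bm I\text,\\
  \widehat{\bm\psi}_2(\bm\omega)&=\widehat\psi(\bm\omega)\,\bm\omega\bm\omega^\top\text,\\
  \widehat{\bm\psi}_3(\bm\omega)&=\widehat\psi(\bm\omega)\bigl(|\bm\omega|^2\bm I-\bm\omega\bm\omega^\top\bigr)\text.
\end{align}
Since $\psi$ is positive definite, Bochner's theorem yields $\widehat\psi\ge0$, and the three matrix factors $|\bm\omega|^2\bm I$, $\bm\omega\bm\omega^\top$, and $|\bm\omega|^2\bm I-\bm\omega\bm\omega^\top$ are each positive semi-definite, the last by Cauchy--Schwarz via $\bm v^\top\bigl(|\bm\omega|^2\bm I-\bm\omega\bm\omega^\top\bigr)\bm v=|\bm\omega|^2|\bm v|^2-(\bm\omega\cdot\bm v)^2\ge0$. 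Hence each $\widehat{\bm\psi}_i(\bm\omega)$ is positive semi-definite, the integrand is nonnegative, and positive semi-definiteness of all three kernels is immediate.

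For strictness I would assume the quadratic form vanishes; then the nonnegative integrand vanishes for almost every $\bm\omega$, and on the set $\{\widehat\psi>0\}$---which has positive Lebesgue measure because $\psi$, being strictly positive definite, is not identically zero and so has a nonnegative transform that is not a.e.\ zero---the scalar form in $\bm g(\bm\omega)$ vanishes. This forces $\bm g(\bm\omega)=\bm 0$ for $\bm\psi_1$, $\bm\omega\cdot\bm g(\bm\omega)=0$ for $\bm\psi_2$, and the Cauchy--Schwarz equality $\bm g(\bm\omega)\parallel\bm\omega$, i.e.\ $\omega^a g^b(\bm\omega)=\omega^b g^a(\bm\omega)$ for all $a,b$, for $\bm\psi_3$, each on a set of positive measure. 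The key observation is that every component of $\bm g$, and each of these combinations, is an exponential polynomial in $\bm\omega$, hence real-analytic; vanishing on a positive-measure set forces it to vanish identically on $\mathbb R^d$. The distinctness of the centers $\bm x_k$ then makes the maps $\bm\omega\mapsto p(\bm\omega)\,\mathrm e^{-i\bm x_k\cdot\bm\omega}$ linearly independent over polynomials $p$, collapsing the everywhere-identities to $\bm\alpha_k=\bm0$ for every $k$.

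I expect the $\bm\psi_3$ case to be the main obstacle: unlike $\bm\psi_1$, where the integrand controls $|\bm g|^2$ directly, here I recover only the weaker condition $\bm g(\bm\omega)\parallel\bm\omega$ and must convert it into the vanishing of the $\bm\alpha_k$. The route is to feed the antisymmetric relations $\omega^a g^b-\omega^b g^a\equiv0$ into the exponential-polynomial independence, obtaining $\omega^a\alpha_k^b-\omega^b\alpha_k^a\equiv0$ and hence $\bm\alpha_k=\bm0$ whenever $d\ge2$; the case $d=1$ is genuinely degenerate, since $|\bm\omega|^2\bm I-\bm\omega\bm\omega^\top$ then vanishes and $\bm\psi_3\equiv\bm0$. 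I would also be careful to justify the two standing inputs---the matrix Bochner factorization and the positive measure of $\{\widehat\psi>0\}$---and to note that integrability of each $\bm\psi_i$ is exactly what legitimizes the Fourier inversion used throughout.
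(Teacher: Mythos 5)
Your proposal follows essentially the same route as the paper: compute the Fourier symbol of each kernel via Lemma~\ref{lem:drv2}, obtain $|\bm{\omega}|^2\bm{I}$, $\bm{\omega}\bm{\omega}^{\top}$, and $|\bm{\omega}|^2\bm{I}-\bm{\omega}\bm{\omega}^{\top}$ as positive semi-definite matrix factors multiplying $\widehat{\psi}\ge 0$, and conclude non-negativity of the quadratic form (the paper handles $\bm{\psi}_1$ by citing Wendland's Lemma~9.15 rather than by its symbol, and factors $|\bm{\omega}|^2\bm{I}-\bm{\omega}\bm{\omega}^{\top}=\bm{L}^{\top}\bm{L}$ to write the integrand as a squared norm, but these are cosmetic differences). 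Where you genuinely go beyond the paper is the strictness step: the paper simply asserts that the resulting integral ``is always greater than 0'' because $\widehat{\psi}$ is non-negative and not identically zero, whereas the vanishing of the integrand only gives $\bm{\omega}\cdot\bm{g}(\bm{\omega})=0$ (for $\bm{\psi}_2$) or $\bm{g}(\bm{\omega})\parallel\bm{\omega}$ (for $\bm{\psi}_3$) on the support of $\widehat{\psi}$; your argument via real-analyticity of exponential polynomials and linear independence of $\mathrm{e}^{-i\bm{x}_k\cdot\bm{\omega}}$ for distinct centers is exactly what is needed to close this gap and force $\bm{\alpha}_k=\bm{0}$. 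You also correctly identify that the theorem as stated fails for $\bm{\psi}_3$ when $d=1$, since then $|\bm{\omega}|^2\bm{I}-\bm{\omega}\bm{\omega}^{\top}\equiv 0$ and $\bm{\psi}_3\equiv\bm{0}$; the paper does not exclude this degenerate case. In short, your proof is correct, same in spirit, and more complete on the definiteness (as opposed to semi-definiteness) claim.
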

\begin{proof}
The case for $\bm{\psi}_1$ follows directly from the fact that the Laplace operator preserves the positive definiteness of the function \cite[Lemma~9.15]{Wendland2004} and the definition of matrix-valued positive definite functions. For $\bm{\psi}_2$ and $\bm{\psi}_3$, let $S={\bm{x}_1,\bm{x}_2,\ldots,\bm{x}_N} \subset \mathbb{R}^d$ be an arbitrary set of distinct points, and let $\bm{\alpha}_1, \bm{\alpha}_2, \ldots, \bm{\alpha}_N \in \mathbb{C}^d$ be arbitrary vector coefficients.

We first consider the following summation for $\bm{\psi}_2$ using Lemma~\ref{lem:drv2}:
\begin{align}
    \sum_{j,k=1}^N{\overline{\bm{\alpha}_j}^{\top}\bm{\psi}_2(\bm{x}_j-\bm{x}_k)\,\bm{\alpha}_k}
    &=-\sum_{j,k=1}^N{\sum_{p,q=1}^d{\overline{\alpha_j^p}\alpha_k^q\,\frac{\partial^2\psi}{\partial p\partial q}(\bm{x}_j-\bm{x}_k)}}\notag\\
    &=-\frac{1}{(\sqrt{2\pi})^d}\sum_{j,k=1}^N{\sum_{p,q=1}^d{\overline{\alpha_j^p}\alpha_k^q\int_{\mathbb{R}^d}{\mathcal{F}\left[\frac{\partial^2\psi}{\partial x^p\partial x^q}\right](\bm{\omega})\,\mathrm{e}^{i(\bm{x}_j-\bm{x}_k)\cdot\bm{\omega}}\mathrm{d}\bm{\omega}}}}\notag\\
    &=\frac{1}{(\sqrt{2\pi})^d}\sum_{j,k=1}^N{\sum_{p,q=1}^d{\overline{\alpha_j^p}\alpha_k^q\int_{\mathbb{R}^d}{\omega^p\omega^q\widehat{\psi}(\bm{\omega})\,\mathrm{e}^{i(\bm{x}_j-\bm{x}_k)\cdot\bm{\omega}}\mathrm{d}\bm{\omega}}}}\notag\\
    &=\frac{1}{(\sqrt{2\pi})^d}\int_{\mathbb{R}^d}{\widehat{\psi}(\bm{\omega})\left(\sum_{j=1}^N{\sum_{p=1}^d{\overline{\alpha_j^p}\omega^p\mathrm{e}^{i\bm{x}_j\cdot\bm{\omega}}}}\sum_{k=1}^N{\sum_{q=1}^d{\alpha_k^q\omega^q\mathrm{e}^{-i\bm{x}_k\cdot\bm{\omega}}}}\right)\mathrm{d}\bm{\omega}}\notag\\
    &=\frac{1}{(\sqrt{2\pi})^d}\int_{\mathbb{R}^d}{\widehat{\psi}(\bm{\omega})\left\vert\sum_{j=1}^N{\sum_{p=1}^d{\overline{\alpha_j^p}\omega^p\mathrm{e}^{i\bm{x}_j\cdot\bm{\omega}}}}\right\vert^2\mathrm{d}\bm{\omega}}\label{eqn:fourier_psi2}\text{.}
\end{align}
As given, $\widehat{\psi}(\bm{\omega})$ is a non-negative function that is not identically zero, so Eq.~\eqref{eqn:fourier_psi2} is always greater than 0, thus $\bm{\psi}_2$ is positive definite.
Observing the proof, we can express the Fourier transform of the Hessian matrix of the second-order partial derivatives as the matrix of Fourier transforms of its components: 
\begin{equation}
  \mathcal{F}\left[\del\del^{\top}\psi\right](\bm{\omega})=-\bm{\omega}\bm{\omega}^{\top}\widehat{\psi}(\omega)\text{,}
\end{equation}
and thus Eq.~\eqref{eqn:fourier_psi2} can be written as
  \begin{equation}
    \sum_{j,k=1}^N{\overline{\bm{\alpha}_j}^{\top}\bm{\psi}_2(\bm{x}_j-\bm{x}_k)\,\bm{\alpha}_k}
    =\frac{1}{(\sqrt{2\pi})^d}\int_{\mathbb{R}^d}{\widehat{\psi}(\bm{\omega})\left\vert\sum_{j=1}^N{\overline{\bm{\alpha}_j}\cdot\bm{\omega}\,\mathrm{e}^{i\bm{x}_j\cdot\bm{\omega}}}\right\vert^2\mathrm{d}\bm{\omega}}\text{.}
  \end{equation}

Similarly, we apply the same analysis to $\bm{\psi}_3$ using Fourier analysis to prove:
\begin{align}
    \sum_{j,k=1}^N{\overline{\bm{\alpha}_j}^{\top}\bm{\psi}_2(\bm{x}_j-\bm{x}_k)\,\bm{\alpha}_k}
    &=\frac{1}{(\sqrt{2\pi})^d}\sum_{j,k=1}^N{\int_{\mathbb{R}^d}{\overline{\bm{\alpha}_j}^{\top}\left(\Vert\bm{\omega}\Vert^2\bm{I}-\bm{\omega}\bm{\omega}^{\top}\right)\bm{\alpha}_k\,\widehat{\psi}(\bm{\omega})\,\mathrm{e}^{i(\bm{x}_j-\bm{x}_k)\cdot\bm{\omega}}\mathrm{d}\bm{\omega}}}\notag\\
    &=\frac{1}{(\sqrt{2\pi})^d}\sum_{j,k=1}^N{\int_{\mathbb{R}^d}{\overline{\bm{\alpha}_j}^{\top}\bm{L}^{\top}\bm{L}\bm{\alpha}_k\,\widehat{\psi}(\bm{\omega})\,\mathrm{e}^{i(\bm{x}_j-\bm{x}_k)\cdot\bm{\omega}}\mathrm{d}\bm{\omega}}}\notag\\
    &=\frac{1}{(\sqrt{2\pi})^d}\int_{\mathbb{R}^d}{\widehat{\psi}(\bm{\omega})\left\Vert\sum_{j=1}^N{\bm{L}\overline{\bm{\alpha}_j}\mathrm{e}^{i\bm{x}_j\cdot\bm{\omega}}}\right\Vert^2\mathrm{d}\bm{\omega}}\label{eqn:fourier_psi3}\text{,}
\end{align}
where $\bm{L}^{\top}\bm{L}=\Vert\bm{\omega}\Vert^2\bm{I}-\bm{\omega}\bm{\omega}^{\top}$ is the result of the Cholesky decomposition.
Note that for any $\bm{\alpha}\in\mathbb{R}^d$, we can obtain
\begin{equation}
    \bm{\alpha}^{\top}\left(\Vert\bm{\omega}\Vert^2\bm{I}-\bm{\omega}\bm{\omega}^{\top}\right)\bm{\alpha}=\Vert\bm{\omega}\Vert^2\Vert\bm{\alpha}\Vert^2-(\bm{\omega}\cdot\bm{\alpha})^2\ge0\text{,}
\end{equation}
so $\Vert\bm{\omega}\Vert^2\bm{I}-\bm{\omega}\bm{\omega}^{\top}$ is semi-positive definite, ensuring the existence of $\bm{L}$.
\end{proof}

\subsection{Helmholtz Decomposition}

The matrix-valued positive definite functions constructed according to Theorem~\ref{thm:matrix_kernel} possess a series of important properties.

\begin{theorem}
  \label{thm:curl-free}
   For any scalar function $\psi \in \mathcal{C}^2(\mathbb{R}^d)$ and any vector coefficient $\bm{\alpha} \in \mathbb{C}^d$, $\bm{\psi}_2 \bm{\alpha} = -\del\del^{\top} \psi(\bm{x}) \bm{\alpha}$ is curl-free, i.e., $\del\times(\bm{\psi}_2 \bm{\alpha}) = \bm{0}$.
\end{theorem}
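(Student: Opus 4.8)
The plan is to show that, for fixed $\bm{\alpha}$, the vector field $\bm{\psi}_2(\bm{x})\,\bm{\alpha}=-\del\del^{\top}\psi(\bm{x})\,\bm{\alpha}$ is itself the gradient of a scalar potential. Once this is established, the curl-free property is immediate, since the curl of any gradient field vanishes. This reduces the whole statement to a bookkeeping identity together with the equality of mixed partial derivatives, and avoids any direct manipulation of the curl operator on a matrix-valued object.

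First I would expand the matrix--vector product componentwise. The entries of the Hessian $\del\del^{\top}\psi$ are $\partial^2\psi/\partial x^i\partial x^j$, so the $i$-th component of $\bm{\psi}_2\bm{\alpha}$ is
\[
  (\bm{\psi}_2\bm{\alpha})^i=-\sum_{j=1}^d\frac{\partial^2\psi}{\partial x^i\,\partial x^j}\,\alpha^j=-\frac{\partial}{\partial x^i}\sum_{j=1}^d\alpha^j\,\frac{\partial\psi}{\partial x^j}\text{,}
\]
where pulling the sum inside a single derivative in $x^i$ is legitimate because $\bm{\alpha}$ is a constant vector. Introducing the scalar potential $g(\bm{x}):=\del\psi(\bm{x})\cdot\bm{\alpha}=\sum_{j=1}^d\alpha^j\,\partial\psi/\partial x^j$ (the directional derivative of $\psi$ along $\bm{\alpha}$), this collapses to the compact identity $\bm{\psi}_2\bm{\alpha}=-\del g$. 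The second step is then to invoke that a gradient field is curl-free: the curl captures the antisymmetric part of the Jacobian, and $\partial_j(\partial_i g)=\partial_i(\partial_j g)$ by Clairaut--Schwarz, so $\del\times(\bm{\psi}_2\bm{\alpha})=-\del\times\del g=\bm{0}$.

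The only real subtlety, rather than a genuine obstacle, is regularity bookkeeping. The classical pointwise curl computation interchanges two derivatives of $g$, which strictly needs $g\in\mathcal{C}^2$, i.e.\ $\psi\in\mathcal{C}^3$, whereas the statement assumes only $\psi\in\mathcal{C}^2$. I would handle this by emphasizing that the representation $\bm{\psi}_2\bm{\alpha}=-\del g$ already exhibits the field as a gradient under the stated hypothesis, so it is curl-free in the potential sense; the mild extra smoothness is assumed only if a classical pointwise curl is demanded. The complex coefficients $\bm{\alpha}\in\mathbb{C}^d$ pose no difficulty either: $g$ is simply complex-valued, and one may split it into real and imaginary parts, each a gradient of a real scalar potential when $\psi$ is real, so the argument carries through verbatim.
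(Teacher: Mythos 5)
Your proof is correct and follows essentially the same route as the paper: both rewrite $\bm{\psi}_2\bm{\alpha}$ as the gradient of the scalar potential $-\del\psi\cdot\bm{\alpha}$ and conclude that a gradient field is curl-free. Your extra remarks on the $\mathcal{C}^2$ versus $\mathcal{C}^3$ regularity and on complex coefficients are reasonable refinements that the paper glosses over, but the core argument is identical.
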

\begin{proof}
By the linearity of the $\del$ operator, we can express $\bm{\psi}_2 \bm{\alpha}$ as
\begin{equation}
    -\del\del^{\top}\psi(\bm{x})\,\bm{\alpha}=\bm{\del}\left[-\del\psi(\bm{x})\cdot\bm{\alpha}\right]\text{,}
\end{equation}
which shows that $\bm{\psi}_2 \bm{\alpha}$ is the gradient of a scalar field, and its curl must be zero.
\end{proof}

\begin{theorem}
\label{thm:div-free}
For any scalar function $\psi \in \mathcal{C}^2(\mathbb{R}^d)$ and any vector coefficient $\bm{\alpha} \in \mathbb{C}^d$,
$\bm{\psi}_3\,\bm{\alpha}=-\Delta\psi(\bm{x})\,\bm{\alpha}+\del\del^{\top}\psi(\bm{x})\,\bm{\alpha}$ is divergence-free, i.e., $\del\cdot(\bm{\psi}_3\,\bm{\alpha})=0$.
\end{theorem}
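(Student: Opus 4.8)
The plan is to verify the claim by a direct coordinate computation of the divergence, since the identity $\del\cdot(\bm{\psi}_3\,\bm{\alpha})=0$ holds in any dimension $d$ and requires none of the Fourier machinery used for Theorem~\ref{thm:matrix_kernel}. Because $\bm{\alpha}$ is a constant vector and $\del\cdot$ acts componentwise, the whole argument reduces to differentiating the two scalar contributions coming from $-\Delta\psi\,\bm{\alpha}$ and $\del\del^{\top}\psi\,\bm{\alpha}$ and observing that they cancel exactly. This mirrors the componentwise style already used to prove Theorem~\ref{thm:curl-free}.

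First I would write the $i$-th component of $\bm{\psi}_3\,\bm{\alpha}$ explicitly as
\begin{equation*}
  (\bm{\psi}_3\,\bm{\alpha})^i = -\Delta\psi\,\alpha^i + \sum_{j=1}^d \frac{\partial^2\psi}{\partial x^i\,\partial x^j}\,\alpha^j\text{,}
\end{equation*}
and then form $\del\cdot(\bm{\psi}_3\,\bm{\alpha})=\sum_{i=1}^d \partial (\bm{\psi}_3\,\bm{\alpha})^i/\partial x^i$. The first term contributes $-\sum_{i=1}^d \alpha^i\,\partial(\Delta\psi)/\partial x^i = -\bm{\alpha}\cdot\del(\Delta\psi)$. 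For the second term I would interchange the order of the outer derivatives (legitimate once $\psi$ is smooth enough, effectively $\mathcal{C}^3$, which is also what makes $\del\cdot(\bm{\psi}_3\,\bm{\alpha})$ well-defined in the first place) to obtain
\begin{equation*}
  \sum_{i,j=1}^d \alpha^j\,\frac{\partial^3\psi}{\partial (x^i)^2\,\partial x^j} = \sum_{j=1}^d \alpha^j\,\frac{\partial}{\partial x^j}\!\left(\sum_{i=1}^d \frac{\partial^2\psi}{\partial (x^i)^2}\right) = \bm{\alpha}\cdot\del(\Delta\psi)\text{.}
\end{equation*}
Adding the two contributions yields $-\bm{\alpha}\cdot\del(\Delta\psi)+\bm{\alpha}\cdot\del(\Delta\psi)=0$, which is the assertion.

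There is essentially no serious obstacle here; the only point requiring care is the regularity assumption and the commutation of mixed partial derivatives via Clairaut's theorem. As corroborating intuition, at least in three dimensions one can recognize $\bm{\psi}_3\,\bm{\alpha}=\del\times(\del\times(\psi\,\bm{\alpha}))$ through the identity $\del\times(\del\times\bm{A})=\del(\del\cdot\bm{A})-\Delta\bm{A}$ applied to $\bm{A}=\psi\,\bm{\alpha}$: being the curl of a vector field, $\bm{\psi}_3\,\bm{\alpha}$ is automatically divergence-free. I would keep the direct coordinate computation as the primary argument, since it is dimension-independent, and mention the curl-of-a-curl viewpoint only as a structural sanity check.
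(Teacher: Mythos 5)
Your proof is correct, but it takes a different route from the paper. The paper proves the statement by rewriting $\bm{\psi}_3\,\bm{\alpha}=-\del\cdot\left[\del\psi\right]\bm{\alpha}+\del\left[\del\psi\cdot\bm{\alpha}\right]=\del\times(\del\psi\times\bm{\alpha})$ via the vector-calculus identity for the curl of a cross product, and then invokes the fact that a curl is divergence-free --- essentially the ``structural sanity check'' you relegate to a closing remark (your $\del\times(\del\times(\psi\,\bm{\alpha}))$ coincides with the paper's expression since $\del\times(\psi\,\bm{\alpha})=\del\psi\times\bm{\alpha}$ for constant $\bm{\alpha}$). Your primary argument, the direct componentwise computation showing $-\bm{\alpha}\cdot\del(\Delta\psi)$ and $+\bm{\alpha}\cdot\del(\Delta\psi)$ cancel, buys two things the paper's version does not: it is genuinely dimension-independent (the cross-product identity is native to $d=3$, whereas the theorem is stated for general $d$), and it makes explicit the regularity actually needed --- third derivatives of $\psi$ must exist and commute, so the stated hypothesis $\psi\in\mathcal{C}^2$ is, strictly speaking, insufficient for either proof; your observation that one effectively needs $\mathcal{C}^3$ is a point the paper glosses over. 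The paper's approach, in exchange, is shorter and exhibits the vector potential $\del\psi\times\bm{\alpha}$ explicitly, which is what makes the subsequent Helmholtz-decomposition corollary transparent.
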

\begin{proof}
By the linearity of the $\del$ operator, we can rewrite the expression for $\bm{\psi}_3 \bm{\alpha}$ as follows:
  \begin{align}
    -\Delta\psi(\bm{x})\,\bm{\alpha}+\del\del^{\top}\psi(\bm{x})\,\bm{\alpha}&=-\del\cdot\left[\del\psi(\bm{x})\right]\bm{\alpha}+\bm{\del}\left[\del\psi(\bm{x})\cdot\bm{\alpha}\right]\notag\\
    &=\del\times(\del\psi\times\bm{\alpha})\text{,}
  \end{align}
which shows that $\bm{\psi}_3 \bm{\alpha}$ is the curl of a vector field, and its divergence must be zero.
\end{proof}

Clearly, from Theorem~\ref{thm:curl-free}, Theorem~\ref{thm:div-free}, and the Helmholtz decomposition theorem for vector fields, we can deduce the following corollary.

\begin{corollary}
  Let $\psi\in\mathcal{C}^2(\mathbb{R}^d)$ be a scalar function, $S=\{\bm{x}_1,\bm{x}_2,\ldots,\bm{x}_N\}\subset\mathbb{R}^d$ be an arbitrary set of distinct points, and $\bm{\alpha}_1,\bm{\alpha}_2,\ldots,\bm{\alpha}_N\in\mathbb{C}^d$ be arbitrary vector coefficients. The vector field 
  \begin{equation}
    \widetilde{\bm{f}}_1(\bm{x})=-\sum_{k=1}^N{\Delta\psi(\bm{x}-\bm{x}_k)\,\bm{\alpha}_k}
  \end{equation}
  can be uniquely decomposed into the curl-free field
  \begin{equation}
    \widetilde{\bm{f}}_2(\bm{x})=-\sum_{k=1}^N{\del^{\top}\del\psi(\bm{x}-\bm{x}_k)\,\bm{\alpha}_k}
  \end{equation}
  and the divergence-free field 
  \begin{equation}
    \widetilde{\bm{f}}_3(\bm{x})=-\sum_{k=1}^N{\Delta\psi(\bm{x}-\bm{x}_k)\,\bm{\alpha}_k}+\sum_{k=1}^N{\del^{\top}\del\psi(\bm{x}-\bm{x}_k)\,\bm{\alpha}_k}
  \end{equation}
  as their sum.
\end{corollary}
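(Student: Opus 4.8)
The plan is to verify, one at a time, the three properties that constitute a Helmholtz decomposition—the additive identity $\widetilde{\bm{f}}_1=\widetilde{\bm{f}}_2+\widetilde{\bm{f}}_3$, the curl-free and divergence-free characters of the two summands, and finally uniqueness—reducing each to a result already proved above. Writing $\widetilde{\bm{f}}_1=\sum_k\bm{\psi}_1(\bm{x}-\bm{x}_k)\,\bm{\alpha}_k$ and identifying $\widetilde{\bm{f}}_2$, $\widetilde{\bm{f}}_3$ as the sums built from the kernels $\bm{\psi}_2$, $\bm{\psi}_3$ of Theorem~\ref{thm:matrix_kernel}, the first two properties become essentially bookkeeping, and the real content lies in the uniqueness clause.

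First I would record the purely algebraic observation that the matrix kernels satisfy $\bm{\psi}_2(\bm{x})+\bm{\psi}_3(\bm{x})=-\del\del^{\top}\psi(\bm{x})+\bigl(-\Delta\psi(\bm{x})\,\bm{I}+\del\del^{\top}\psi(\bm{x})\bigr)=-\Delta\psi(\bm{x})\,\bm{I}=\bm{\psi}_1(\bm{x})$. Applying this identity with argument $\bm{x}-\bm{x}_k$, multiplying by $\bm{\alpha}_k$, and summing over $k$ gives $\widetilde{\bm{f}}_1=\widetilde{\bm{f}}_2+\widetilde{\bm{f}}_3$ immediately by linearity. Next I would establish the two characters: each summand of $\widetilde{\bm{f}}_2$ has the form $\bm{\psi}_2(\bm{x}-\bm{x}_k)\,\bm{\alpha}_k$, which Theorem~\ref{thm:curl-free} shows is curl-free, and since the curl is linear the finite sum is curl-free; identically, each summand of $\widetilde{\bm{f}}_3$ is $\bm{\psi}_3(\bm{x}-\bm{x}_k)\,\bm{\alpha}_k$, which Theorem~\ref{thm:div-free} shows is divergence-free, so $\del\cdot\widetilde{\bm{f}}_3=0$. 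Thus the splitting is genuinely into a curl-free and a divergence-free field.

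The substantive step, and the main obstacle, is uniqueness. I would argue that if $\widetilde{\bm{f}}_1=\bm{g}_2+\bm{g}_3$ is any second splitting with $\bm{g}_2$ curl-free and $\bm{g}_3$ divergence-free, then $\bm{h}:=\widetilde{\bm{f}}_2-\bm{g}_2=\bm{g}_3-\widetilde{\bm{f}}_3$ is simultaneously curl-free and divergence-free, so its components are harmonic. The delicate point is that uniqueness genuinely fails without a growth restriction, since any harmonic vector field (a constant, say) is both curl- and divergence-free and could be shifted between the two parts; killing this ambiguity requires the decay of $\bm{h}$ at infinity. Here that decay is supplied because $\widetilde{\bm{f}}_2$ and $\widetilde{\bm{f}}_3$ are finite combinations of second derivatives of $\psi$ at shifted centers, and for the admissible integrable, rapidly decaying kernels these derivatives vanish at infinity; a Liouville-type theorem for harmonic functions then forces $\bm{h}=\bm{0}$, whence $\bm{g}_2=\widetilde{\bm{f}}_2$ and $\bm{g}_3=\widetilde{\bm{f}}_3$. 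This is precisely the uniqueness clause of the Helmholtz decomposition theorem, which I would invoke rather than reprove; the only care required is to confirm that the standing hypotheses on $\psi$ furnish the decay that this clause demands.
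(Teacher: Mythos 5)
Your proof follows exactly the route the paper takes: the paper offers no written proof at all, simply deducing the corollary ``from Theorem~\ref{thm:curl-free}, Theorem~\ref{thm:div-free}, and the Helmholtz decomposition theorem,'' which is precisely the three-step argument (algebraic identity $\bm{\psi}_2+\bm{\psi}_3=\bm{\psi}_1$, the two structure theorems, and the cited uniqueness clause) that you spell out. Your added caveat that uniqueness genuinely requires decay at infinity --- which the stated $\mathcal{C}^2$ hypothesis alone does not furnish --- is a fair observation the paper glosses over, but it does not change the approach.
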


\section{Analytic Expressions of Functions and Derivatives}

For practical applications, we provide the analytical forms of matrix-valued radial basis functions and their derivatives, including both the general form and the specialized version for DFKs-Wen4.

\subsection{Original Functions}

For any radial function $\psi(\bm{x})=\phi(\Vert\bm{x}\Vert)$, the Laplacian $\Delta\psi$ can be expressed as
\begin{equation}
  \label{eqn:laplacian}
  \Delta\psi(\bm{x})=\Delta\phi(r)=\frac{\mathrm{d}^2\phi}{{\mathrm{d}r}^2}+\frac{d-1}{r}\frac{\mathrm{d}\phi}{\mathrm{d}r}\text{.}
\end{equation}
Similarly, the Hessian of an Radial function can be expanded as
\begin{equation}
  \del^{\top}\del\psi(\bm{x})=\del^{\top}\del\phi(r)=\del^{\top}\left[\frac{\mathrm{d}\phi}{\mathrm{d}r}(r)\,\frac{\bm{x}}{r}\right]=\frac{1}{r}\frac{\mathrm{d}\phi}{\mathrm{d}r}\bm{I}+\frac{1}{r^2}\left(\frac{\mathrm{d}^2\phi}{{\mathrm{d}r}^2}-\frac{1}{r}\frac{\mathrm{d}\phi}{\mathrm{d}r}\right)\bm{x}\bm{x}^{\top}\text{.}
\end{equation}

For the $\mathcal{C}^4$-continous Wendland polynomial used in the DFK-Wen4 method, the result of applying the negative Laplacian is given by
\begin{equation}
  -\Delta R_\mathrm{Wen4}(r)=(1-r)^4\left[d+4dr-(5d+30)\,r^2\right]\text{.}    
\end{equation}
The corresponding curl-free RBF takes the form
\begin{equation}
  -\del^{\top}\del R_\mathrm{Wen4}(r)=(1-r)^4\left[(1+4r-5r^2)\,\bm{I}-30\,\bm{x}\bm{x}^{\top}\right]\text{,}
\end{equation}
and the corresponding divergence-free RBF is given by
\begin{align}
  -\Delta R_\mathrm{Wen4}(r)\,\bm{I}+\del^{\top}\del R_\mathrm{Wen4}(r)=(1-r)^4\left\{\left[(d-1)(1+4r)-5(d+5)\,r^2\right]\bm{I}+30\,\bm{x}\bm{x}^{\top}\right\}\text{.}
\end{align}

\subsection{Gradients}

Consider a matrix-valued RBF interpolation, where each kernel contributes in the form
\begin{equation}
  \bm{u}(\bm{x})=f(r)\,\bm{\alpha}+g(r)\,(\bm{x}\cdot\bm{\alpha})\,\bm{x}\text{,}
\end{equation}
which, in index notation, can be written as
\begin{equation}
  u_i=f\alpha_i+gx_k\alpha_kx_i\text{.}
\end{equation}
Taking the derivative with respect to $x_j$ yields
\begin{equation}
  \frac{\partial u_i}{\partial x_j}=\frac{\partial f}{\partial x_j}\alpha_i+\frac{\partial g}{\partial x_j}x_k\alpha_kx_i+g\alpha_jx_i+gx_k\alpha_k\delta_{ij}\text{,}
\end{equation}
which, in matrix form, can be expressed as
\begin{align}
  \del\bm{u}&=\bm{\alpha}\,(\del f)^\top+(\bm{x}\cdot\bm{\alpha})\,\bm{x}\,(\del g)^\top+g\bm{x}\bm{\alpha}^\top+g\,(\bm{x}\cdot\bm{\alpha})\,\bm{I}\notag\\
  &=\frac{f'(r)}{r}\bm{\alpha}\bm{x}^\top+\frac{g'(r)}{r}(\bm{x}\cdot\bm{\alpha})\,\bm{x}\bm{x}^\top+g\bm{x}\bm{\alpha}^\top+g\,(\bm{x}\cdot\bm{\alpha})\,\bm{I}\text{.}
\end{align}
For the divergence-free matrix-valued RBF derived from $R_\mathrm{Wen4}$, the coefficients in two dimensions are given by
\begin{equation}
  f'(r)=30r(1-r)^3(7r-3)\text{,\qquad}g'(r)=-120(1-r)^3\text{,\qquad}g(r)=30(1-r)^4\text{.}
\end{equation}
In three dimensions, they take the form
\begin{equation}
  f'(r)=120r(1-r)^3(2r-1)\text{,\qquad}g'(r)=-120(1-r)^3\text{,\qquad}g(r)=30(1-r)^4\text{.}
\end{equation}

\paragraph{Backpropagation}
Consider a scalar loss function $\mathcal{L}$ and define $t_{ij}=\partial u_i/\partial x_j$.
Then, the gradient of the loss with respect to $\alpha_k$ is given by
\begin{align}
  \frac{\partial\mathcal{L}}{\partial\alpha_k}=\frac{\partial\mathcal{L}}{\partial t_{ij}}\frac{\partial t_{ij}}{\partial\alpha_k}
  &=\frac{\partial\mathcal{L}}{\partial t_{ij}}\left[\frac{\partial f}{\partial x_j}\delta_{ik}+\frac{\partial g}{\partial x_j}x_kx_i+gx_i\delta_{jk}+gx_k\delta_{ij}\right]\notag\\
  &=\frac{\partial\mathcal{L}}{\partial t_{kj}}\frac{\partial f}{\partial x_j}+\frac{\partial\mathcal{L}}{\partial t_{ij}}\frac{\partial g}{\partial x_j}x_kx_i+\frac{\partial\mathcal{L}}{\partial t_{ik}}gx_i+\frac{\partial\mathcal{L}}{\partial t_{ii}}gx_k\text{.}
\end{align}
In matrix form, this can be expressed as
\begin{align}
  \frac{\partial\mathcal{L}}{\partial\bm{\alpha}}&=\bm{S}(\del f)+\bm{x}^\top\bm{S}(\del g)\bm{x}+g\bm{S}^\top\bm{x}+g(\mathop{\mathrm{tr}}{\bm{S}})\bm{x}\notag\\
  &=\frac{f'(r)}{r}\bm{S}\bm{x}+\frac{g'(r)}{r}(\bm{x}^\top\bm{S}\bm{x})\bm{x}+g\bm{S}^\top\bm{x}+g(\mathop{\mathrm{tr}}{\bm{S}})\bm{x}\text{,}
\end{align}
where $\bm{S}=\{s_{ij}\}=\{\partial\mathcal{L}/\partial t_{ij}\}$.

\subsection{Curls}

Physically, the curl of a vector field corresponds to its vorticity. Consider an RBF interpolation where the contribution of each kernel takes the form
\begin{equation}
  \bm{u}(\bm{x})=h(r)\,\bm{\alpha}\text{,}
\end{equation}
which, in index notation, can be expressed as
\begin{equation}
  u_i=h\,\alpha_i\text{.}
\end{equation}
Taking the curl of $\bm{u}$, we obtain
\begin{equation}
  \omega_i=(\del\times\bm{u})_i=\frac{\partial h}{\partial x_j}\alpha_k\epsilon_{ijk}=[(\del h)\times\bm{\alpha}]_i\text{,}
\end{equation}
which simplifies to
\begin{equation}
  \bm{\omega}=\del\times\bm{u}=\frac{h'(r)}{r}\bm{x}\times\bm{\alpha}\text{.}
\end{equation}
For the divergence-free matrix-valued RBF derived from $R_\mathrm{Wen4}$ (where the irrotational component automatically vanishes), the function $h'(r)$ is given by
\begin{equation}
  h'(r)=120r(1-r)^3(2r-1)\text{,}
\end{equation}
in 2D, which in 3D takes the form
\begin{equation}
  h'(r)=30r(1-r)^3(9r-5)\text{.}
\end{equation}

\paragraph{Backpropagation}
Consider a scalar loss function $\mathcal{L}$. The gradient of the loss with respect to $\alpha_k$ is given by
\begin{equation}
  \frac{\partial\mathcal{L}}{\partial\alpha_k}=\frac{\partial\mathcal{L}}{\partial\omega_i}\frac{\partial\omega_i}{\partial\alpha_k}=\frac{\partial\mathcal{L}}{\partial\omega_i}\frac{\partial h}{\partial x_j}\epsilon_{ijk}\text{.}
\end{equation}
In matrix form, this can be expressed as
\begin{equation}
  \frac{\partial\mathcal{L}}{\partial\bm{\alpha}}=\frac{h'(r)}{r}\frac{\partial\mathcal{L}}{\partial\bm{\omega}}\times\bm{x}\text{.}
\end{equation}

\subsection{Gradients of Curls}

Continuing from the previous subsection, we differentiate the vorticity with respect to $x_j$. In index notation, this yields
\begin{equation}
  \frac{\partial \omega_i}{\partial x_j}=\frac{\partial^2h}{\partial j\partial m}\alpha_n\epsilon_{imn}=\left[\frac{h'(r)}{r}\delta_{jm}+\frac{1}{r^2}\left(h''(r)-\frac{h'(r)}{r}\right)x_jx_m\right]\alpha_n\epsilon_{imn}\text{,}
\end{equation}
which can be written in matrix form as
\begin{equation}
  \del\bm{\omega}=\frac{h'(r)}{r}\bm{A}+\frac{1}{r^2}\left(h''(r)-\frac{h'(r)}{r}\right)(\bm{x}\times\bm{\alpha})\bm{x}^\top\text{,}
\end{equation}
where
\begin{equation}
  \bm{A}=
  \begin{pmatrix}
    0 & \alpha_3 & -\alpha_2\\
    -\alpha_3 & 0 & \alpha_1\\
    \alpha_2 & -\alpha_1 & 0
  \end{pmatrix}
  \text{.}
\end{equation}
For the divergence-free matrix-valued RBF derived from $R_\mathrm{Wen4}$, in three dimensions, the coefficient of the second term is
\begin{equation}
  \frac{1}{r^2}\left(h''(r)-\frac{h'(r)}{r}\right)=360\frac{(1-r)^2(2-3r)}{r}\text{.}
\end{equation}

\paragraph{Backpropagation}
Consider the backpropagation process with a scalar loss function $\mathcal{L}$, and let $t_{ij}=\partial \omega_i/\partial x_j$ and $s_{ij}=\partial\mathcal{L}/\partial t_{ij}$. The derivative of the loss function with respect to $\alpha_k$ is given by
\begin{equation}
  \frac{\partial\mathcal{L}}{\partial\alpha_k}=s_{ij}\frac{\partial t_{ij}}{\partial\alpha_k}
  =\left[\frac{h'(r)}{r}s_{im}+\frac{1}{r^2}\left(h''(r)-\frac{h'(r)}{r}\right)s_{ij}x_jx_m\right]\epsilon_{imk}\text{.}
\end{equation}
In matrix form, this becomes
\begin{equation}
  \frac{\partial\mathcal{L}}{\partial\bm{\alpha}}=\frac{h'(r)}{r}
  \begin{pmatrix}
    s_{23}-s_{32}\\
    s_{31}-s_{13}\\
    s_{12}-s_{21}
  \end{pmatrix}
  +\frac{1}{r^2}\left(h''(r)-\frac{h'(r)}{r}\right)
  (\bm{S}\bm{x})\times\bm{x}\text{.}
\end{equation} 

\section{Additional Descriptions of Experiments}

\subsection{Fitting}

\begin{figure}[ht]
    \centering
    \includegraphics[width=0.45\textwidth]{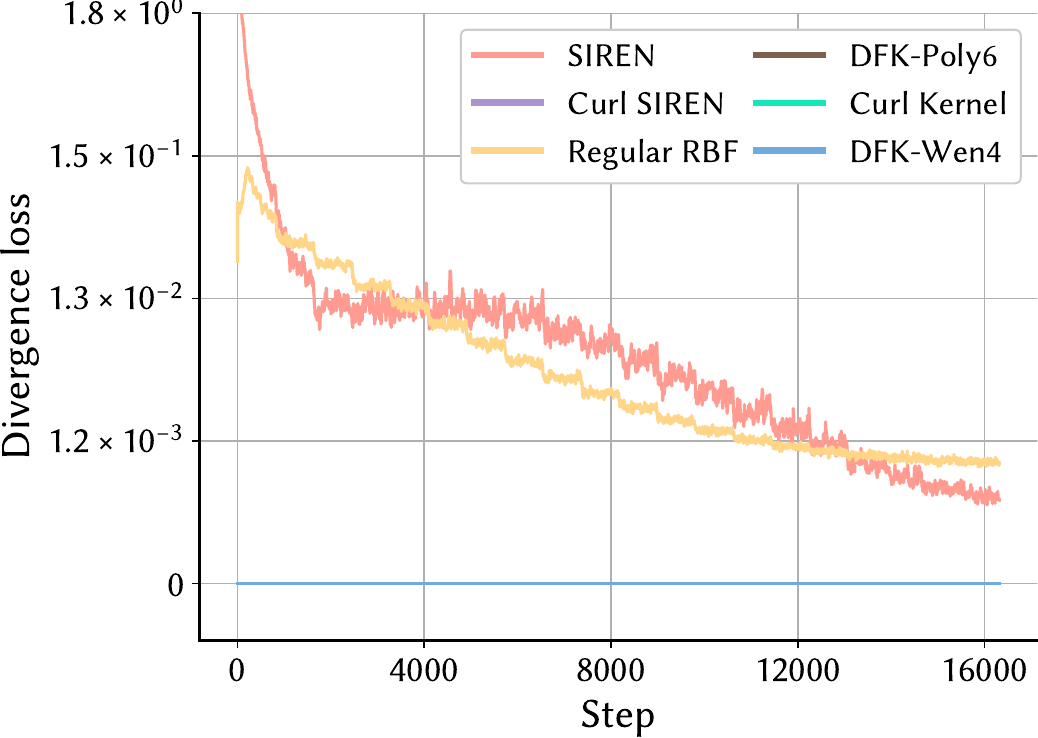}
    \qquad
    \includegraphics[width=0.45\textwidth]{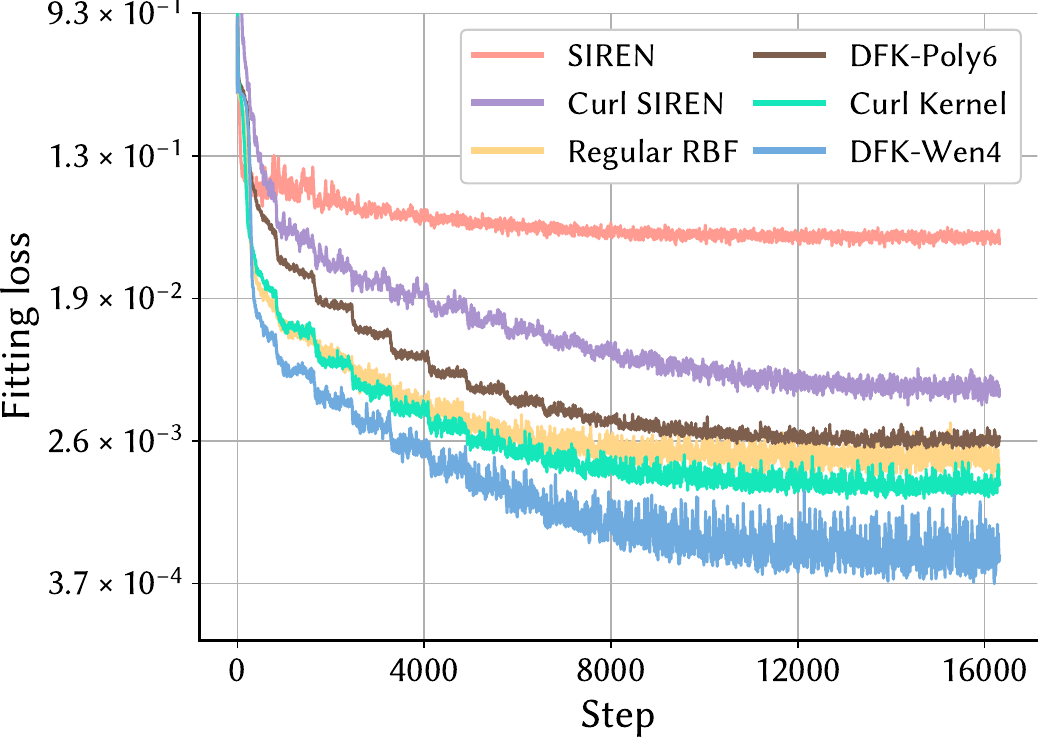}
    \\
    \vspace{-1em}
    \caption{Loss curves of fitting experiments for \emph{K\'{a}m\'{a}n vortex street (2D)}. Divergence losses of Curl SIREN, DFK-Poly6, Curl Kernel, and DFK-Wen4 are always zero due to the intrinsic properties of representations.}
    \label{fig:karman}
    \Description{loss curves}
\end{figure}

\paragraph{K\'{a}rm\'{a}n vortex street (2D)}
SIREN consists of four hidden layers with 256 neurons each, totaling \num{198658} trainable parameters. 
Curl SIREN consists of four hidden layers with 256 neurons each, totaling \num{198401} trainable parameters. 
Regular RBF comprises \num{5367} points, resulting in \num{26835} trainable parameters.
DFK-Poly6 comprises \num{5367} points, resulting in \num{26835} trainable parameters.
Curl Kernel comprises \num{6794} points, resulting in \num{27176} trainable parameters.
DFK-Wen4 comprises \num{5367} points, resulting in \num{26835} trainable parameters.

We set the batch size to \num{128} and trained each model for \num{20} epochs, with an initial learning rate of \num{1e-3}. When initializing the kernel radii, we set $\eta=9$. The loss curves are illustrated in Fig.~\ref{fig:karman}.

\begin{figure}[ht]
    \centering
    \includegraphics[width=0.45\textwidth]{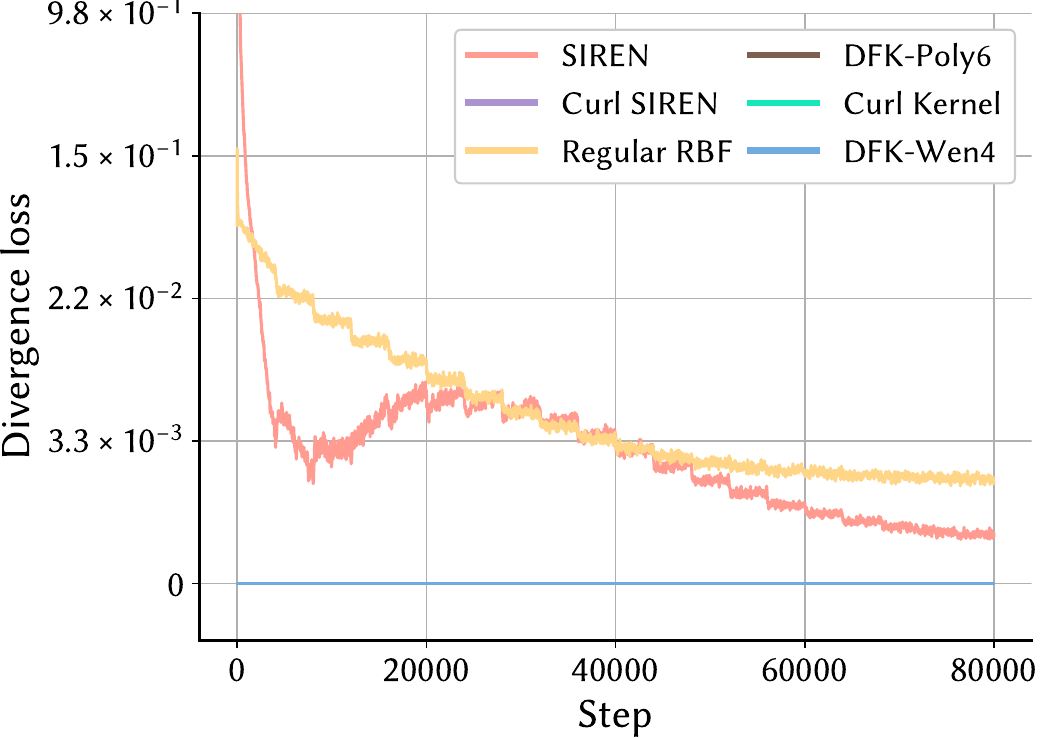}
    \qquad
    \includegraphics[width=0.45\textwidth]{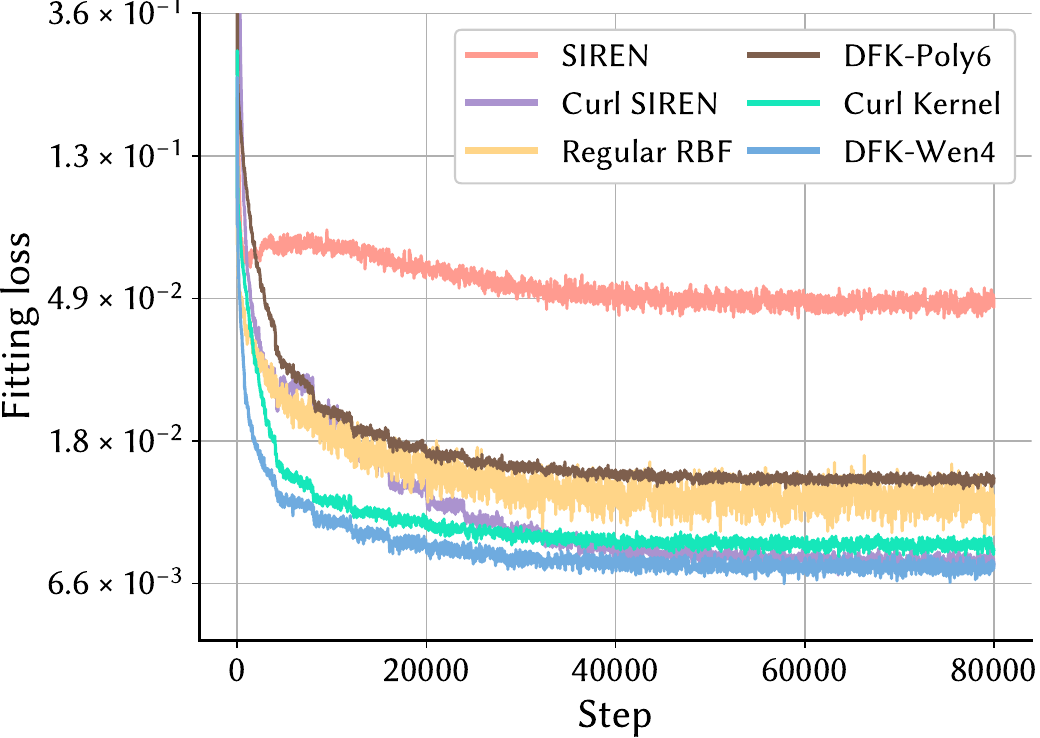}
    \\
    \vspace{-1em}
    \caption{Loss curves of fitting experiments for \emph{analytic vortices (3D)}. Divergence losses of Curl SIREN, DFK-Poly6, Curl Kernel, and DFK-Wen4 are always zero due to the intrinsic properties of representations.}
    \label{fig:analytic}
    \Description{loss curves}
\end{figure}

\paragraph{Analytic vortices (3D)}
SIREN consists of four hidden layers with 256 neurons each, totaling \num{199171} trainable parameters. 
Curl SIREN consists of four hidden layers with 256 neurons each, totaling \num{199171} trainable parameters. 
Regular RBF comprises \num{21117} points, resulting in \num{147819} trainable parameters.
DFK-Poly6 comprises \num{21117} points, resulting in \num{147819} trainable parameters.
Curl Kernel comprises \num{21117} points, resulting in \num{147819} trainable parameters.
DFK-Wen4 comprises \num{21117} points, resulting in \num{147819} trainable parameters.

We set the batch size to \num{128} and trained each model for \num{20} epochs, with an initial learning rate of \num{1e-3} for NNs and \num{5e-4} for kernels. When initializing the kernel radii, we set $\eta=6$. The loss curves are illustrated in Fig.~\ref{fig:analytic}.

\begin{figure}[ht]
    \centering
    \includegraphics[width=0.45\textwidth]{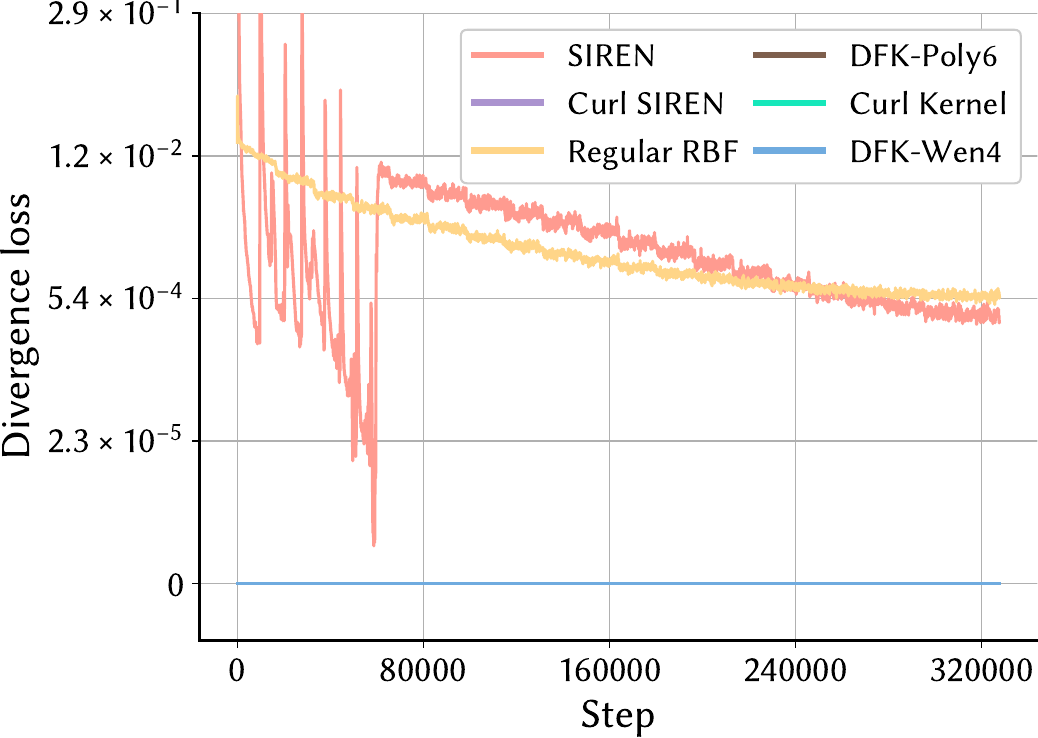}
    \qquad
    \includegraphics[width=0.45\textwidth]{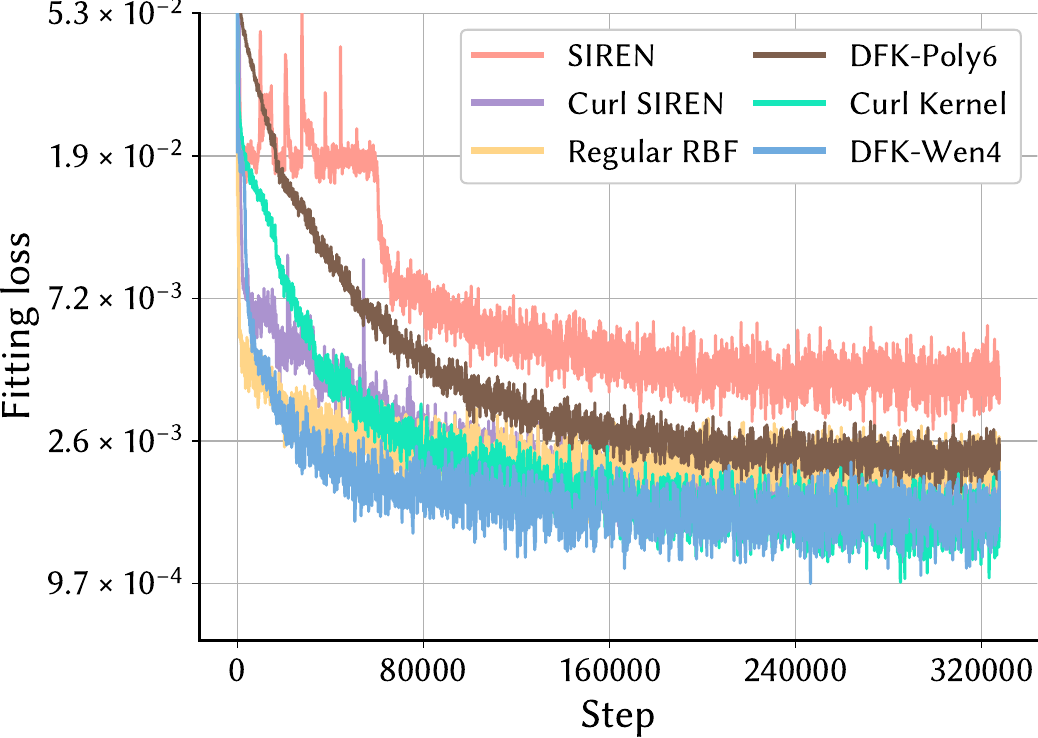}
    \\
    \vspace{-1em}
    \caption{Loss curves of fitting experiments for \emph{simple plume (3D)}. Divergence losses of Curl SIREN, DFK-Poly6, Curl Kernel, and DFK-Wen4 are always zero due to the intrinsic properties of representations.}
    \label{fig:plume}
    \Description{loss curves}
\end{figure}

\paragraph{Simple plume (3D)}
SIREN consists of six hidden layers with 256 neurons each, totaling \num{330755} trainable parameters. 
Curl SIREN consists of six hidden layers with 256 neurons each, totaling \num{330755} trainable parameters. 
Regular RBF comprises \num{42002} points, resulting in \num{294014} trainable parameters.
DFK-Poly6 comprises \num{42002} points, resulting in \num{294014} trainable parameters.
Curl Kernel comprises \num{42002} points, resulting in \num{294014} trainable parameters.
DFK-Wen4 comprises \num{42002} points, resulting in \num{294014} trainable parameters.

We set the batch size to \num{128} and trained each model for \num{20} epochs, with an initial learning rate of \num{1e-3} for NNs and \num{1e-4} for kernels. When initializing the kernel radii, we set $\eta=6$.  The loss curves are illustrated in Fig.~\ref{fig:plume}.

\subsection{Projection}

For the projection task, Curl SIREN is combined with Gradient SIREN, which shares the same architecture, to fit the data, while the two models remain independent.
Similarly, Curl Kernel and Gradient Kernel share the same point positions and radii but have independent weights.

\begin{figure}[ht]
    \centering
    \captionbox{Loss curves of projection experiments for \emph{Taylor vortex (2D)}.\label{fig:taylor}}
    {\includegraphics[width=0.45\textwidth]{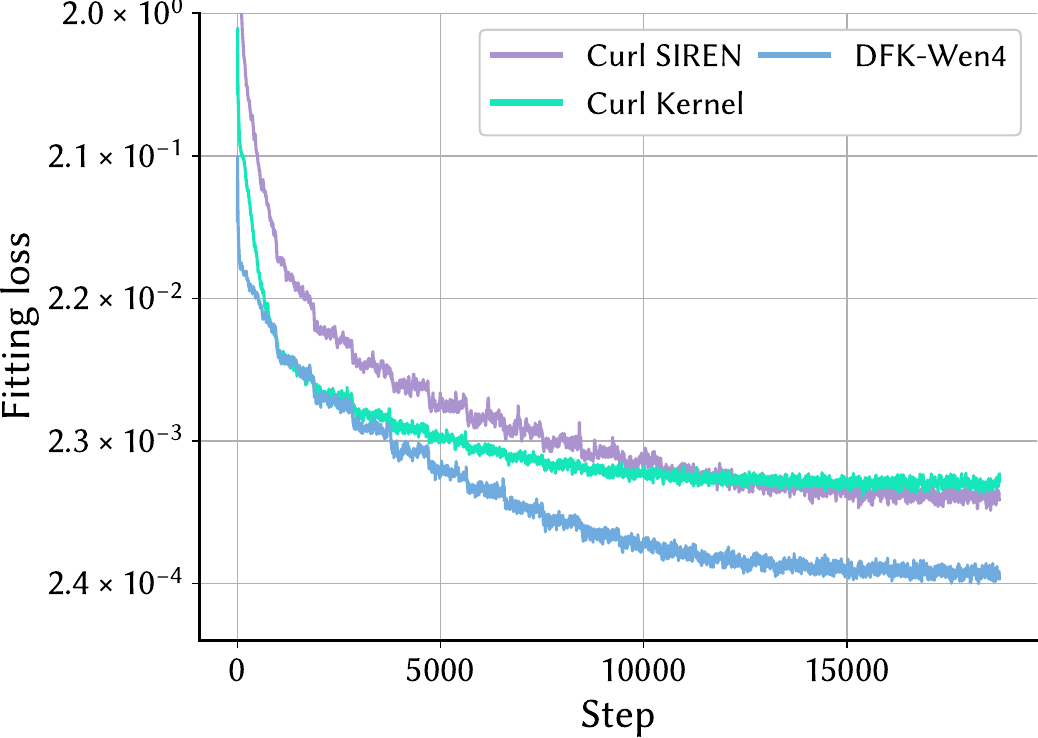}\vspace{-1em}}
    \qquad
    \captionbox{Loss curves of projection experiments for \emph{vortex ring collision (3D)}.\label{fig:rings}}
    {\includegraphics[width=0.45\textwidth]{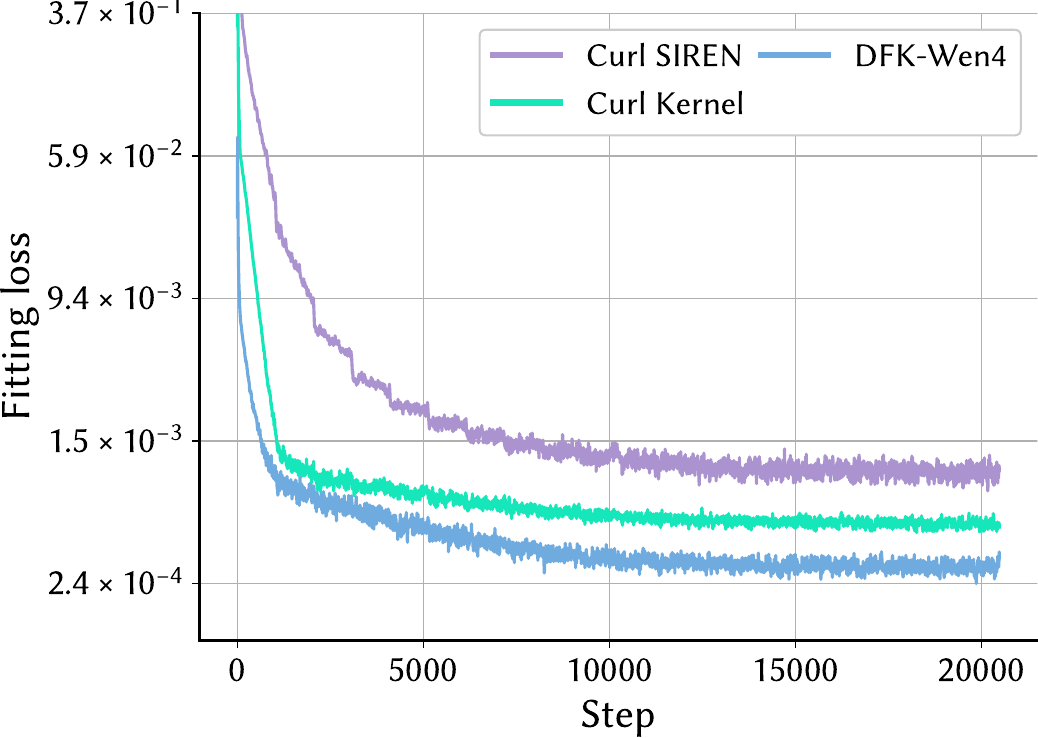}\vspace{-1em}}
    \Description{loss curves}
\end{figure}

\paragraph{Taylor vortex (2D)}
Curl SIREN (as well as Gradient SIREN) consists of four hidden layers with 256 neurons each, totaling \num{396802} trainable parameters. 
Curl Kernel (as well as Gradient Kernel) comprises \num{5416} points, resulting in \num{27080} trainable parameters.
DFK-Wen4 comprises \num{5416} points, resulting in \num{27080} trainable parameters.

We set the batch size to \num{128} and trained each model for \num{20} epochs, with an initial learning rate of \num{1e-3}. When initializing the kernel radii, we set $\eta=27$. The loss curves are illustrated in Fig.~\ref{fig:taylor}.

\paragraph{Vortex ring collision (3D)}
Curl SIREN (as well as Gradient SIREN) consists of four hidden layers with 256 neurons each, totaling \num{397828} trainable parameters. 
Curl Kernel (as well as Gradient Kernel) comprises \num{8544} points, resulting in \num{68352} trainable parameters.
DFK-Wen4 comprises \num{8544} points, resulting in \num{59808} trainable parameters.

We set the batch size to \num{2048} and trained each model for \num{20} epochs, with an initial learning rate of \num{1e-3}. When initializing the kernel radii, we set $\eta=12$. The loss curves are illustrated in Fig.~\ref{fig:rings}.

\subsection{Inpainting}

% Completion - static_laminar_0
\begin{figure}[t]
    \centering
    \subcaptionbox{\SI{0}{\degree};}{\includegraphics[width=0.325\textwidth]{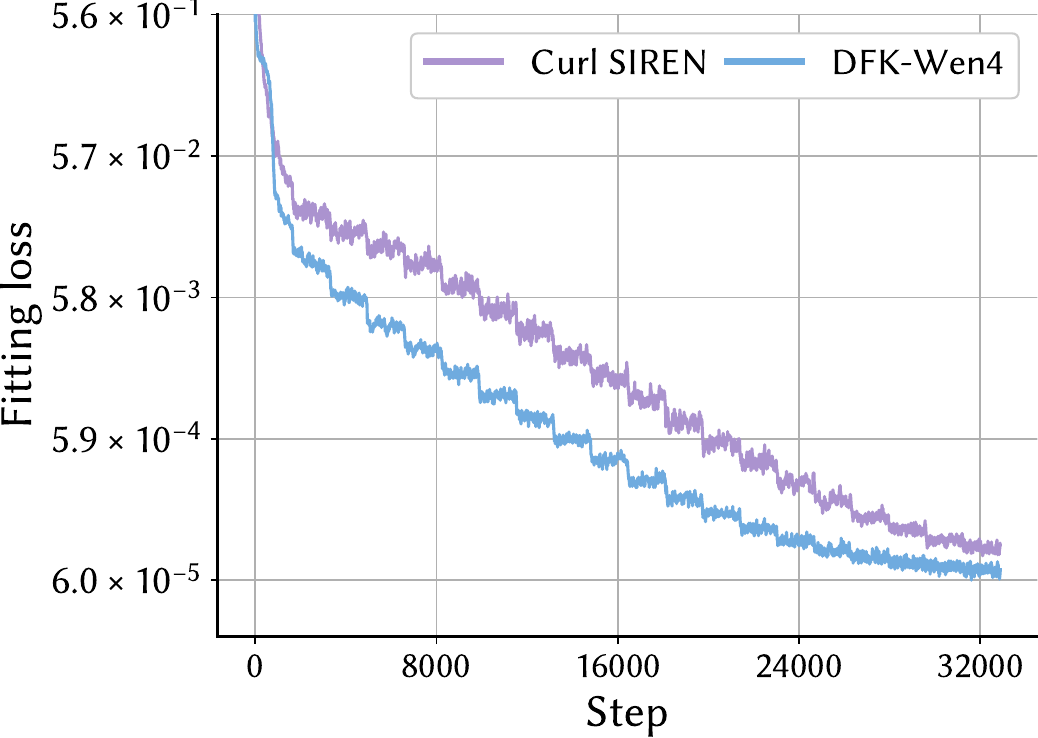}}
    \hfill
    \subcaptionbox{\SI{45}{\degree};}{\includegraphics[width=0.325\textwidth]{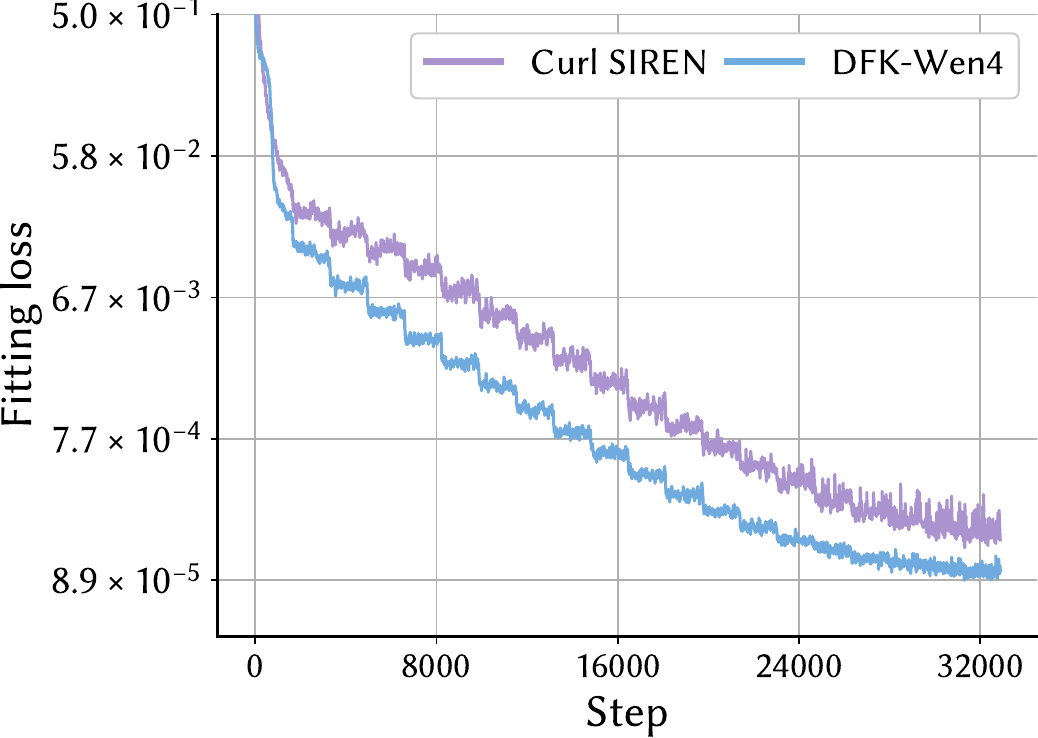}}
    \hfill
    \subcaptionbox{\SI{90}{\degree};}{\includegraphics[width=0.325\textwidth]{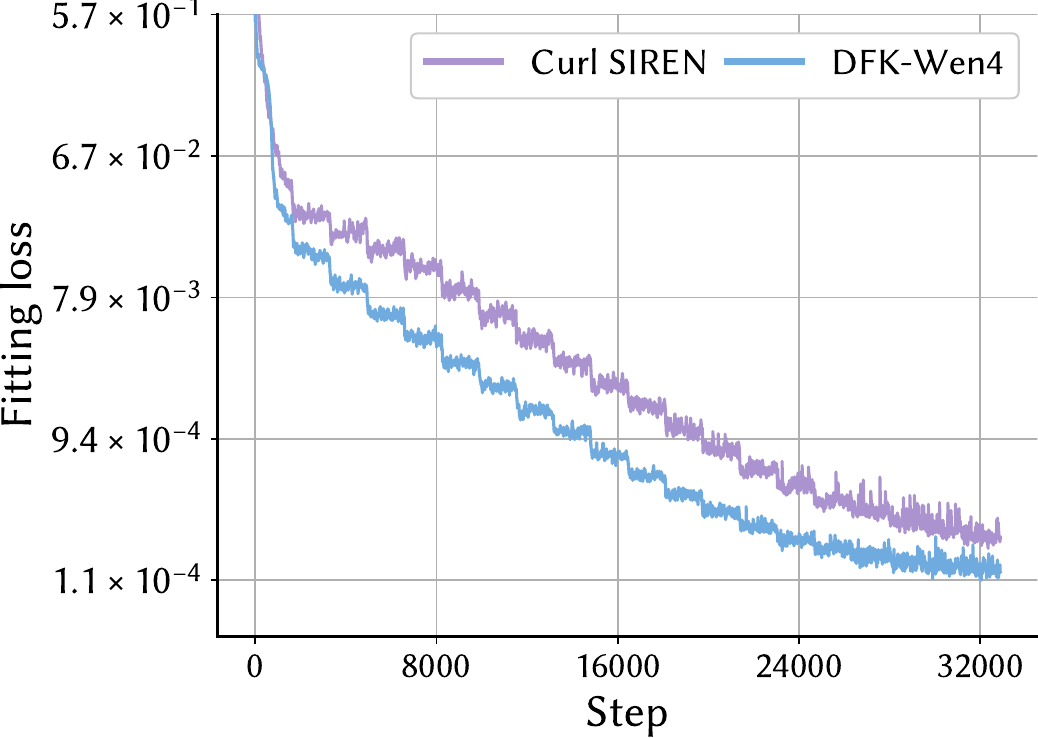}}
    \\
    \vspace{-1em}
    \caption{Loss curves of inpainting experiments for \emph{laminar flows (2D)}.}
    \label{fig:laminar}
    \Description{loss curves}
\end{figure}

\paragraph{Laminar flows (2D)}
Curl SIREN consists of four hidden layers with 256 neurons each, totaling \num{198401} trainable parameters. 
DFK-Wen4 comprises \num{5416} points, resulting in \num{27080} trainable parameters.

We set the batch size to \num{128} and trained each model for \num{20} epochs, with an initial learning rate of \num{1e-3}. When initializing the kernel radii, we set $\eta=9$.  The loss curves are illustrated in Fig.~\ref{fig:laminar}.

\begin{figure}[ht]
    \centering
    \captionbox{Loss curves of inpainting experiments for \emph{missile (2D)}.\label{fig:missile}}
    {\includegraphics[width=0.45\textwidth]{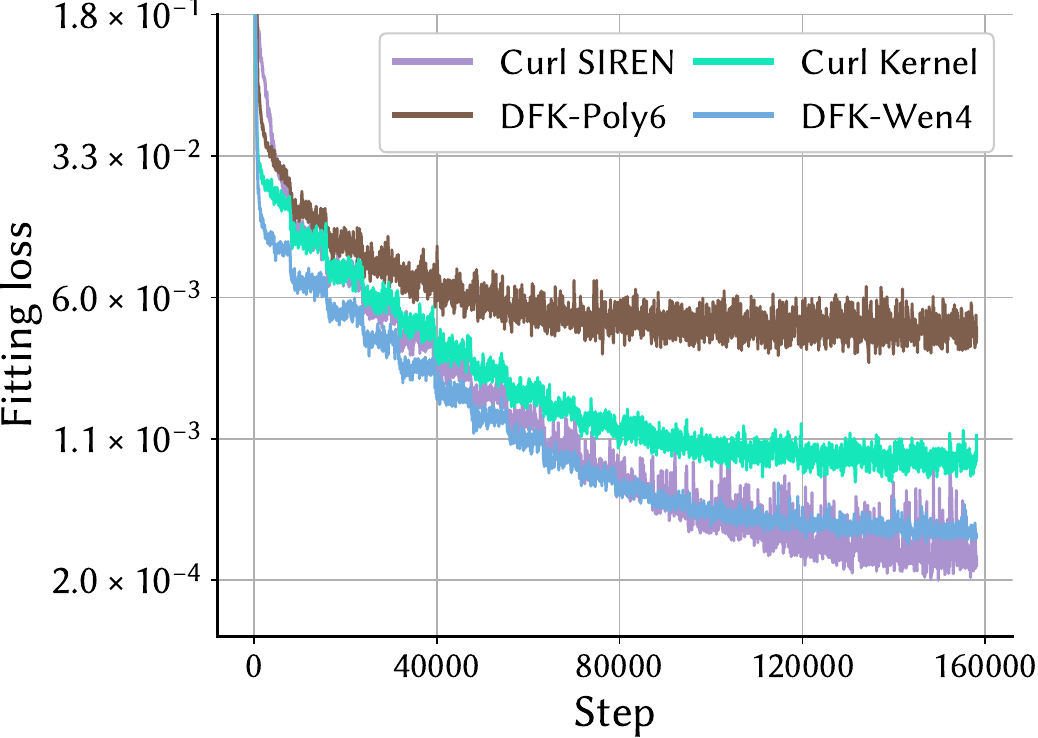}\vspace{-1em}}
    \qquad
    \captionbox{Loss curves of inpainting experiments for \emph{bullet (3D)}.\label{fig:bullet}}
    {\includegraphics[width=0.45\textwidth]{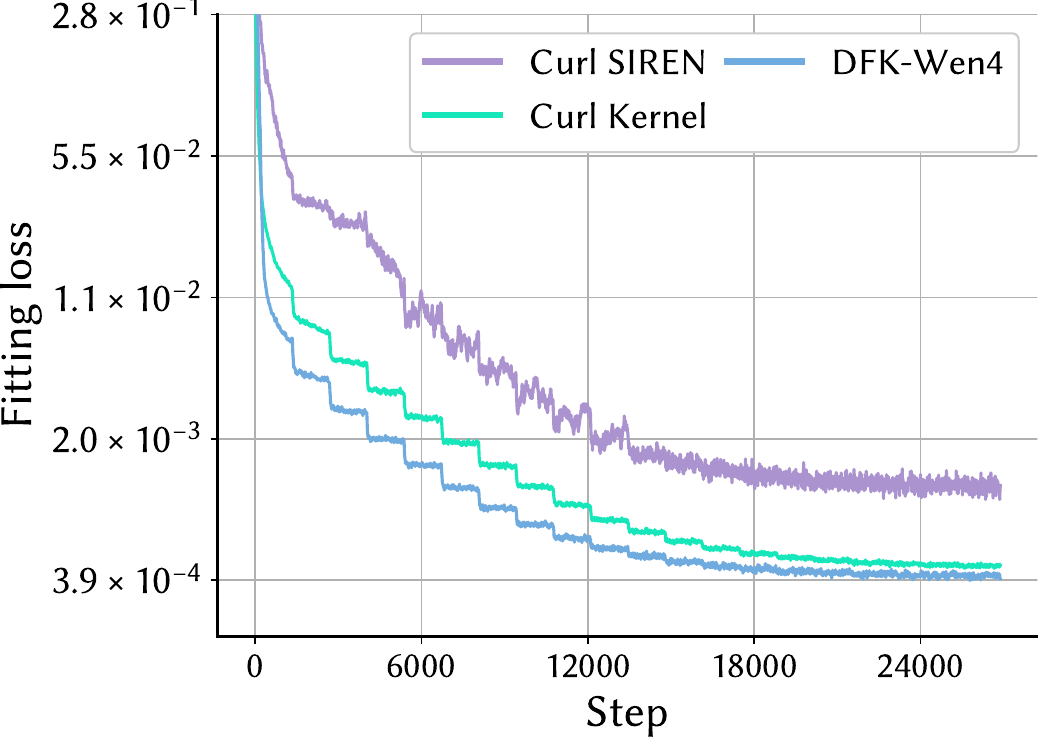}\vspace{-1em}}
    \Description{loss curves}
\end{figure}

\paragraph{Missile (2D)}
Curl SIREN consists of three hidden layers with 128 neurons each, totaling \num{33537} trainable parameters. 
DFK-Poly6 comprises \num{5390} points, resulting in \num{26950} trainable parameters.
Curl Kernel comprises \num{6797} points, resulting in \num{27188} trainable parameters.
DFK-Wen4 comprises \num{5390} points, resulting in \num{26950} trainable parameters.

We set the batch size to \num{128} and trained each model for \num{20} epochs, with an initial learning rate of \num{1e-3}. When initializing the kernel radii, we set $\eta=9$. The loss curves are illustrated in Fig.~\ref{fig:missile}.

\paragraph{Bullet (3D)}
Curl SIREN consists of four hidden layers with 256 neurons each, totaling \num{199171} trainable parameters. 
Curl Kernel comprises \num{25325} points, resulting in \num{177275} trainable parameters.
DFK-Wen4 comprises \num{25325} points, resulting in \num{177275} trainable parameters.

We set the batch size to \num{4096} and trained each model for \num{20} epochs, with an initial learning rate of \num{1e-3}. When initializing the kernel radii, we set $\eta=6$. The loss curves are illustrated in Fig.~\ref{fig:bullet}.

\subsection{Super-Resolution}

\begin{figure}[ht]
    \centering
    \subcaptionbox{\emph{Turbulence A (2D)}.\label{fig:multisrc}}
    {\includegraphics[width=0.325\textwidth]{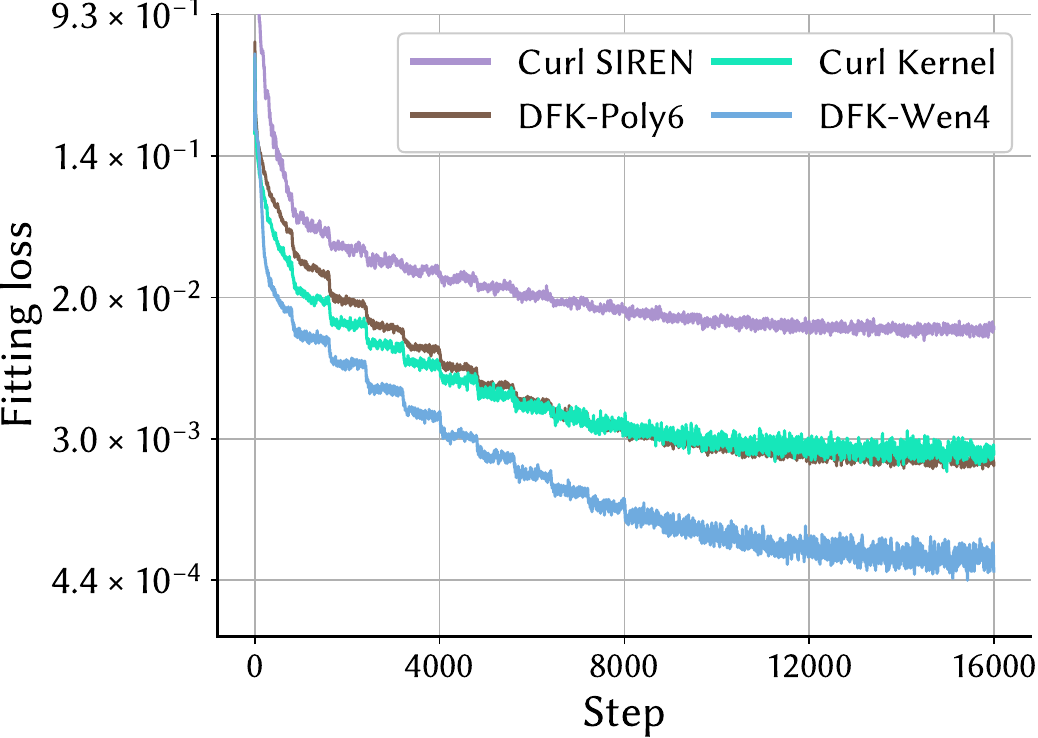}}
    \hfill
    \subcaptionbox{\emph{Turbulence B (2D)}.\label{fig:mixing}}
    {\includegraphics[width=0.325\textwidth]{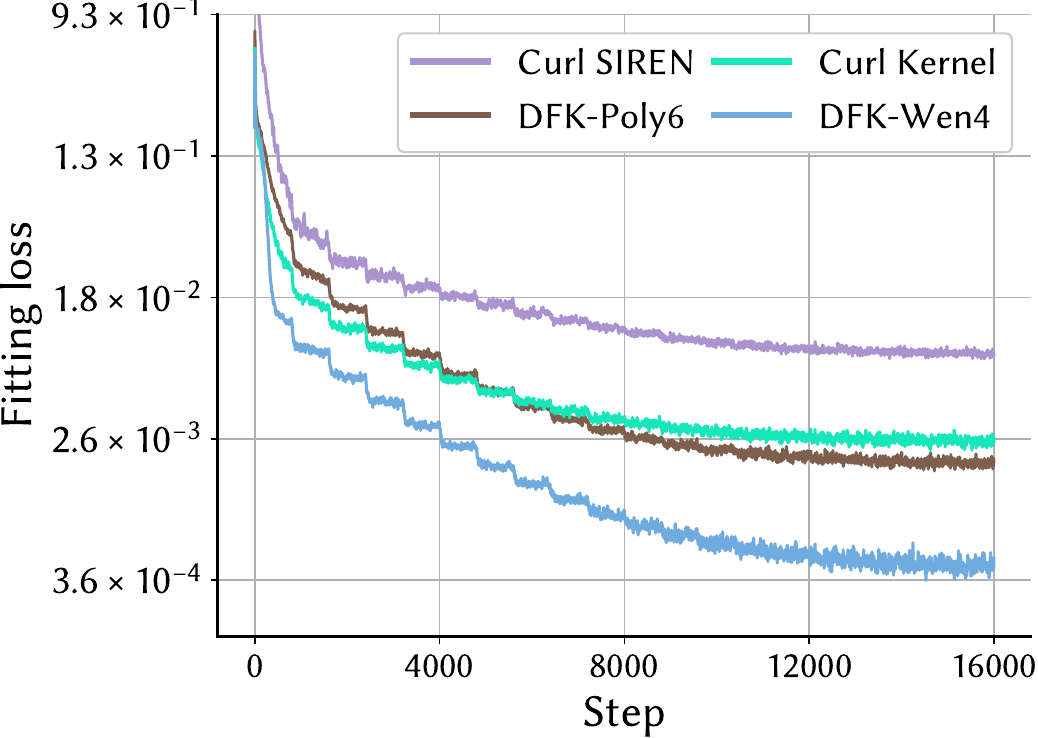}}
    \hfill
    \subcaptionbox{\emph{Spherical obstacle (3D)}.\label{fig:obsplume}}
    {\includegraphics[width=0.325\textwidth]{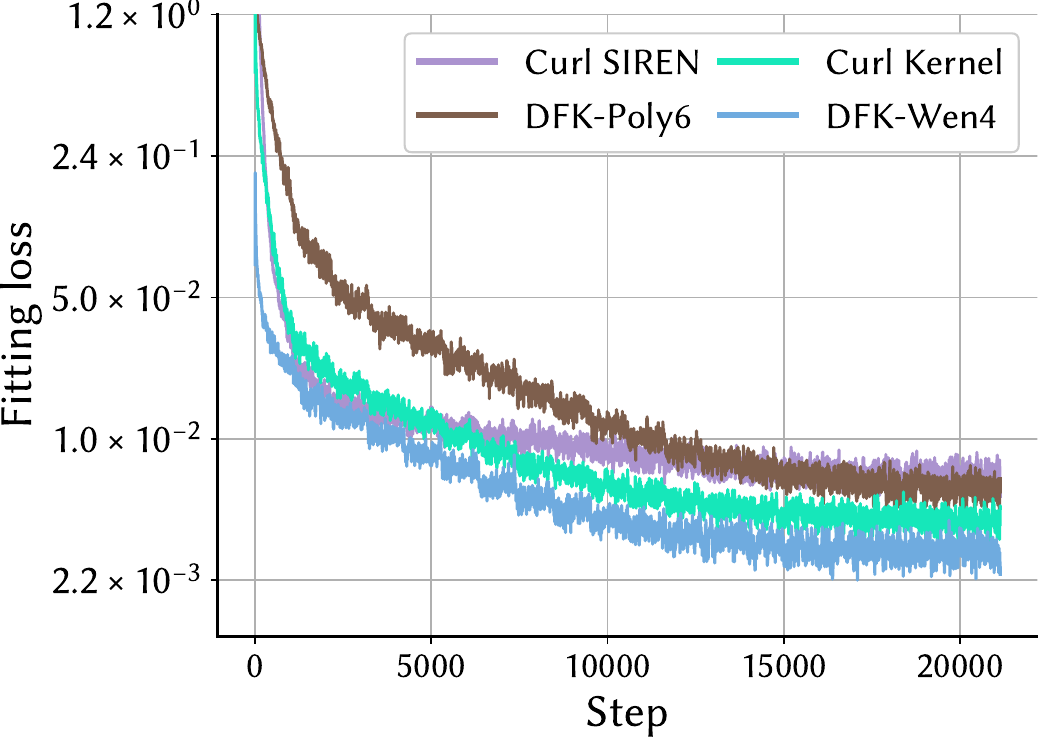}}
    \\
    \vspace{-1em}
    \caption{Loss curves of super-resolution experiments.}
    \Description{loss curves}
\end{figure}

\paragraph{Turbulence A (2D)}
Curl SIREN consists of four hidden layers with 256 neurons each, totaling \num{198401} trainable parameters. 
DFK-Poly6 comprises \num{5416} points, resulting in \num{27080} trainable parameters.
Curl Kernel comprises \num{6834} points, resulting in \num{27336} trainable parameters.
DFK-Wen4 comprises \num{5416} points, resulting in \num{27080} trainable parameters.

We set the batch size to \num{128} and trained each model for \num{20} epochs, with an initial learning rate of \num{1e-3}. When initializing the kernel radii, we set $\eta=9$.  The loss curves are illustrated in Fig.~\ref{fig:multisrc}.

\paragraph{Turbulence B (2D)}
Curl SIREN consists of four hidden layers with 256 neurons each, totaling \num{198401} trainable parameters. 
DFK-Poly6 comprises \num{5416} points, resulting in \num{27080} trainable parameters.
Curl Kernel comprises \num{6834} points, resulting in \num{27336} trainable parameters.
DFK-Wen4 comprises \num{5416} points, resulting in \num{27080} trainable parameters.

We set the batch size to \num{128} and trained each model for \num{20} epochs, with an initial learning rate of \num{1e-3}. When initializing the kernel radii, we set $\eta=9$.  The loss curves are illustrated in Fig.~\ref{fig:mixing}.

\paragraph{Spherical obstacle (3D)}
Curl SIREN consists of six hidden layers with 256 neurons each, totaling \num{330755} trainable parameters. 
DFK-Poly6 comprises \num{42002} points, resulting in \num{294014} trainable parameters.
Curl Kernel comprises \num{42002} points, resulting in \num{294014} trainable parameters.
DFK-Wen4 comprises \num{42002} points, resulting in \num{294014} trainable parameters.

We set the batch size to \num{128} and trained each model for \num{20} epochs, with an initial learning rate of \num{1e-3}. When initializing the kernel radii, we set $\eta=12$.  The loss curves are illustrated in Fig.~\ref{fig:obsplume}.

\subsection{Inference}

\begin{figure}[ht]
    \centering
    \includegraphics[width=0.245\textwidth]{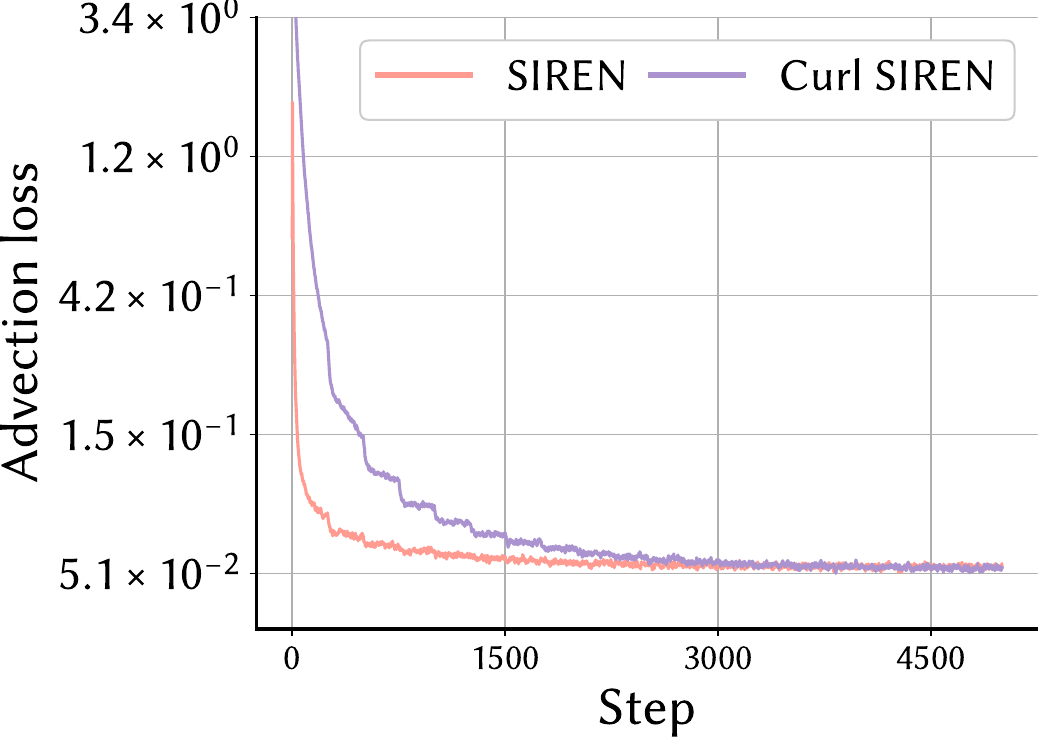}
    \hfill
    \includegraphics[width=0.245\textwidth]{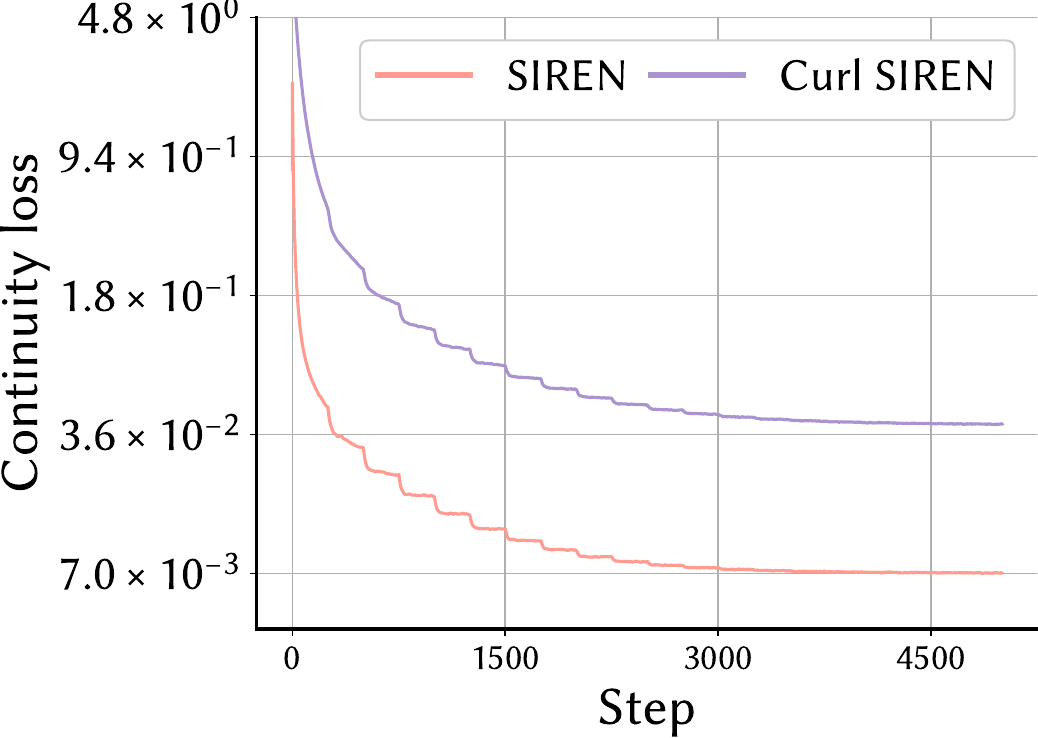}
    \hfill
    \includegraphics[width=0.245\textwidth]{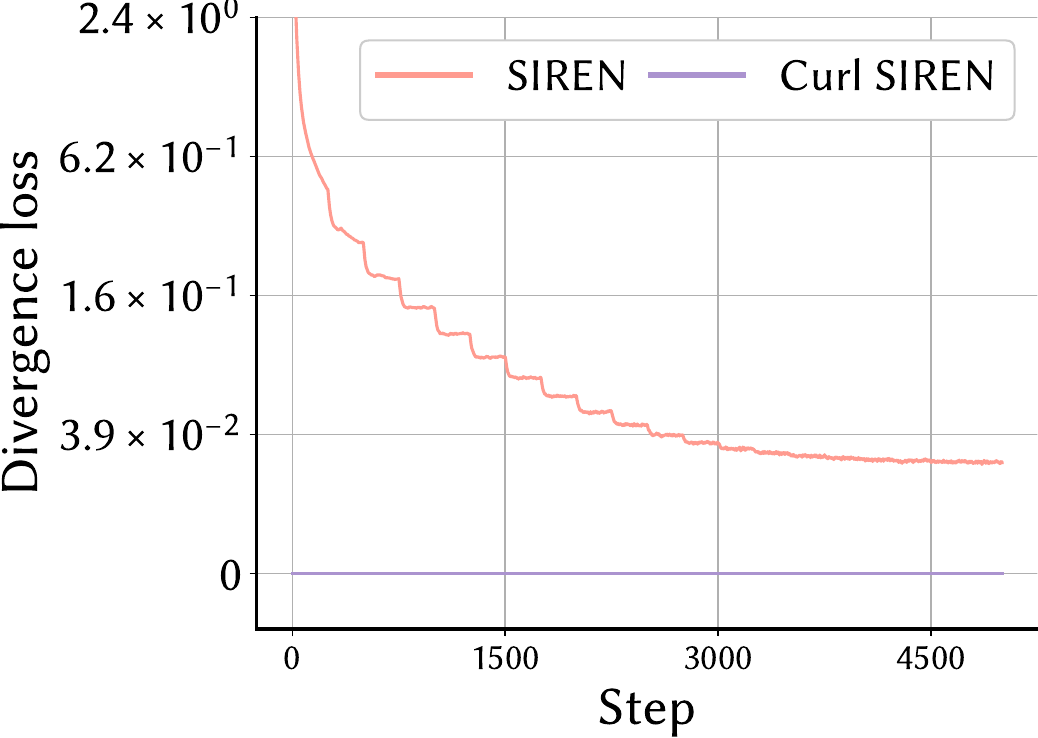}
    \hfill
    \includegraphics[width=0.245\textwidth]{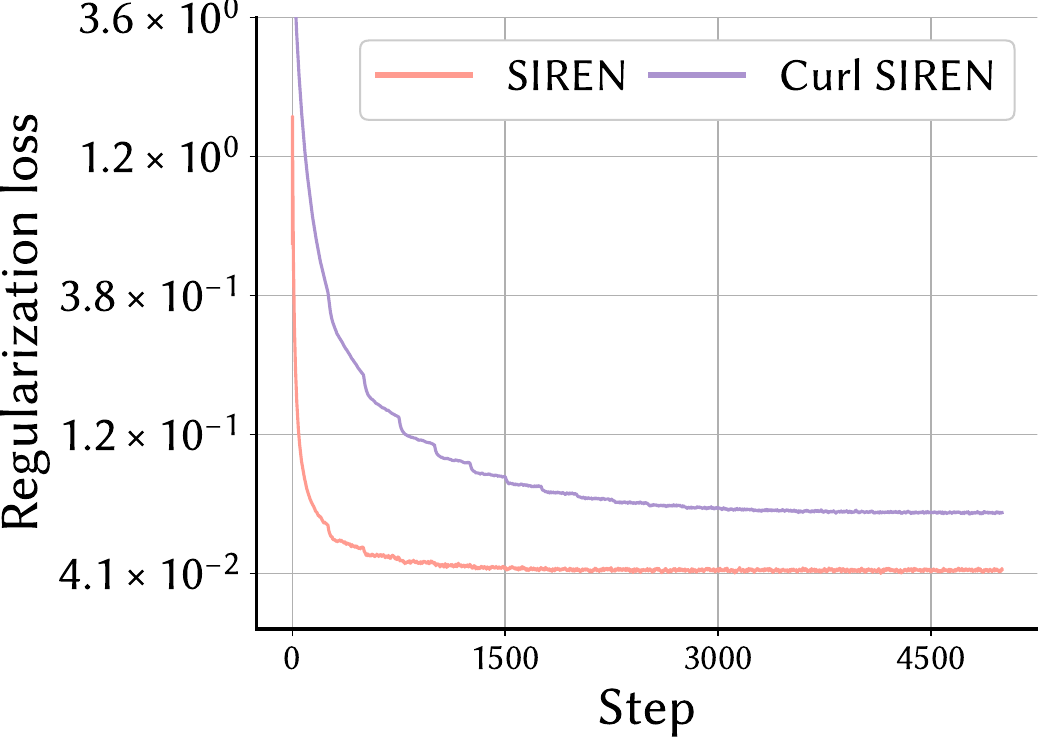}
    \\
    \includegraphics[width=0.245\textwidth]{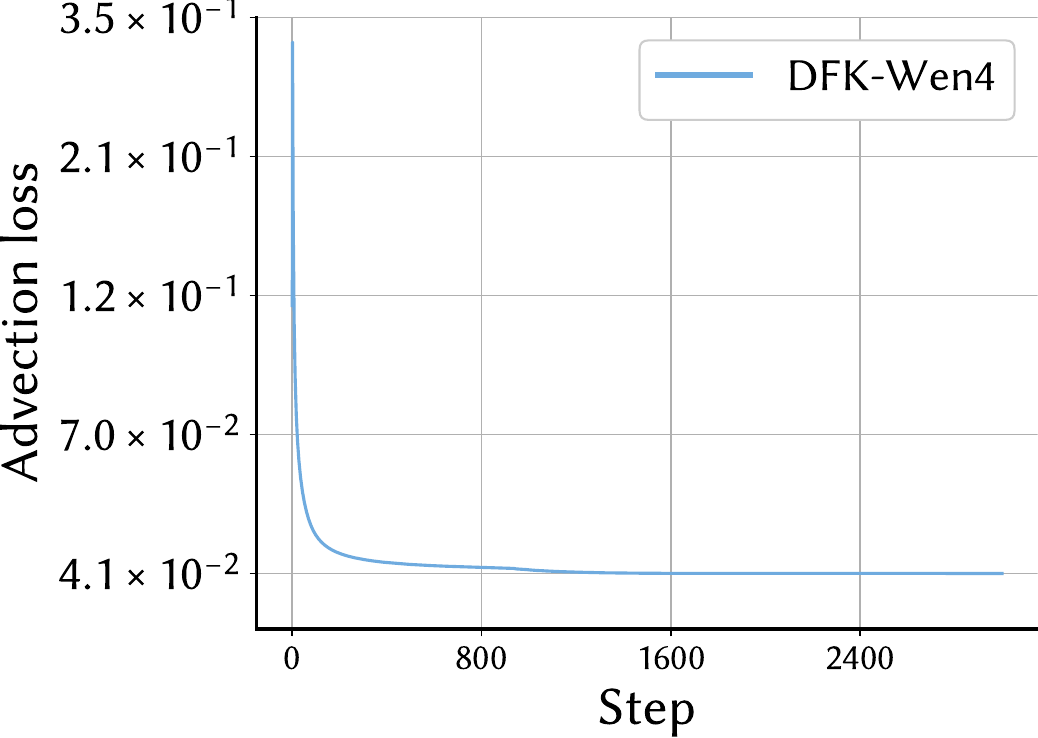}
    \hfill
    \includegraphics[width=0.245\textwidth]{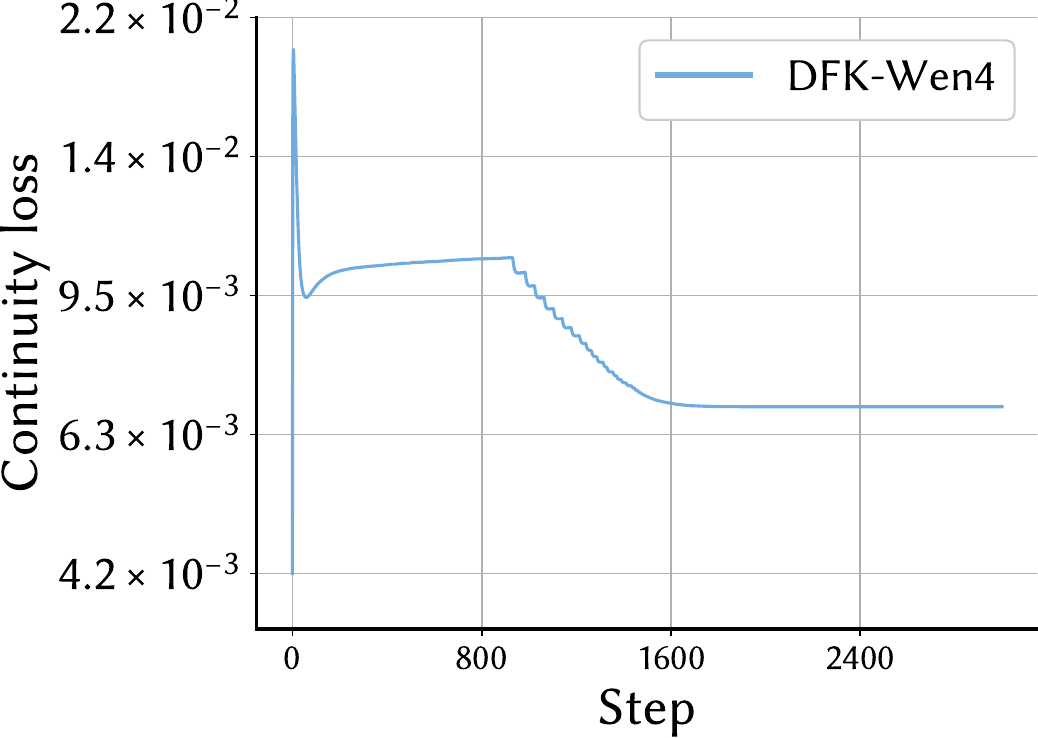}
    \hfill
    \includegraphics[width=0.245\textwidth]{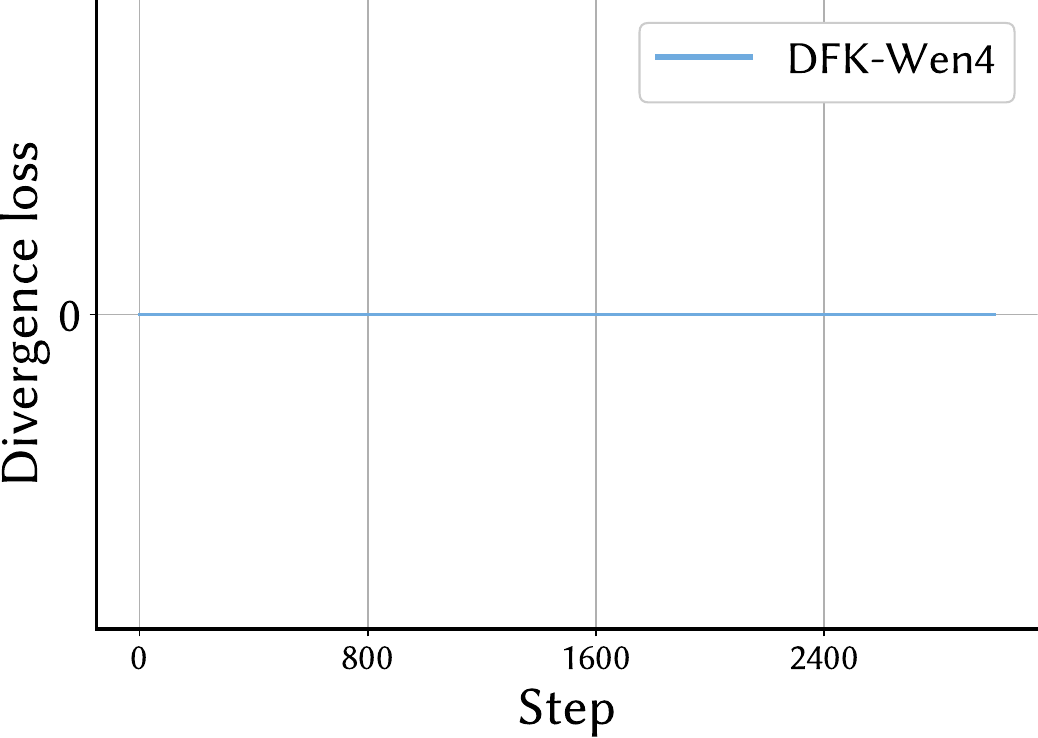}
    \hfill
    \includegraphics[width=0.245\textwidth]{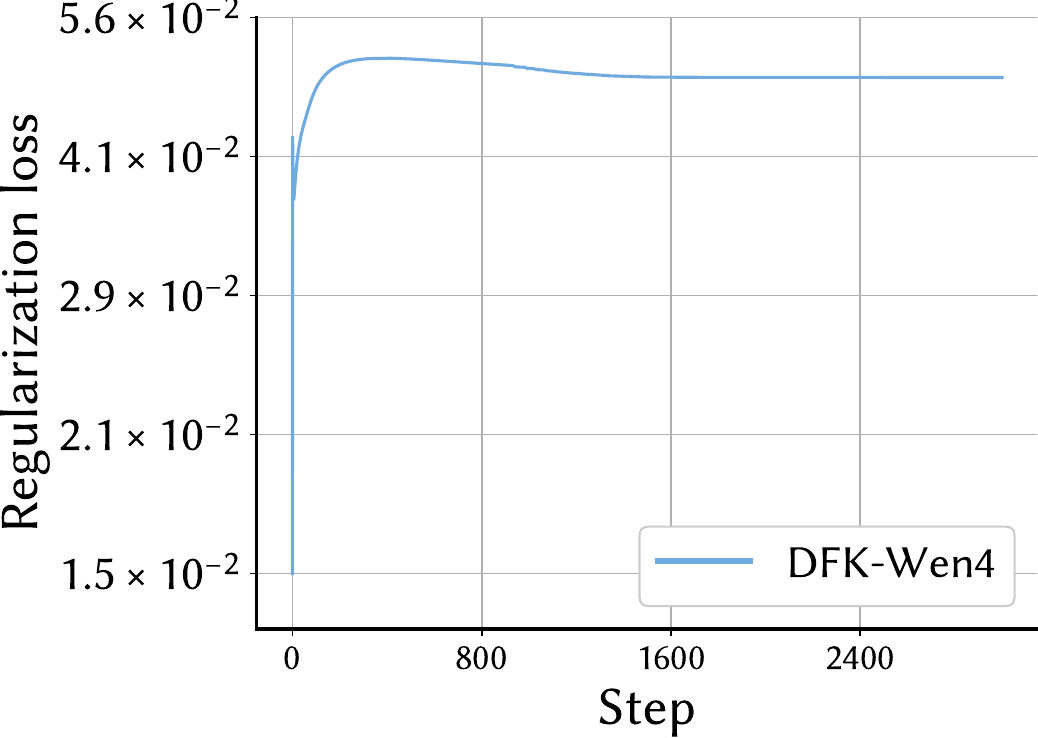}
    \\
    \vspace{-1em}
    \caption{Loss curves of inference experiments for \emph{rising (3D)}.}
    \label{fig:rising}
    \Description{loss curves}
\end{figure}

\paragraph{Rising (3D)}
SIREN consists of four hidden layers with 256 neurons each, totaling \num{29676479} trainable parameters. 
Curl SIREN consists of four hidden layers with 256 neurons each, totaling \num{29676479} trainable parameters. 
DFK-Wen4 comprises \num{26375} points, resulting in \num{11868750} trainable parameters.

We set the batch size to $\num{3072}\times\num{151}$ and trained each NN model for \num{20} epochs, with an initial learning rate of \num{1e-3}.
We trained DFKs-Wen4 using batch gradient descent for \num{1750} epochs, with an initial learning rate of \num{1e-2}. When initializing the kernel radii, we set $\eta=6$. The loss curves are illustrated in Fig.~\ref{fig:rising}.

\begin{figure}[ht]
    \centering
    \includegraphics[width=0.245\textwidth]{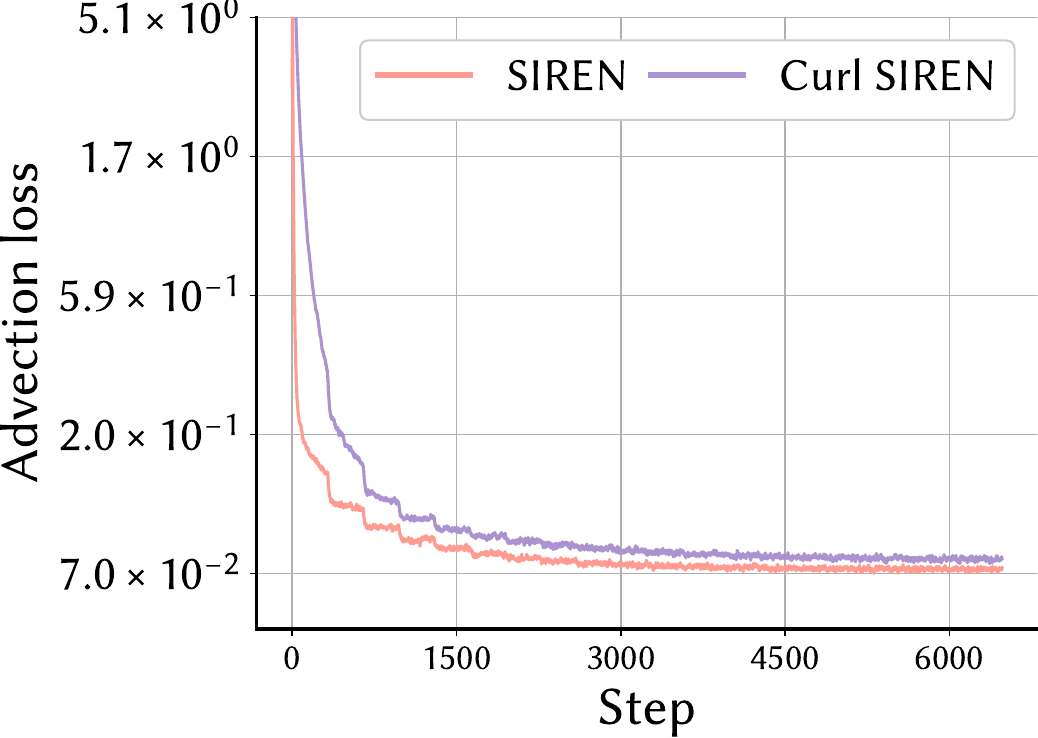}
    \hfill
    \includegraphics[width=0.245\textwidth]{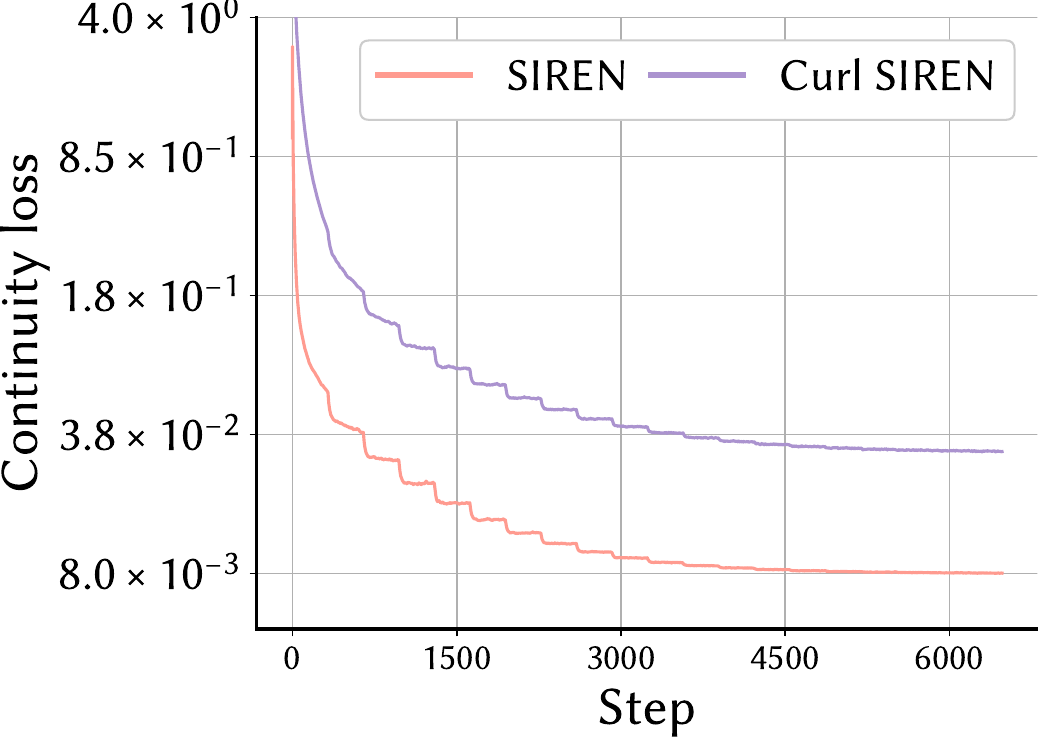}
    \hfill
    \includegraphics[width=0.245\textwidth]{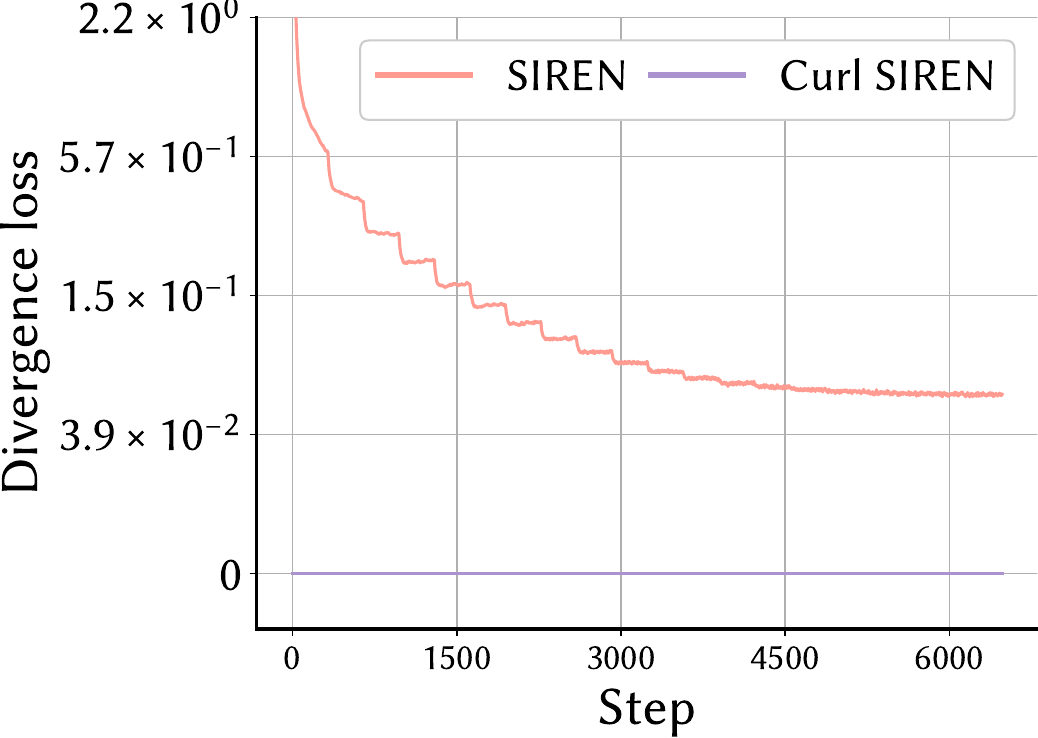}
    \hfill
    \includegraphics[width=0.245\textwidth]{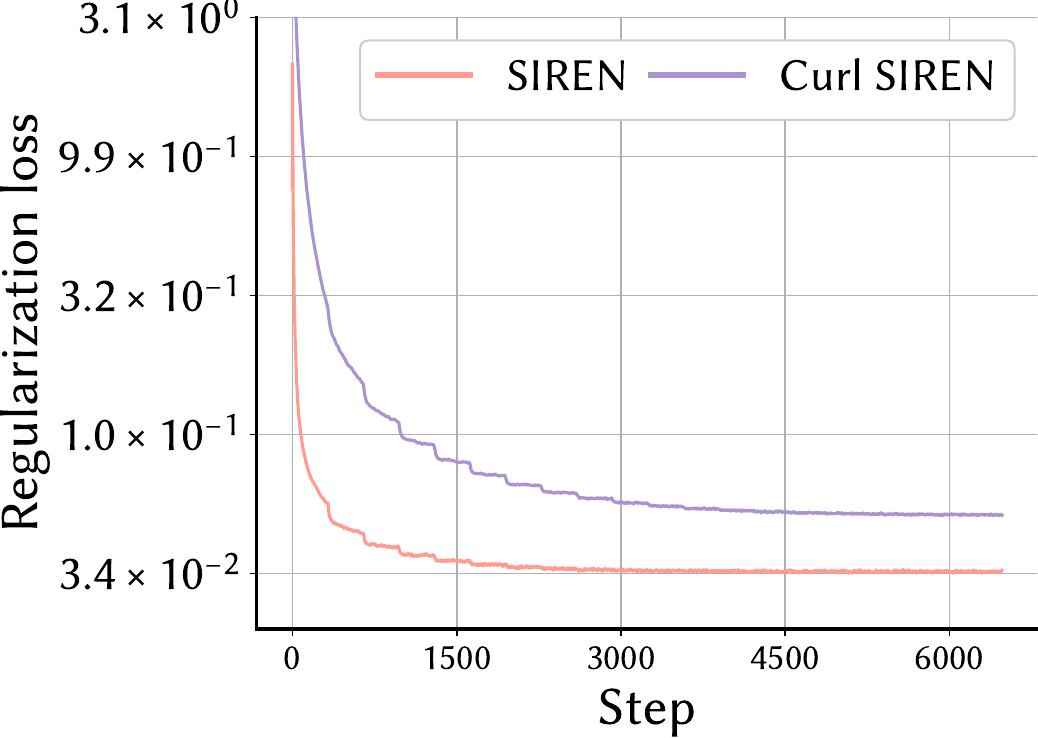}
    \\
    \includegraphics[width=0.245\textwidth]{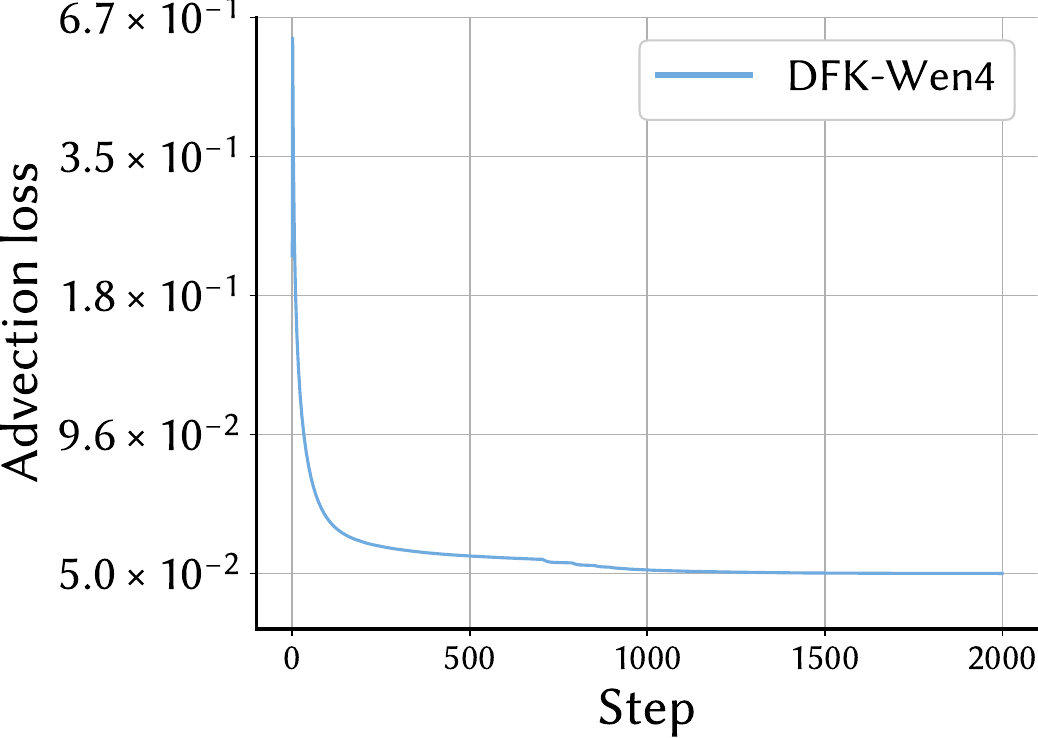}
    \hfill
    \includegraphics[width=0.245\textwidth]{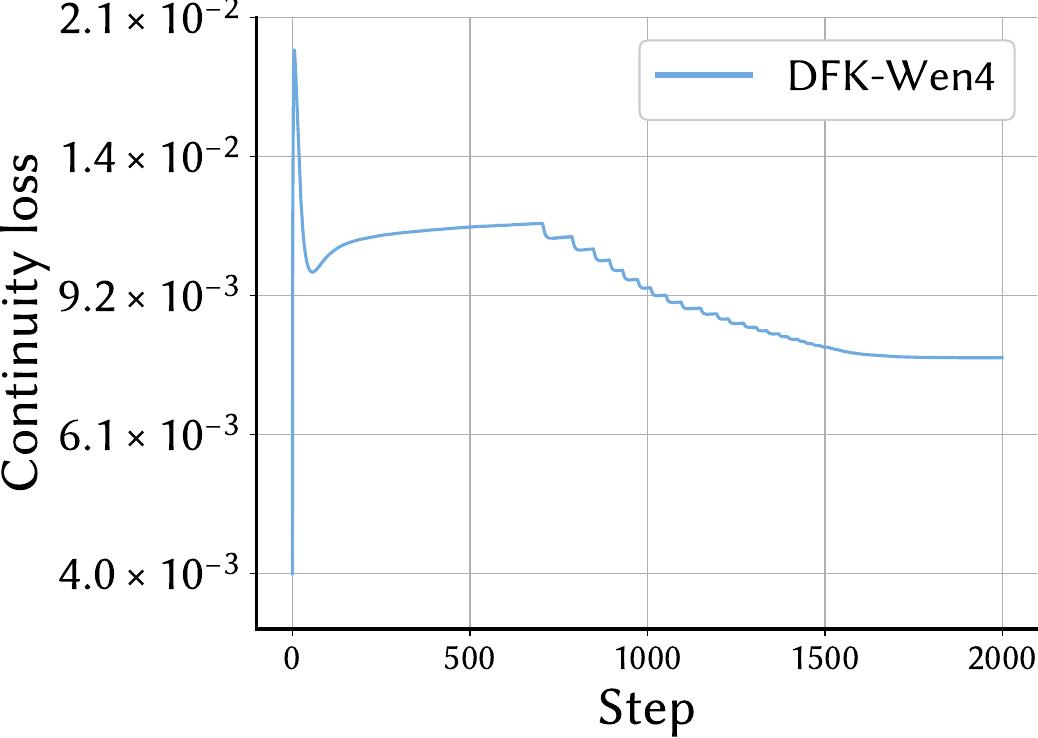}
    \hfill
    \includegraphics[width=0.245\textwidth]{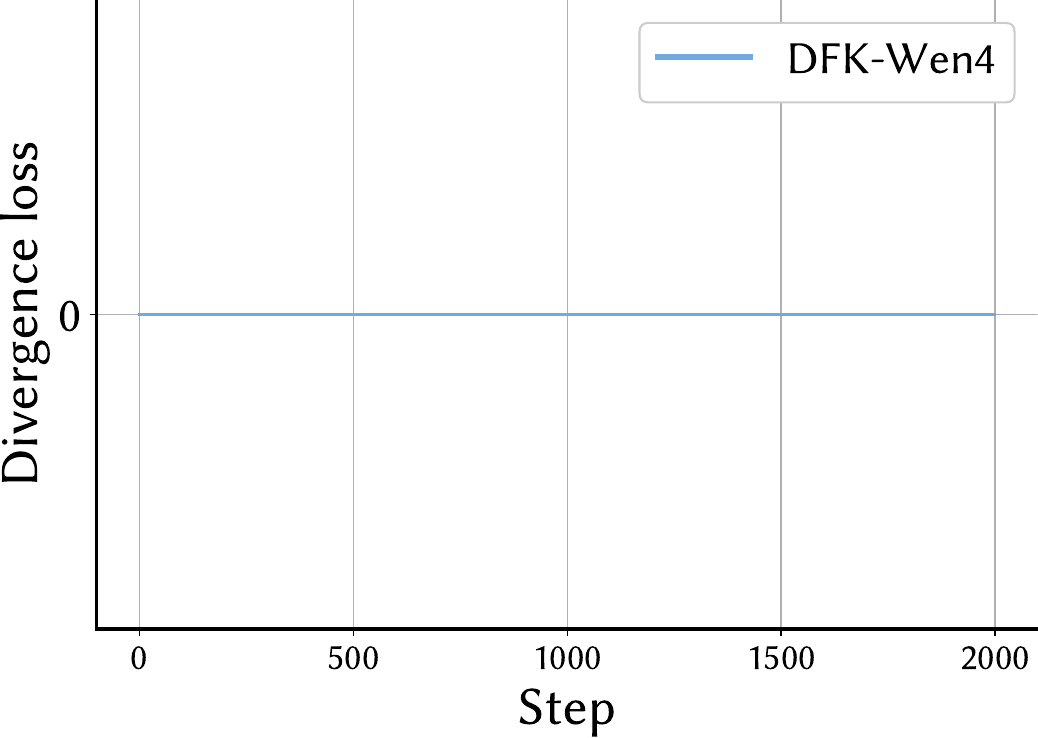}
    \hfill
    \includegraphics[width=0.245\textwidth]{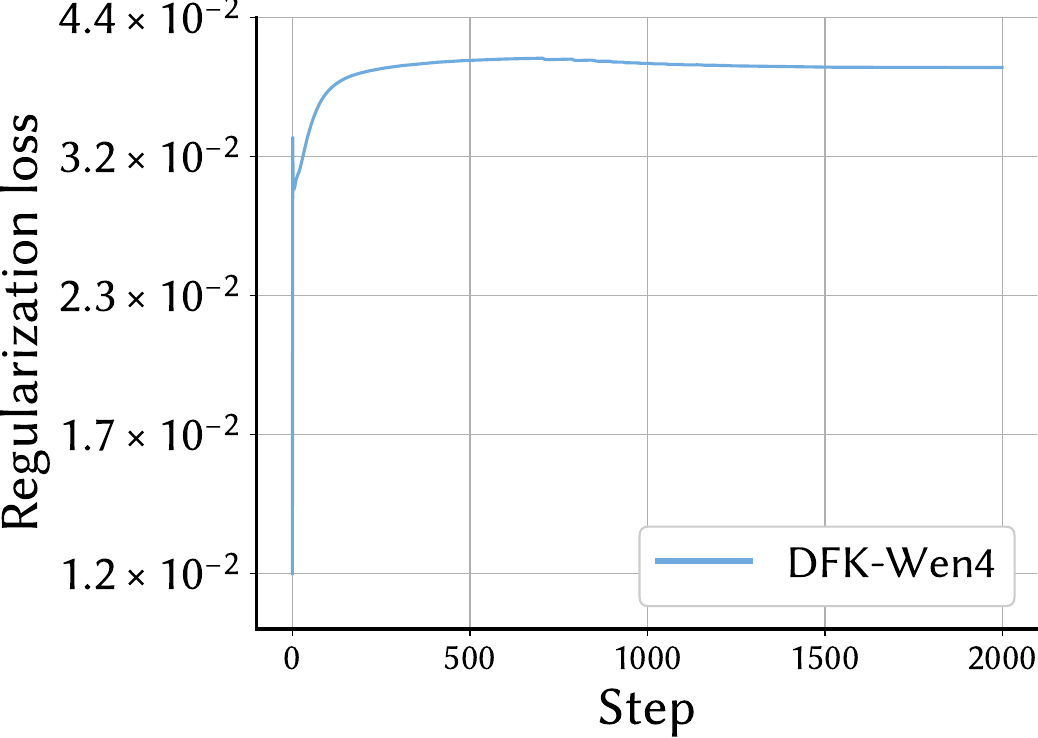}
    \\
    \vspace{-1em}
    \caption{Loss curves of inference experiments for \emph{teapot (3D)}.}
    \label{fig:teapot}
    \Description{loss curves}
\end{figure}

\paragraph{Teapot (3D)}
SIREN consists of four hidden layers with 256 neurons each, totaling \num{19917100} trainable parameters. 
Curl SIREN consists of four hidden layers with 256 neurons each, totaling \num{19917100} trainable parameters. 
DFK-Wen4 comprises \num{42053} points, resulting in \num{12615900} trainable parameters.

We set the batch size to $\num{4096}\times\num{101}$ and trained each NN model for \num{20} epochs, with an initial learning rate of \num{1e-3}.
We trained DFKs-Wen4 using batch gradient descent for \num{2000} epochs, with an initial learning rate of \num{1e-2}. When initializing the kernel radii, we set $\eta=6$. The loss curves are illustrated in Fig.~\ref{fig:teapot}.

\begin{figure}[ht]
    \centering
    \includegraphics[width=0.245\textwidth]{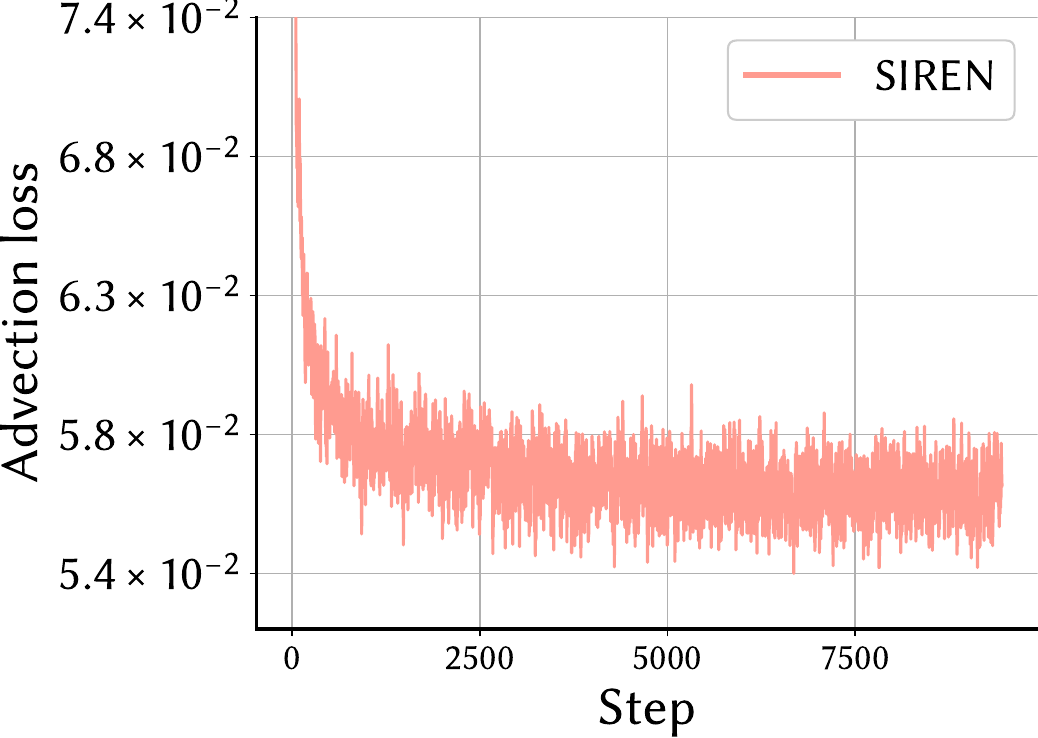}
    \hfill
    \includegraphics[width=0.245\textwidth]{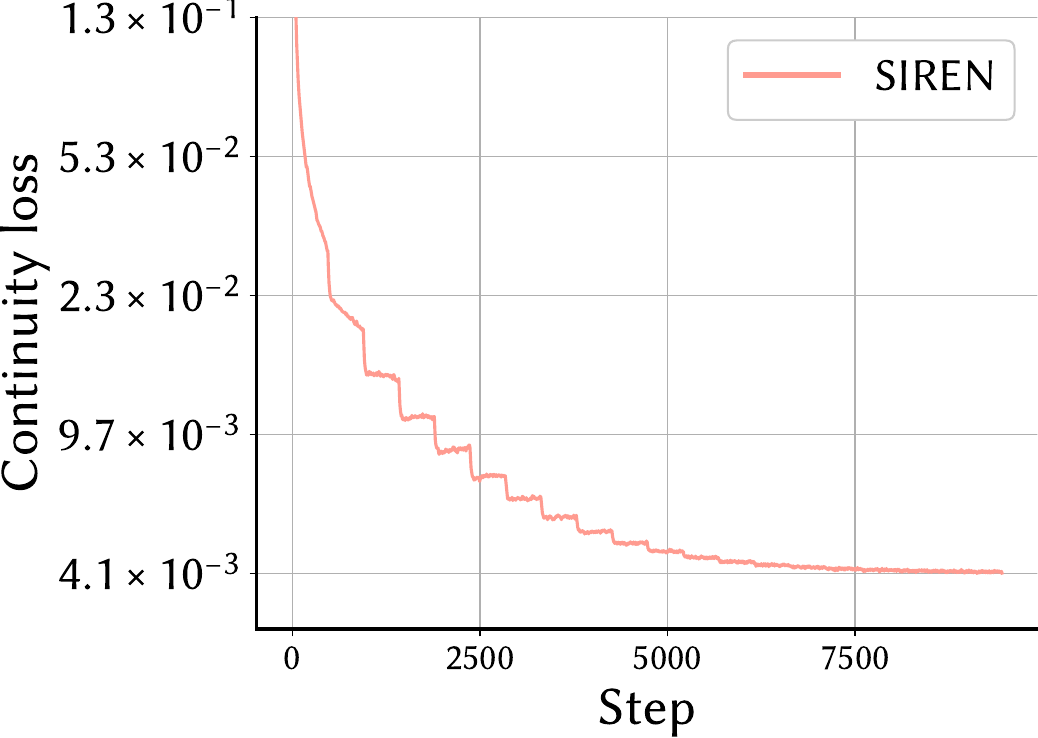}
    \hfill
    \includegraphics[width=0.245\textwidth]{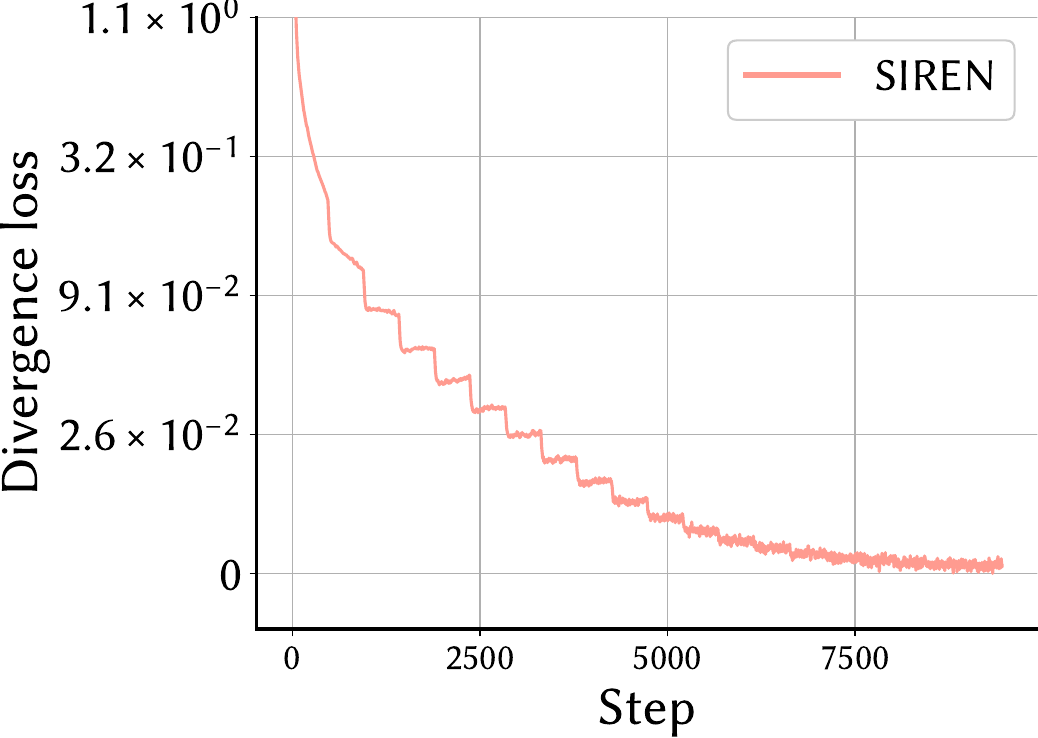}
    \hfill
    \includegraphics[width=0.245\textwidth]{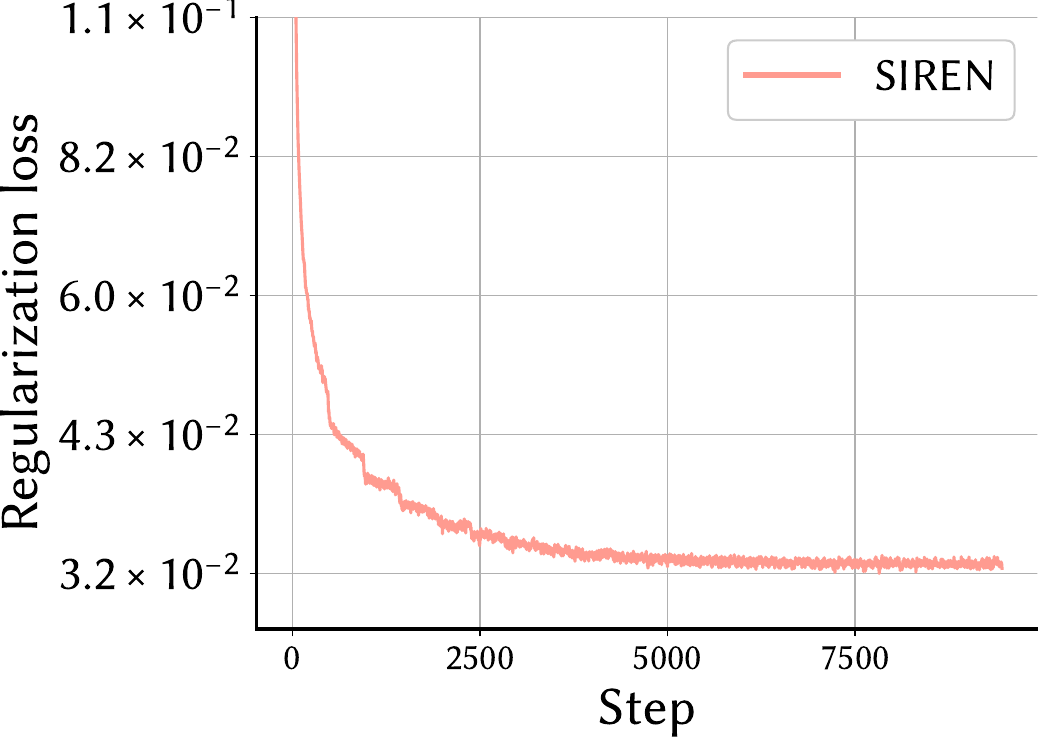}
    \\
    \includegraphics[width=0.245\textwidth]{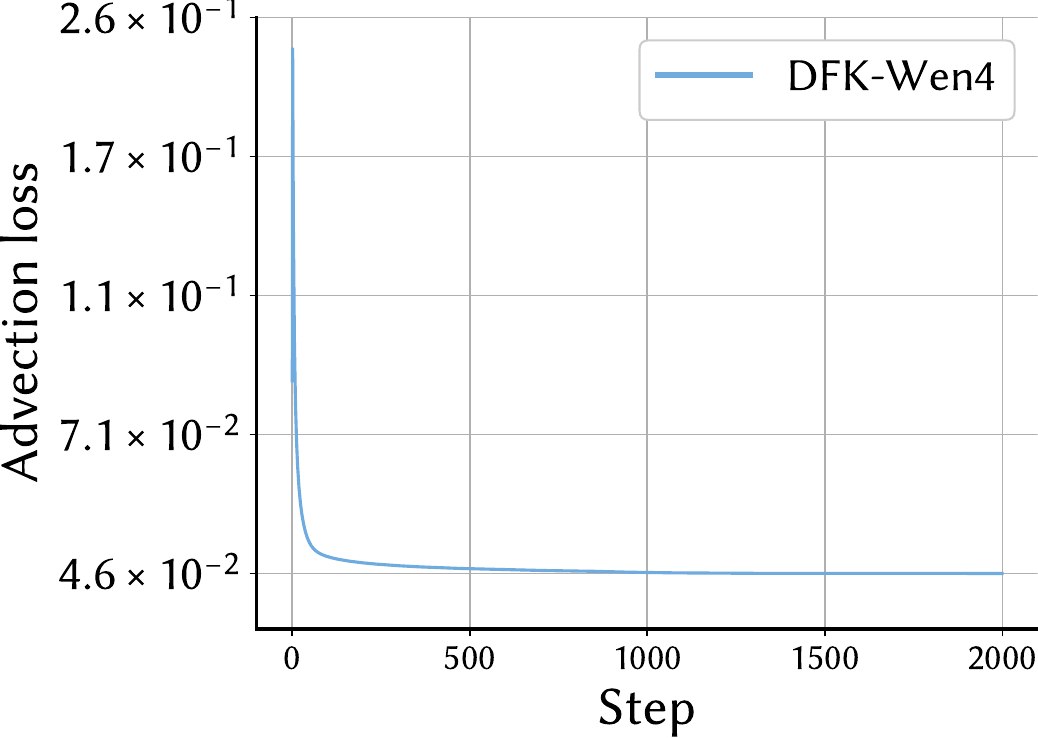}
    \hfill
    \includegraphics[width=0.245\textwidth]{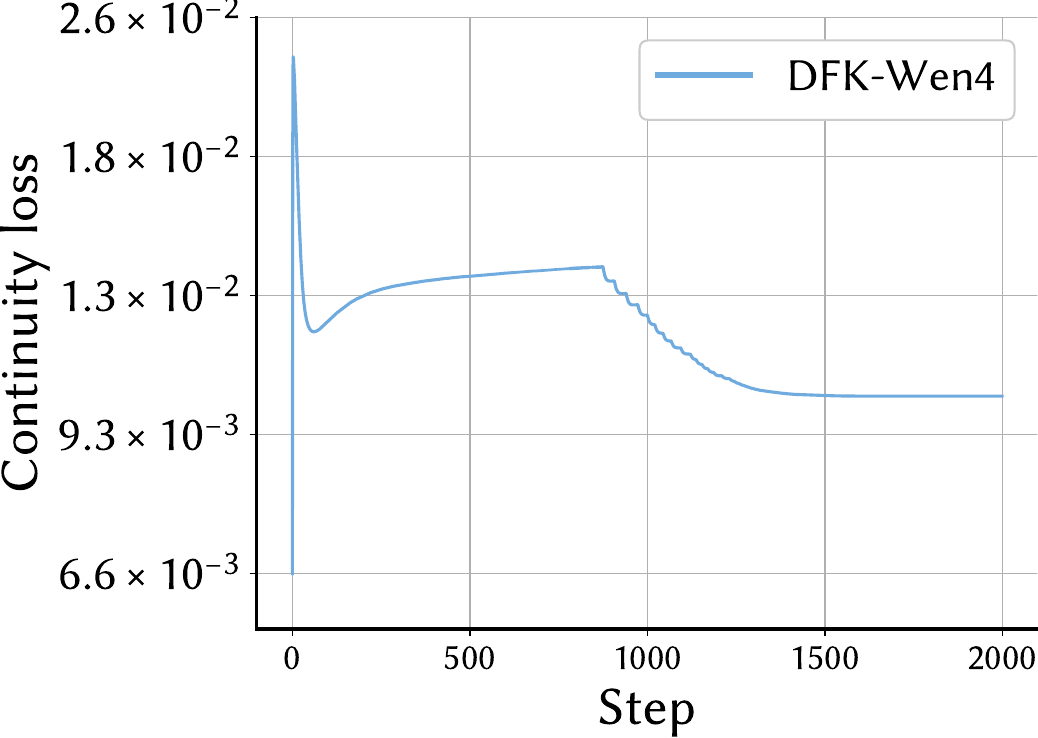}
    \hfill
    \includegraphics[width=0.245\textwidth]{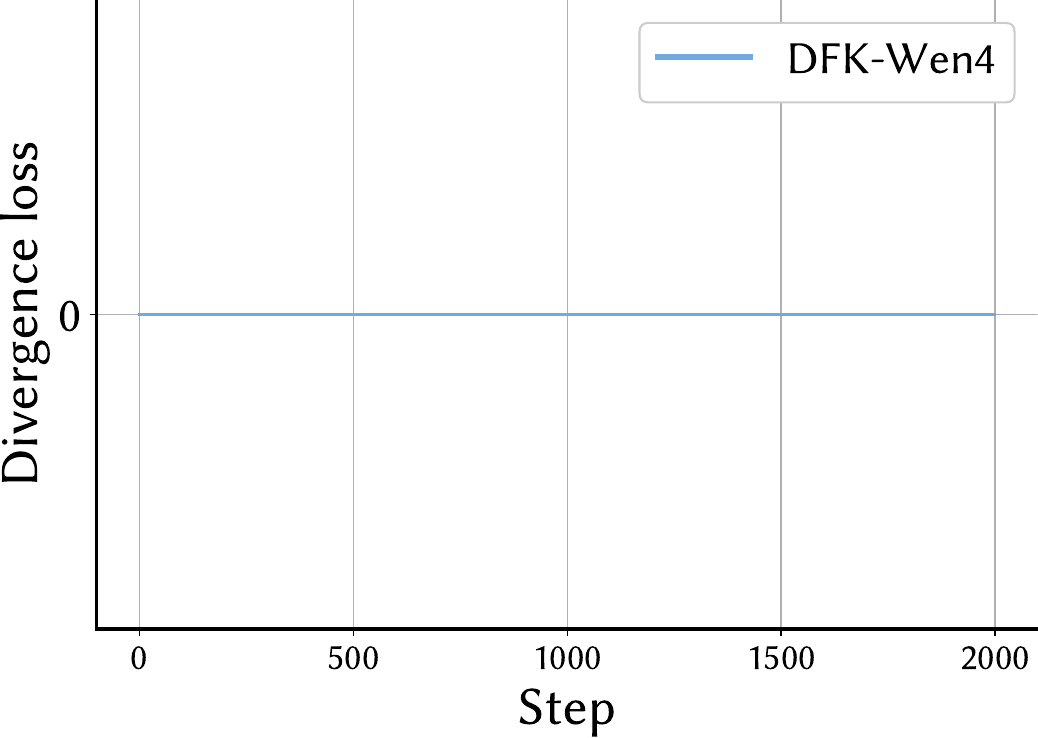}
    \hfill
    \includegraphics[width=0.245\textwidth]{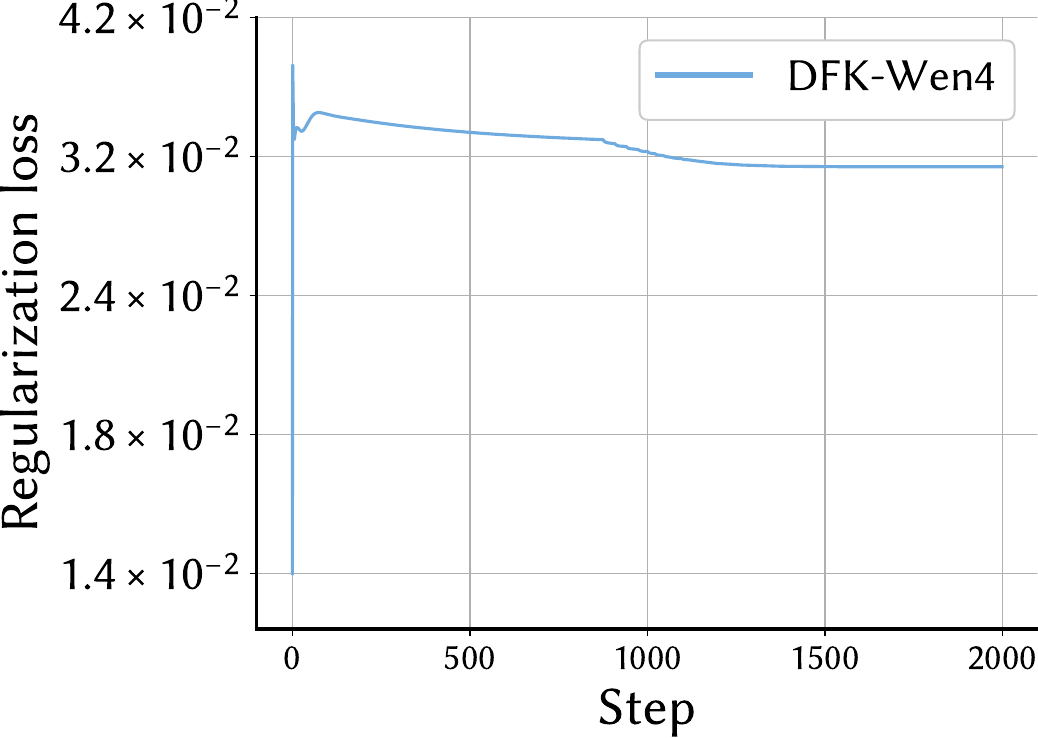}
    \\
    \vspace{-1em}
    \caption{Loss curves of inference experiments for \emph{scalar flow (3D)}.}
    \label{fig:scalar}
    \Description{loss curves}
\end{figure}

\paragraph{Scalar flow (3D)}
SIREN consists of four hidden layers with 256 neurons each, totaling \num{23701349} trainable parameters. 
DFK-Wen4 comprises \num{41959} points, resulting in \num{14979363} trainable parameters.

We set the batch size to $\num{4096}\times\num{120}$ and trained the NN model for \num{20} epochs, with an initial learning rate of \num{1e-3}.
We trained DFKs-Wen4 using batch gradient descent for \num{2000} epochs, with an initial learning rate of \num{1e-2}. When initializing the kernel radii, we set $\eta=6$. The loss curves are illustrated in Fig.~\ref{fig:scalar}.

% Bibliography
\bibliographystyle{ACM-Reference-Format}
\bibliography{refs_suppl}